\documentclass[11pt,onehalfspacing,notitlepage]{ay_article}

\usepackage{adjustbox}
\usepackage{amsbsy}
\usepackage{amsmath}
\usepackage{amssymb}
\usepackage{amsthm}
\usepackage{array}
\usepackage{bigstrut}
\usepackage{booktabs}
\usepackage{makecell}
\usepackage{blkarray}
\usepackage{bm}
\usepackage{cancel}
\usepackage{commath}
\usepackage{chngcntr}
\usepackage{dsfont}
\usepackage{econometrics}
\usepackage{enumerate}
\usepackage{fancyhdr}
\usepackage{float}
\usepackage{gensymb}
\usepackage{graphicx}
\usepackage{IEEEtrantools}
\usepackage{longtable}
\usepackage[utf8]{inputenc}
\usepackage{marginnote, csquotes, empheq}
\usepackage{xurl}

\usepackage{acro}
\usepackage{subcaption}

\usepackage{mathrsfs}
\usepackage{mathtools}
\usepackage{mdframed}
\usepackage{multirow}
\usepackage{nccmath}
\usepackage{natbib}
\usepackage{parskip}
\usepackage{pdfpages}
\usepackage{pgf}
\usepackage{pifont}
\usepackage{setspace}
\usepackage{stackengine}
\usepackage{tabularx}
\usepackage{textcomp}
\usepackage{verbatim}
\usepackage{geometry}
\usepackage{oubraces}
\usepackage{zref-xr}
\usepackage{placeins}
\usepackage{pifont}
\usepackage[colorlinks=true,linkcolor={blue!70!black},urlcolor={blue!70!black},anchorcolor={blue!70!black},citecolor={blue!70!black}]{hyperref}
\usepackage{rotating}
\usepackage{ragged2e}
\usepackage{etoolbox}
\usepackage{xr}
\usepackage{tcolorbox}
\usepackage{colortbl}
\usepackage{multirow}
\usepackage{pdfpages}
\usepackage{float}
\usepackage{longtable}
\usepackage{nicefrac} 
\usepackage[us, 12hr]{datetime}
\usepackage{tikz}
\usetikzlibrary{intersections, arrows.meta,calc}
\usepackage{pgfplots}
\pgfplotsset{width=10cm,compat=1.18}

\renewcommand{\P}{P}

\definecolor{uclablue}{HTML}{2774AE}
\definecolor{ucladarkblue}{RGB}{0, 85, 135}
\definecolor{uclagold}{RGB}{255, 184, 0}
\definecolor{pastelred}{HTML}{d62d0e}
\definecolor{pastelgreen}{RGB}{2,117,36}

\theoremstyle{definition}

\newtheorem{proposition}{Proposition}

\newtheorem{hypothesis}{Hypothesis}

\theoremstyle{remark}

\newcommand{\Mquote}[1]{\enquote{\textit{#1}}}

\makeatletter
\newcommand{\p}{%
  \@ifnextchar0{\p@digit}{%
  \@ifnextchar1{\p@digit}{%
  \@ifnextchar2{\p@digit}{%
  \@ifnextchar3{\p@digit}{%
  \@ifnextchar4{\p@digit}{%
  \@ifnextchar5{\p@digit}{%
  \@ifnextchar6{\p@digit}{%
  \@ifnextchar7{\p@digit}{%
  \@ifnextchar8{\p@digit}{%
  \@ifnextchar9{\p@digit}{\p@err}}}}}}}}}}}
\newcommand{\p@digit}[1]{%
  \@ifnextchar c{\p@digitc{#1}}{%
  \@ifnextchar C{\p@digitc{#1}}{\p@plain{#1}}}}
\newcommand{\p@digitc}[2]{\p@maybe@math{\mathcal{P}_{#1}^{c}}}
\newcommand{\p@plain}[1]{\p@maybe@math{\mathcal{P}_{#1}}}
\newcommand{\p@maybe@math}[1]{%
  \ifmmode
    #1%
  \else
    \ensuremath{#1}%
  \fi
}

\newcommand{\p@err}{\PackageError{mypkg}{Invalid \string\p usage}{Use \string\p<digit> or \string\p<digit>c}}
\makeatother

\DeclareMathAlphabet{\mathitbf}{OML}{cmm}{b}{it}
\makeatletter
\newcommand{\moverlay}{\mathpalette\mov@rlay}
\newcommand{\mov@rlay}[2]{\leavevmode\vtop{\baselineskip\z@skip{}
\lineskiplimit-\maxdimen\ialign{\hfil\ensuremath{#1##}\hfil\cr#2\cr\cr}}}
\makeatother

\DeclareAcronym{us}{short = US, long = United States}
\DeclareAcronym{frc}{short = FRC, long = Federal Radio Commission}
\DeclareAcronym{fcc}{short = FCC, long = Federal Communications Commission}
\DeclareAcronym{rhc}{short = RHC, long = Rural Health Care Program}
\DeclareAcronym{usf}{short = USF, long = Universal Service Fund}
\DeclareAcronym{hcp}{short = HCP, long = health care provider}
\DeclareAcronym{isp}{short = ISP, long = Internet Service Provider}
\DeclareAcronym{usac}{short = USAC, long = Universal Service Administrative Company}
\DeclareAcronym{mpls}{short = MPLS, long = Multiprotocol Label Switching}
\DeclareAcronym{mbps}{short = Mbps, long = megabit per second}
\DeclareAcronym{gam}{short = GAM, long = Generalized Additive Model}
\DeclareAcronym{frn}{short = FRN, long = Funding Request Number}
\DeclareAcronym{frnln}{short = FRNLN, long = FRN Line Number}
\DeclareAcronym{ols}{short = OLS, long = Ordinary Least Squares}
\DeclareAcronym{pols}{short = POLS, long = Pooled Ordinary Least Squares}
\DeclareAcronym{lhs}{short = LHS, long = left-hand side}
\DeclareAcronym{fe}{short = FE, long = fixed effects}
\DeclareAcronym{rfp}{short = RFP, long = Request For Proposal}
\DeclareAcronym{fwl}{short = FWL, long = Frisch–Waugh–Lovell}
\DeclareAcronym{hcf}{short = HCF, long = Healthcare Connect Fund}
\DeclareAcronym{twfe}{short = TWFE, long = Two-Way Fixed Effects}
\DeclareAcronym{dml}{short = DML, long = Double/\allowbreak Debiased Machine Learning}
\DeclareAcronym{plr}{short = PLR, long = partially linear regression}
\DeclareAcronym{did}{short = DiD, long = difference-in-difference}

\begin{document}
\setlength{\abovedisplayskip}{1pt}
\setlength{\belowdisplayskip}{1pt}
\setlength{\abovedisplayshortskip}{1pt}
\setlength{\belowdisplayshortskip}{1pt}
\setlength\parindent{24pt}
\AtBeginEnvironment{table}{\small\renewcommand{\arraystretch}{1}}
\AtBeginEnvironment{table*}{\small\renewcommand{\arraystretch}{1}}


\begin{titlepage}
\title{\large{\center{\textbf{Price Cap vs. Per-Unit Subsidies: Selection, Pricing, and Cross Subsidization}}}\\

\author{
Ram Sewak Dubey \textcircled{r} Maysam Rabbani \textcircled{r} Rodrigo Pinto%
\thanks{Department of Economics, Feliciano School of Business, Montclair State University, Montclair, NJ, USA (Dubey, Rabbani);
Department of Economics, University of South Florida, FL, USA (Pinto).
We are grateful for the interaction with Thomas E Cone, the seminar participants at Montclair State University, University of South Florida, New York State Economics Association, and Southern Economics Association.
Author names have been randomized. The official randomization link \href{https://shorturl.at/HdYKL}{may be found here}.
}\\
}
}

\date{\today}
\end{titlepage}
\maketitle
\begin{abstract}
\singlespacing

We evaluate subsidy mechanisms in the FCC's Rural Health Care program using administrative data covering the full population of participants. The original price-cap mechanism removes cost-containment incentives for health care providers. An ad valorem mechanism introduced in 2014 addresses this flaw by making providers bear 35\% of costs. However, allowing consortium applications creates a new distortion: cross-subsidization from eligible to ineligible members. We develop theoretical models predicting these effects and estimate treatment effects using an extension of the two-way fixed effects framework with continuous treatments. We find that the ad valorem mechanism substantially reduces program spending relative to the price cap, while the consortium option significantly inflates it.
Enforcement records and an inverted U-shaped relationship between cross-subsidization intensity and ineligible member share corroborate the findings.

\noindent \emph{Keywords:} \texttt{Policy Evaluation},\;\texttt{Subsidies}, \:\texttt{Telecommunication}.\\
\noindent \emph{JEL codes:} D04, D22, H25, I38, L96

\end{abstract}

\clearpage

\newpage

\section{Introduction}
\label{sec:intro}

The \acf{rhc} spends over \$700 million per year subsidizing internet connectivity for rural \acfp{hcp} across the United States, yet its original subsidy mechanism insulates recipients from price increases, eliminating any incentive to contain costs.
Under the program's original price-cap scheme, a rural hospital pays the urban benchmark price regardless of how much the \acf{isp} charges, and the government covers the difference.
Any price reduction benefits only the government, since the hospital's cost is unchanged.
In 2014, policymakers introduced an ad valorem alternative in which hospitals pay 35\% of the negotiated price, restoring the elasticity of demand.
But the reform also permitted \textit{consortium} applications that mix eligible and ineligible providers, inadvertently opening a new channel for subsidy extraction: cross-subsidization.

This paper provides the first comprehensive economic evaluation of the \ac{rhc} program.
We exploit the 2014 introduction of the \acf{hcf} as a policy change that, for the first time, offered every price-cap participant the option of staying on the price cap (\p1) or switching to the ad valorem mechanism (\p2) or its consortium variant (\p2c).
Because all \acp{hcp} were on \p1\ in 2013 and the new mechanisms became available simultaneously in 2014, the only margins of switching are \p1$\to$\p2\ and \p1$\to$\p2c, and identification rests on a single pre/post comparison free of staggered-adoption bias.\footnote{In the balanced 2013--2014 panel of incumbents (970 \acp{hcp}; 1{,}940 HCP-year observations), 419 \acp{hcp} allocate some share of their activity to \p2\ and 43 to \p2c; treatment intensity is the share of each HCP's subsidized bandwidth purchased under each new mechanism.}
We document two main findings.
First, \acp{hcp} that switched from \p1\ to \p2\ experienced an average reduction of 71\% in prices and a 51\% reduction in \ac{hcp} net cost (price minus subsidy), consistent with the prediction that restoring demand elasticity lowers the monopolist's equilibrium markup.
Second, \acp{hcp} that switched from \p1\ to \p2c\ saw their net costs increase by 470\% on average, despite moving to a mechanism that should, absent manipulation, deliver the same ad valorem savings.
The cost increase is difficult to rationalize without cross-subsidization from eligible to ineligible consortium members.

These findings speak to a general design vulnerability: whenever a subsidy program permits joint applications by eligible and ineligible participants, the group can strategically reallocate costs to maximize subsidy extraction.
The mechanism is simple.
By inflating the reported price for eligible members and deflating it for ineligible ones, a consortium extracts higher subsidies without changing the \ac{isp}'s total revenue.
The gains from manipulation are bounded only by the risk of detection, which in this program appears weak.
Consider GCI Communication Corp, which paid a \$42.1 million settlement for price misreporting (Appendix~\ref{sec:cross_sub}). The penalty amounted to just 3.4\% of the subsidies the firm collected over the relevant period, leaving the vast majority of the gains from misreporting intact.

We develop a theoretical model of monopoly pricing under the three mechanisms to derive testable implications.
Under the price cap, the provider's demand is pinned at the benchmark price, so the monopolist inflates the billed price until the marginal enforcement penalty equals the marginal revenue from overcharging.
Under the ad valorem subsidy, the \ac{hcp} bears a fixed percentage of the price, which anchors the monopolist's markup to the demand curve; we show that the ad valorem mechanism dominates the price cap in consumer price, quantity, provider expenditure, and government outlays.
For consortia, we model cross-subsidization as a multiplicative distortion applied to the eligible member's price, with detection risk increasing and convex in the distortion.
The optimal distortion is inverted U-shaped in the ratio of ineligible to eligible revenue within the consortium: when the ratio is small, there is insufficient ineligible revenue to shift; when it is large, the marginal penalty exceeds the marginal saving.

Our empirical strategy compares outcomes in 2013 (when only \p1\ existed) with outcomes in 2014 (when \p2\ and \p2c\ first became available), using the full population of \ac{rhc} subsidy requests recorded in \ac{fcc}-mandated administrative data.
We extend the conventional \ac{twfe} framework to accommodate two simultaneous continuous treatment intensities, namely the shares of \ac{hcp} activity under \p2\ and \p2c, rather than a single binary indicator. We verify the parallel trends assumption following \citet{aryal2025benefits} and \citet{manski2018right}.

We supplement these estimates with double/debiased machine learning (\acs{dml}, \citealp{chernozhukov2018double}) to allow for nonlinear covariate relationships and report robustness checks across \ac{hcp} types, service types, and sample restrictions.
Switching remains a voluntary decision, so our estimates, which target the average treatment effect on the treated, may reflect selection into mechanisms.
We address this concern through several complementary approaches: an institutional argument that the three programs differ only in their reimbursement formula, a logit analysis showing that a single mechanical cost signal dominates the switching decision, a sensitivity analysis that allows proportional departures from parallel trends \citep{aryal2025benefits}, and a theory-derived test of cross-subsidization that does not rely on parallel trends.

The evidence for cross-subsidization rests on four complementary sources.
First, the theoretical model predicts that consortia inflate eligible members' prices whenever detection is imperfect.
Second, the empirical estimates show that switching to \p2c\ increased \ac{hcp} net cost, a pattern that is difficult to rationalize since the ad valorem rate is identical under \p2\ and \p2c.
Third, a local linear regression of cost distortion on consortium composition reveals an inverted U-shape, exactly as the model predicts.
Fourth, \ac{fcc} enforcement records, including settlements for price misreporting in consortium applications, provide direct institutional corroboration of the mechanism.
No single piece of evidence is conclusive on its own; together, they form a coherent case suggesting that cross-subsidization is economically significant and consistent with the theoretical predictions.

The cross-subsidization channel we identify requires three conditions: subsidies tied to a reported price that participants can manipulate, a group structure that permits internal reallocation of costs across eligible and ineligible members, and enforcement too weak to deter distortion at the margin.
These conditions are not unique to telecommunications.
The \ac{usf}'s own E-Rate program satisfies all three.
E-Rate averages discount rates across consortium members ranging from 20\% to 90\% based on school demographics; the averaged discount is mechanically manipulable through consortium composition, costs are allocated across eligible and ineligible entities using self-reported usage measures, and enforcement has historically been weak enough to sustain vendor overcharges of 400--500\% before detection \citep{goolsbee2006impact, hazlett2019educational}.
The cross-subsidization channel our framework predicts has not been empirically tested in E-Rate, making it a direct out-of-sample implication.
More broadly, similar conditions may arise in health care, housing, and energy programs that permit joint applications by heterogeneous participants.

Our paper contributes to two literatures.
First, we advance the empirical literature on subsidy design and program effectiveness \citep{eriksson1998targeted, goolsbee2006impact, gruber2005subsidies, fairlie2013experimental, mendez2021impacts, caliendo2011start} by providing causal evidence from a large-scale public program on how price-cap versus ad valorem subsidies affect negotiated prices and participant costs.
Second, we identify a novel form of regulatory leakage, namely cross-subsidization from eligible to ineligible consortium members, that complements the literature on unintended consequences of government interventions \citep{vernon1979unintended, cicala2019regulating, fremeth2018spillovers}.
On the methodological side, we also contribute by extending \ac{twfe} to handle multiple simultaneous continuous treatments and combining it with \ac{dml}, complementing recent advances in causal estimation under complex treatment structures \citep{chernozhukov2018double, aryal2025benefits, manski2018right}.
Our cross-subsidization channel is the mirror image of the mechanism in \citet{rotemberg2019equilibrium}: whereas subsidies to eligible firms crowd out ineligible rivals in his framework, ineligible entities benefit \textit{from within} the subsidized group in ours.

The paper proceeds as follows.
Section~\ref{sec:background} describes the institutional environment.
Section~\ref{sec:model} presents the theoretical model.
Section~\ref{sec:treatment} develops the econometric framework.
Section~\ref{sec:apply} reports empirical results.
Section~\ref{sec:discussion} discusses policy implications.
Section~\ref{sec:conclusions} concludes.
\section{Institutional environment\protect\footnote{Appendix \protect\ref{Apx:institutional} provides a more detailed version of this section.}}
\label{sec:background}

The \ac{fcc} was established in 1934 with the mandate of making rapid, efficient, nationwide communications available to all people in the United States at reasonable charges (\href{https://uscode.house.gov/view.xhtml?req=granuleid:USC-prelim-title47-section151&num=0&edition=prelim}{47 U.S.C. \S~151}).
The Telecommunications Act of 1996 created an umbrella program, the \ac{usf}, comprising four sister programs that provide targeted subsidies to distinct beneficiary groups (Figure \ref{fig:schematic}): 
the High Cost Program reimburses telecommunication firms serving high-cost areas; 
E-Rate subsidizes internet access for schools and libraries; 
the Lifeline Program supports low-income consumers with internet and mobile service bills; 
and the \ac{rhc} subsidizes internet access and related equipment for eligible \acp{hcp}.

\begin{figure}[htbp]
    \centering
    \begin{minipage}{\linewidth}
    \centering
    \includegraphics[width=0.65\linewidth]{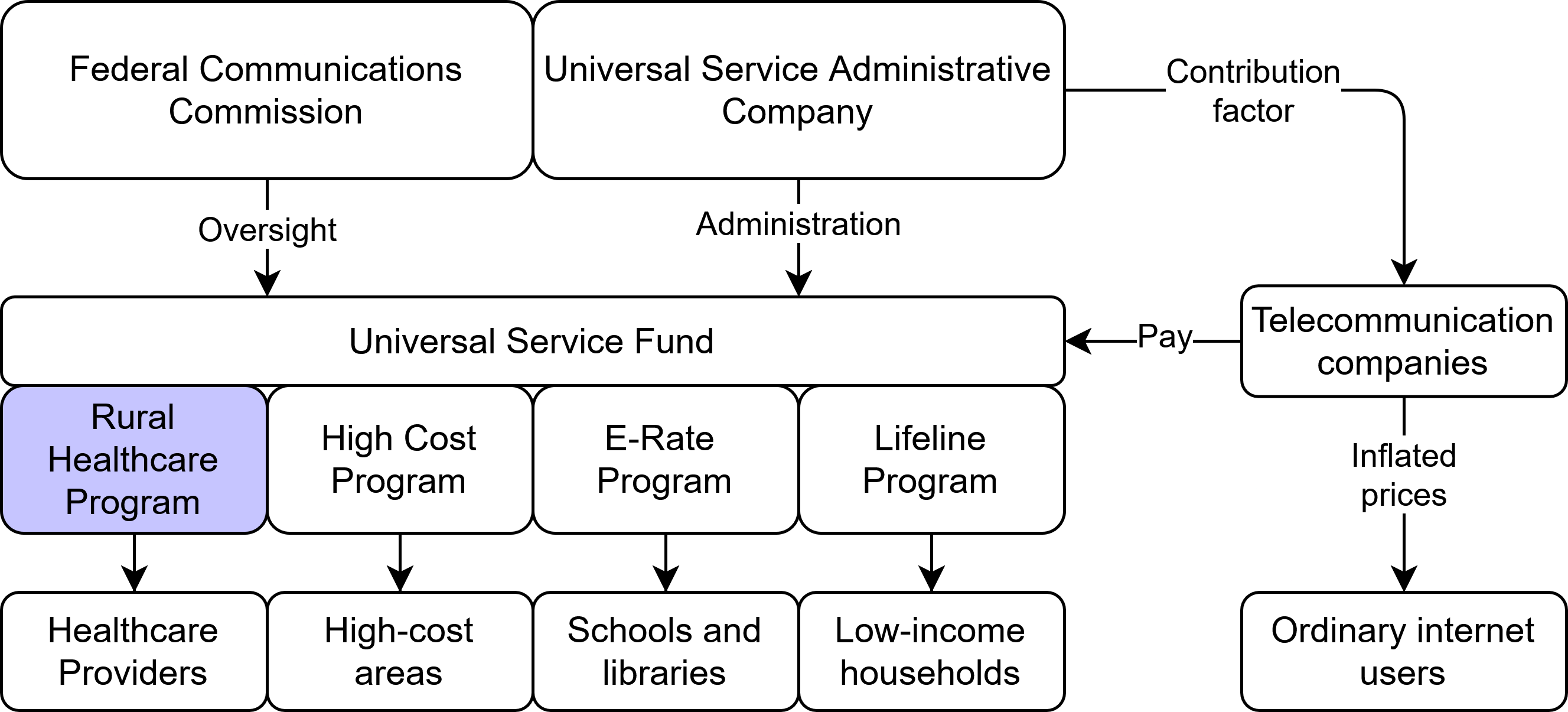}
    \caption{The regulatory structure and key players.}
    \label{fig:schematic}
    \vspace{2pt}
    \parbox{\linewidth}{\scriptsize\justifying\textit{Notes:} The Rural Health Care Program is one of the four targeted subsidy programs under the Universal Service Fund. The programs are administered by the Universal Service Administrative Company, under the oversight of the Federal Communications Commission, and funded by taxing telecommunication companies who pass it on to ordinary internet users.}
    \end{minipage}
\end{figure}

The \ac{fcc} appointed the \ac{usac}, a private non-profit organization, to administer the \ac{usf}. 
Instead of drawing on the federal budget, \ac{usac} imposes a levy on telecommunication companies, known as the ``contribution factor.'' 
Acting as the de facto setter and enforcer of this levy, \ac{usac} updates the contribution factor each quarter. 
Firms are allowed to itemize and shift the contribution factor to end users, effectively making it a tax on households and businesses. 
The contribution factor has risen persistently from 3.14\% in 1998 to 37.6\% in 2026 (Figure \ref{fig:be_contribution_factor}). 
Independent audits and policymakers have expressed concern that program costs are inflated by waste, fraud, and abuse, a concern corroborated by our analysis.
The four \ac{usf} programs together cost \$8.6 billion in 2024 \citep{usf_annual_report_2024}, which is borne by internet users. 
This financing structure may generate a positive feedback loop: rising program costs increase the contribution factor, raising internet prices, exacerbating affordability, expanding reliance on subsidies and further inflating costs \citep{hazlett2019educational}. 
A back-of-the-envelope calculation suggests that the contribution factor has raised the average internet prices in the \ac{us} by 6.7\%, forcing 1.2–3.1\% of users to depend on subsidies. 
This dynamic may undermine the program’s original objective.

There are several studies of the Lifeline program \citep{mendez2021impacts, ackerberg2014estimating, conkling2018crowd, wallsten2016learning, lyons2023assessing, ward2010effect}, E-Rate program \citep{hazlett2019educational, goolsbee2006impact, greenstein2020basic}, and High Cost program \citep{wallsten2011universal, savage2025impacts, berg2011incentives, berg2011universal, boik2017economics}. However, rigorous economic analyses of the \ac{rhc} program are missing. To our knowledge, there are only two correlation studies of the \ac{rhc} \citep{rabbani_bb1, rabbani_bb2} showing that internet plans in \p1\ cost approximately 155\% more than those in \p2/\p2c.

Our analysis focuses on the \ac{rhc} program, whose institutional timeline is shown in Figure \ref{fig:timeline}. 
Established in 1997, the program initially used a single price-cap subsidy mechanism known as the Telecommunications Program (\p1). 
If an \ac{hcp} paid price $p$ for internet service and could demonstrate that the same plan cost \( p_u \) in a nearby urban area, the \ac{hcp} would pay $p_u$ while the program covered $p - p_u$. 
This design is flawed economically because, at a fixed $p_u$, \p1\ insulates \acp{hcp} from marginal changes in $p$.
Demand becomes perfectly inelastic, eliminating the incentives for cost minimization \citep{rabbani_bb2}.

\begin{figure}[htbp]
    \centering
    \begin{minipage}{\linewidth}
    \centering
    \includegraphics[width=\linewidth]{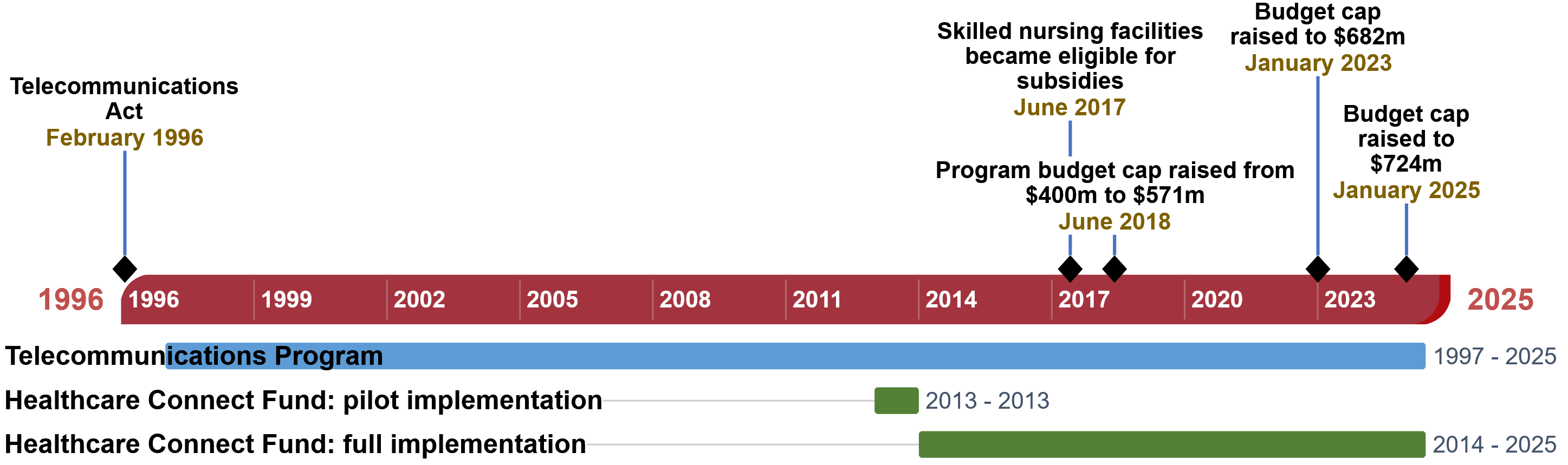}
	\caption{A timeline of program implementation.}
	\label{fig:timeline}
    \vspace{2pt}
    \parbox{\linewidth}{\scriptsize\justifying\textit{Notes:} The figure illustrates major events during the lifetime of the Rural Healthcare Program. It also shows the timeline of implementation of its two mechanisms, namely, the Telecommunications Program and the Healthcare Connect Fund.}
    \end{minipage}
\end{figure}

Piloted in 2013 and fully implemented in 2014, the \acf{hcf} began to operate alongside \p1. 
Rather than benchmarking against urban prices, \ac{hcf} uses an ad valorem subsidy that covers $0.65p$, requiring the \ac{hcp} to pay $0.35p$ as copayment.
This proportional cost-sharing introduces efficiency incentives: 
\acp{hcp} gain \$0.35 for every \$1.00 saved and lose \$0.35 for every \$1.00 wasted. 
Under the \ac{hcf}, participants may seek subsidies individually (\p2) or form a consortium of \acp{hcp} that requests subsidies on behalf of its members (\p2c).
The consortium option extends benefits to urban \acp{hcp} if the consortium is majority-rural, i.e., if more than half of its members are located in rural areas.%
\footnote{For example, if a large urban general hospital and two small rural clinics form a consortium, it is considered majority rural because two out of three members are rural, and all three members would be subsidized.}

An \ac{hcp} is deemed ``eligible'' for subsidies if 
(1) it is a non-profit or public entity, 
(2) it is rural or part of a majority-rural consortium, and 
(3) it belongs to a qualifying entity type (listed in Appendix \ref{sec:rhc_program}). 
Once eligibility is established, the \ac{hcp} may choose \p1, \p2, or \p2c. 
The only exception is that eligible urban \acp{hcp} can only seek subsidies via \p2c. 
\acp{hcp} can switch programs upon subsidy renewal. 
\acp{hcp} who do not meet all the three criteria are deemed ineligible.
Ineligible \acp{hcp} do not receive subsidies and do not count in the determination of the majority-rural status.
But they may join a consortium to attempt to secure lower internet prices as part of a collective bargaining.

When a consortium submits a subsidy request, it enters a competitive bidding process, in which \acp{isp} place bids for providing internet services to all eligible and ineligible members of a consortium.
The auction winner is not the lowest bid. 
Instead, the \ac{hcp} effectively chooses the bid that it deems most desirable based on subjective and arbitrary criteria (see Table \ref{tab:rfp_weights}).
Therefore, \acp{isp} compete, not to offer the most competitive bid, but to make the bid most appealing to the \ac{hcp}.
We hypothesize that this enables a cross-subsidization scheme, where the winning bid offers inflated prices to eligible members and discounted prices to ineligible members, resulting in elevated subsidy outlays.
For every \$1 that is cross-subsidized, the consortium would collect an additional \$0.65 in subsidies.
This could be done in a revenue-neutral manner to the \ac{isp}, or the \ac{isp} may collect some of the proceeds.
Appendix \ref{sec:cross_sub} documents the institutional conditions enabling this behavior and presents supportive anecdotal evidence.

\begin{figure}[htbp]
    \centering
    \begin{minipage}{\linewidth}
    \centering
    \includegraphics[width=\linewidth]{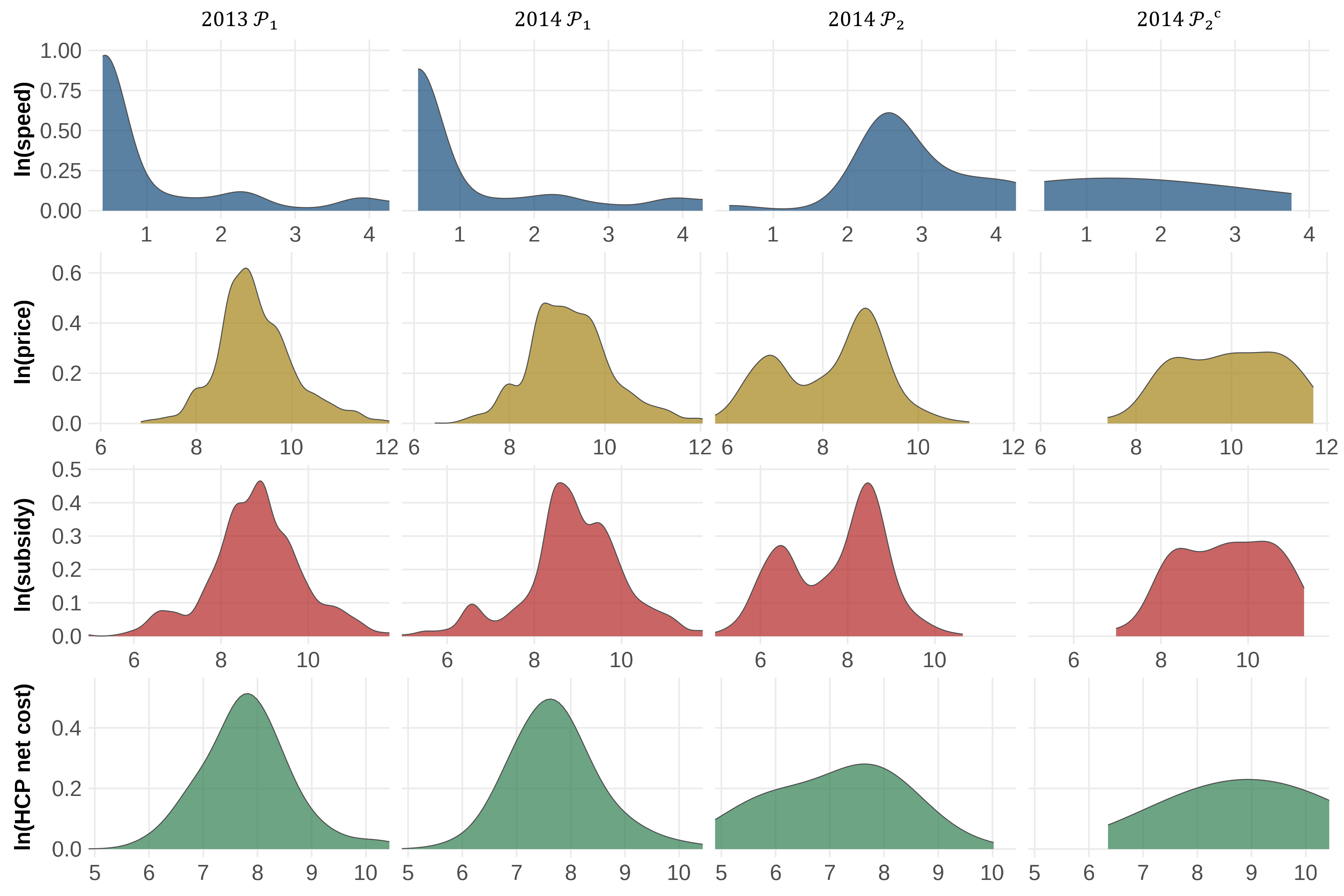}
	\caption{Kernel density of request-level variables by program, 2013--2014.}
	\label{fig:bd_density_4x4}
    \vspace{2pt}
    \parbox{\linewidth}{\scriptsize\justifying\textit{Notes:} Each panel shows a kernel density estimate. Rows correspond to the natural logarithm of speed, price, subsidy, and \ac{hcp} net cost. Columns correspond to programs: 2013 \p1, 2014 \p1, 2014 \p2, and 2014 \p2c. Within each row, all panels share a common horizontal and vertical axis to facilitate visual comparison across programs.}
    \end{minipage}
\end{figure}

Figure \ref{fig:bd_density_4x4} uses publicly available program data (described in Section \ref{sec:data}) to document initial patterns.
When \p2\ and \p2c\ became available in 2014, more than half of the \acp{hcp} selected \p2, while the adoption of \p2c\ was minimal.
The density plots reveal that \p1\ plans span a wide range of speeds and prices, whereas \p2\ plans concentrate at higher speeds.
\p2c\ plans, though few in 2014, already exhibit higher prices and subsidies relative to \p2.
Widespread participation in \p2c\ did not occur until 2017, when regulatory changes expanded eligibility and attracted new \acp{hcp} who predominantly chose \p2c.%
\footnote{The first event was in June 2017, when skilled nursing facilities became eligible for subsidies. This opened the program to many new \acp{hcp}. The second event was a budget adjustment. When the program was implemented in 1997, \ac{usf}'s annual budget was capped at \$400 million. Two decades had passed and this cap was never reached. The budget cap was exceeded in 2016. This created a financial strain on the program. In June 2018, for the first time in the program's lifetime, the \ac{fcc} began indexing the \ac{rhc}'s annual cap to inflation. The budget kept rising steadily, reaching \$724 million in 2025. \href{https://www.fcc.gov/general/rural-health-care-program}{FCC: Rural Health Care Program}, \href{https://docs.fcc.gov/public/attachments/DOC-358434A1.pdf}{FCC: Report and Order – WC Docket No. 17-310}}

The first switching opportunity between programs was in 2014. 
\acp{hcp} selecting \p2\ that year opted for higher speeds relative to those who stayed in \p1. 
Nonetheless, the minority that stayed in \p1\ and accounted for a small share of total purchased speed generated more than half of the \ac{rhc}'s total cost burden in 2014. 
This pattern could be economically explained. 
In 2013, all \acp{hcp} were under \p1. 
Given that the \ac{hcp} net cost ratio for \p2/\p2c\ was 35\%, \acp{hcp} facing a ratio above 35\% had an incentive to switch to \p2/\p2c\ in 2014, while those with a ratio below 35\% would remain under \p1\ to preserve their lower \ac{hcp} net cost.
A low ratio in \p1\ implies that the rural rate is substantially higher than the urban benchmark. 
For instance, a ratio of 5\% implies that the rural rate is 20 times the urban rate. 
Thus, a low \ac{hcp} net cost ratio may signify an overpriced or cost-inefficient plan \citep{rabbani_bb2}. 
Therefore, the introduction of \p2\ may have caused a selective migration in which cost-conscious \acp{hcp} switched to \p2/\p2c\ while the rest remained in \p1.
This dynamic explains why the small residual group in \p1\ generated a disproportionately large share of the total expenditure.

\begin{figure}[htbp]
    \centering
    \begin{minipage}{\linewidth}
    \centering
    \includegraphics[width=\linewidth]{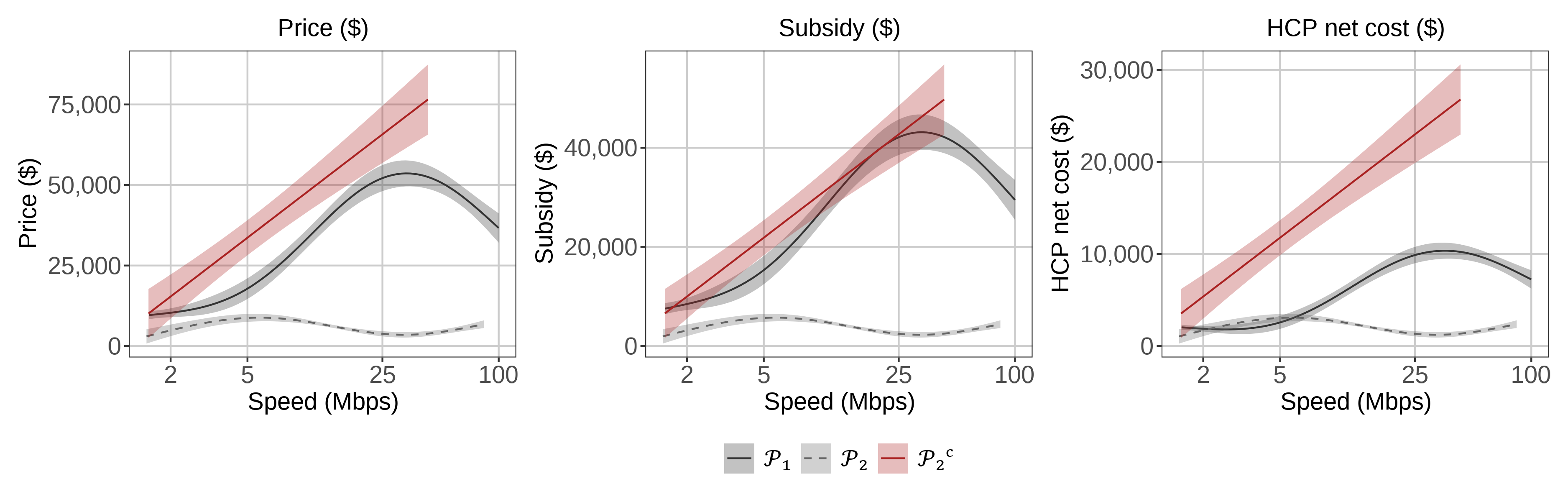}
	\caption{A program comparison of price measures in 2014.}
	\label{fig:bh_GAM_combined2014}
    \vspace{2pt}
    \parbox{\linewidth}{\scriptsize\justifying\textit{Notes:} The figure illustrates Generalized Additive Models fitted at the request level using the 2014 data. Programs \protect\p1, \protect\p2, and \protect\p2c\ are respectively shown in dark gray, light gray, and red. Each dependent variable (price, subsidy, or HCP net cost) is fitted as a function of ln(speed). The y-axis shows dollar values on a linear scale.}
    \end{minipage}
\end{figure}

Figure \ref{fig:bh_GAM_combined2014} compares internet price, subsidy, and \ac{hcp} net cost across different speed levels using a \ac{gam}. 
\ac{gam} is a semi-parametric curve-fitting method that accommodates nonlinear relationships between variables at distinct speed levels.
Each fitted curve is shown with a 95\% confidence interval. The left, middle, and right panels correspond to price, subsidy, and \ac{hcp} net cost, respectively.
Consistent with the cost-effectiveness hypothesis, internet plans that are subsidized via \p2\ are cheaper than those in \p1. Meanwhile, plans in \p2c\ are generally more expensive than those in \p2, which is consistent with the cross-subsidization hypothesis. 
We formalize both hypotheses in Section \ref{sec:model}.

\begin{figure}[htbp]
    \centering
    \begin{minipage}{\linewidth}
    \centering
    \includegraphics[width=.5\linewidth]{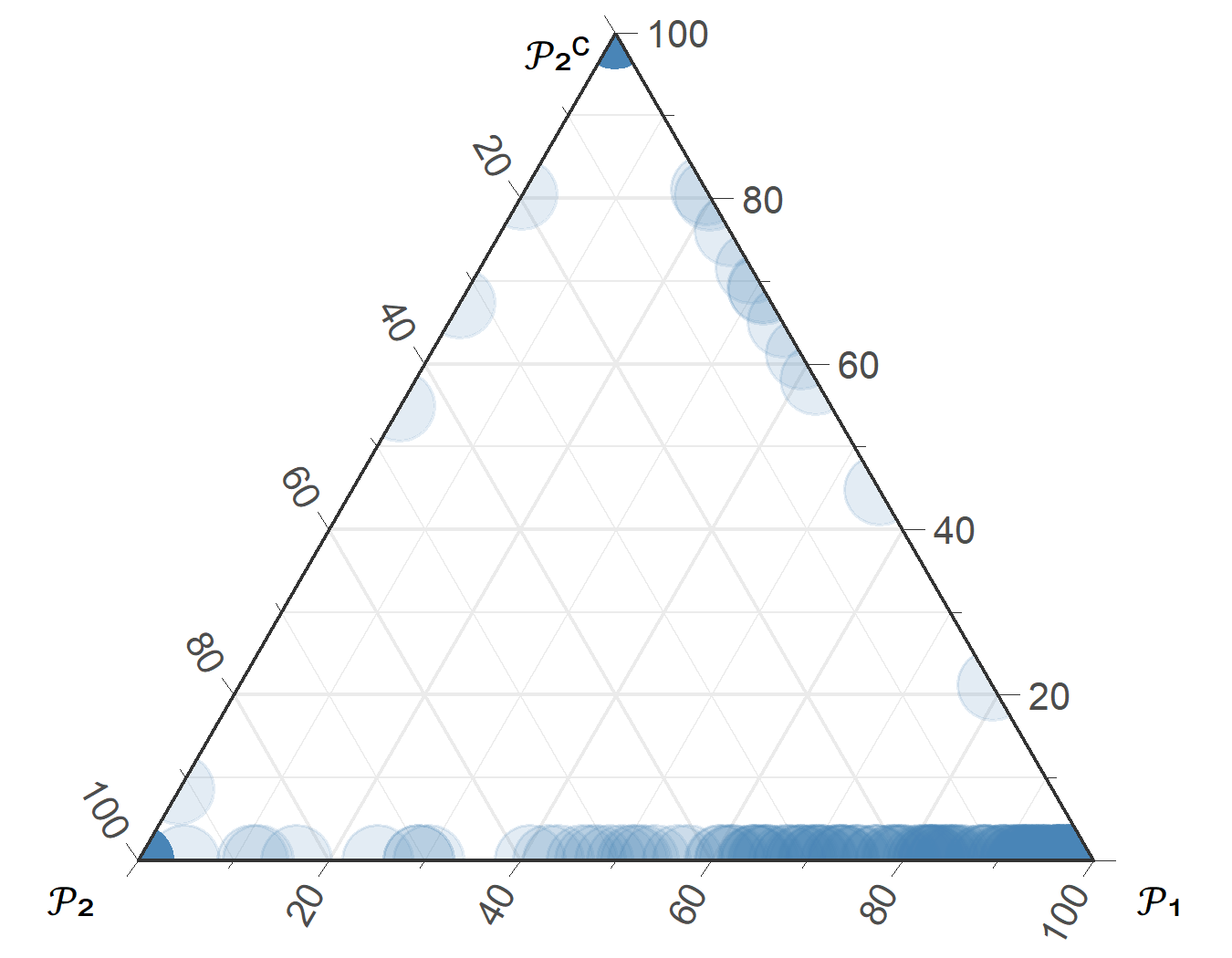}
	\caption{Kernel density distribution of the programs.}
	\label{fig:bf_density2014}
    \vspace{2pt}
    \parbox{\linewidth}{\scriptsize\justifying\textit{Notes:} Each point represents one \ac{hcp}. The axes show the fraction of the \ac{hcp}'s subsidized requests reimbursed under each program: \p1\ (price-cap), \p2\ (individual ad valorem), and \p2c\ (consortium ad valorem). Coordinates sum to one. Proximity to a vertex indicates a higher concentration under that program. Data are from the \ac{fcc}'s \ac{rhc} program for funding year 2014, the first year in which \p2\ and \p2c\ were available. The concentration along the \p1-\p2\ edge reflects the initial transition from the price-cap to the ad valorem mechanism, with few providers yet participating in consortia.}
    \end{minipage}
\end{figure}

An \ac{hcp} may have multiple active lines of subsidy at the same time, especially if it has multiple buildings or locations.
Many \acp{hcp} receive support across tens or hundreds of lines simultaneously, often using a combination of \p1, \p2, and \p2c.
Figure \ref{fig:bf_density2014} illustrates this composition in 2014, the first year in which \p2\ and \p2c\ were available.
The dense band along the \p1--\p2\ edge confirms that most \acp{hcp} divided their requests between \p1\ and \p2, while very few participated in \p2c. 

\section{A monopoly model of subsidy design}
\label{sec:model}

This section develops a monopoly model that isolates three mechanisms linking subsidy design to equilibrium prices, quantities, and government outlays. The \ac{hcp} acts as
the consumer of a monopolistic \ac{isp}. Under a price-cap regime (\p1), consumer demand becomes locally inelastic once the cap binds, severing the link between billed
prices and quantities and permitting the monopolist to inflate bills against the government budget. Under an ad valorem subsidy (\p2), the consumer retains proportional
exposure to the posted price and therefore remains sensitive to price changes, constraining the monopolist's markup. When a consortium of eligible and ineligible members operates under the ad valorem scheme (\p2c), it can inflate prices for eligible members and use the additional subsidies to reduce costs for ineligible ones. This unintended consequence erodes the very price discipline that the ad valorem mechanism was designed to restore.

\subsection{Environment}
\label{sec:model_env}

A monopolist \ac{isp} supplies a homogeneous good at constant marginal cost $c>0$.
Demand $D:\mathbb{R}_+\to\mathbb{R}_+$ satisfies $D^{\prime}(p)<0$ wherever $D(p)>0$ and determines the quantity $Q = D(p_c)$ for the \ac{hcp} net cost~$p_c$. 
Revenue $p\ D(p)$ is strictly concave.
Without subsidies, the monopoly price~$p^{no}$ is the unique solution to the first-order condition:
\begin{equation}
D(p^{no})+(p^{no}-c)\,D^{\prime}(p^{no})=0.
\label{eq:foc_no}
\end{equation}
The monopoly price can be equivalently expressed via the Lerner index:
\[
\frac{p^{no}-c}{p^{no}}=\frac{1}{\varepsilon_D(p^{no})},
\]
where $\varepsilon_D(p)\equiv -\frac{pD^{\prime}(p)}{D(p)}$ is the price elasticity of demand.
Under perfect competition, $p_c=c$ and $Q=D(c)$.
\footnote{Although we present results for a general demand function, Appendix~\ref{sec:appx_linear} provides closed-form expressions under the linear specification $D(p_c)=a-bp_c$ ($a, b>0$, $a>bc$).}

\subsection{Price-cap subsidy (\texorpdfstring{$\mathcal{P}_1$}{p1})}
\label{sec:model_cap}

Under \p1, the \ac{isp} posts a billed price $p\ge 0$.
The consumer pays at most a regulated cap $\bar{p}>0$, so the \ac{hcp} net cost is $p_c^{cap}(p)=\min\{p,\bar{p}\}$.
When $p>\bar{p}$, the government reimburses the gap per unit:
\begin{equation}
G^{cap}(p)=(p-\bar{p})_+\,D\!\bigl(p_c^{cap}(p)\bigr),
\label{eq:gov_cap}
\end{equation}
where $(x)_+\equiv\max\{x,0\}.$
Excess billing is subject to an expected penalty $\alpha\,\Phi(\delta)$ on the price deviation $\delta\equiv(p-\bar{p})_+$, where $\alpha\in(0,1]$ is the probability of being
audited and $\Phi$ is a penalty function satisfying $\Phi(0)=0$, $\Phi'(0)=0$, $\Phi'(\delta)>0$ for $\delta>0$, and $\Phi''>0$. The condition $\Phi'(0)=0$ ensures that small deviations are not deterred, while the strict convexity of $\Phi$ ensures that marginal penalties escalate with the degree of overcharging. For tractability, we adopt the quadratic specification $\Phi(\delta)=\tfrac{\gamma}{2}\delta^2$
$(\gamma>0).$

The economic mechanism is a \emph{discontinuity in the residual demand elasticity}.
When the cap does not bind ($p\le\bar{p}$), the monopolist faces the standard downward-sloping demand curve and
markups are disciplined by the demand response.
Once $p$ crosses $\bar{p}$, the consumer price is pinned at $\bar{p}$, demand freezes at $D(\bar{p})$, and the firm's residual demand becomes perfectly inelastic in the billed price.
The only force restraining further price inflation is the enforcement technology $(\alpha,\Phi)$.

The cap binds whenever $\bar{p}<p^{no}$.\footnote{More generally,
because $\Phi'(0)=0$, exceeding the cap is always locally profitable.
Even when $\bar{p}\ge p^{no}$, the cap binds globally if enforcement
is sufficiently weak: the insulation rent $D(\bar{p})^2/(2\alpha\gamma)$
can exceed the standard monopoly profit loss from operating at
$\bar{p}$ instead of $p^{no}$.
In the \ac{rhc} context, the empirically relevant case is
$\bar{p}<p^{no}$, where the cap binds for all enforcement levels.}
When it does, the \ac{hcp} pays $p_c^{cap}=\bar{p}$ and demand is pinned at $Q^{cap}=D(\bar{p})$.
The \ac{hcp} expenditure is $E^{cap}=\bar{p}\,D(\bar{p})$, and the firm's problem reduces to:

\begin{equation}
\max_{p\ge\bar{p}}\;(p-c)\,D(\bar{p})-\alpha \Phi(p-\bar{p}).
\label{eq:cap_problem}
\end{equation}

\begin{proposition}[Billed-price under \p1]
\label{prop:cap}
Suppose the cap binds. Then the equilibrium billed price is
\begin{equation}
p^{cap}=\bar{p}+(\Phi')^{-1}\!\!\left(\frac{D(\bar{p})}{\alpha}\right),
\label{eq:pcap}
\end{equation}
which is increasing in $D(\bar{p})$ and decreasing in $\alpha$.
Under the quadratic penalty,
\begin{equation}
p^{cap}=\bar{p}+\frac{D(\bar{p})}{\alpha\gamma},
\qquad
G^{cap}=\frac{D(\bar{p})^2}{\alpha\gamma}.
\label{eq:pcap_quad}
\end{equation}
\end{proposition}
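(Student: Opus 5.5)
We solve the reduced problem \eqref{eq:cap_problem} directly, reparametrizing by the deviation $\delta = p-\bar p\ge 0$. Write $\pi(\delta)=(\bar p+\delta-c)\,D(\bar p)-\alpha\Phi(\delta)$. Since $D(\bar p)$ does not depend on $\delta$ once the cap binds, $\pi$ is the sum of a linear function of $\delta$ and $-\alpha\Phi$. We have $\pi''(\delta)=-\alpha\Phi''(\delta)<0$, so $\pi$ is strictly concave. Hence any interior solution of the first-order condition is the unique global maximizer on $[0,\infty)$.

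The first-order condition is $\pi'(\delta)=D(\bar p)-\alpha\Phi'(\delta)=0$. At the boundary, $\pi'(0)=D(\bar p)>0$ because $\Phi'(0)=0$ and $D(\bar p)>0$ (the cap binds at a positive quantity). So the corner $\delta=0$ is never optimal. The function $\Phi'$ is continuous and strictly increasing from $0$, since $\Phi''>0$, so it has an inverse on its range.

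\textbf{Main obstacle: existence of an interior solution.} We need $D(\bar p)/\alpha$ to lie in the range of $\Phi'$. In general this requires $\Phi'(\delta)\to\infty$, or at least that $\Phi'$ eventually exceeds $D(\bar p)/\alpha$; otherwise $\pi$ would be increasing everywhere and there would be no maximizer. We would state this as an implicit regularity condition, which is needed for the proposition to be meaningful. It holds automatically under the quadratic specification, where $\Phi'(\delta)=\gamma\delta$ is unbounded. Given existence, $\delta^*=(\Phi')^{-1}(D(\bar p)/\alpha)$, which gives \eqref{eq:pcap}.

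\textbf{Comparative statics.} Because $(\Phi')^{-1}$ is strictly increasing, $p^{cap}$ is increasing in the argument $D(\bar p)/\alpha$. That argument is increasing in $D(\bar p)$ and decreasing in $\alpha$, which gives both claimed monotonicities. Equivalently, one can apply the implicit function theorem to $D(\bar p)-\alpha\Phi'(\delta)=0$:
\[
\frac{\partial\delta^*}{\partial D(\bar p)}=\frac{1}{\alpha\Phi''}>0,
\qquad
\frac{\partial\delta^*}{\partial\alpha}=-\frac{\Phi'}{\alpha\Phi''}<0.
\]

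\textbf{Quadratic case.} With $\Phi'(\delta)=\gamma\delta$, the first-order condition gives $\delta^*=D(\bar p)/(\alpha\gamma)$, which yields the first formula in \eqref{eq:pcap_quad}. Substituting into \eqref{eq:gov_cap}, where $p_c^{cap}=\bar p$ so the quantity is $D(\bar p)$, gives
\[
G^{cap}=\delta^*\,D(\bar p)=\frac{D(\bar p)^2}{\alpha\gamma}.
\]
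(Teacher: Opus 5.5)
Your proof is correct and follows essentially the same route as the paper's: the first-order condition $D(\bar p)=\alpha\Phi'(p-\bar p)$, inversion of the strictly increasing $\Phi'$, strict concavity from $\Phi''>0$ as the second-order condition, and the quadratic specialization with $G^{cap}=(p^{cap}-\bar p)\,D(\bar p)$. Your explicit requirement that $D(\bar p)/\alpha$ lie in the range of $\Phi'$ is a genuine improvement. The paper infers existence only from $\Phi'$ being strictly increasing from zero, which does not guarantee a solution unless $\Phi'$ eventually exceeds $D(\bar p)/\alpha$, as it does in the quadratic case.
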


\noindent
\textit{Proof.}
See Appendix~\ref{sec:appx_cap}.
\hfill$\square$

\medskip\noindent
The result exposes the fundamental weakness of price-cap subsidies: the marginal government payment is one-for-one in the billed price above $\bar{p}$, unconstrained by any demand response.
Government outlays $G^{cap}$ grow as enforcement weakens ($\alpha\gamma\to 0$), because the monopolist's inflation incentive is checked only by penalties instead of lost sales.

  \subsection{Ad valorem subsidy (\texorpdfstring{$\mathcal{P}_2$}{p2})}
  \label{sec:model_av}

  Under \p2, the government pays a constant share $\tau\in(0,1)$ of the
  transaction price.
  The consumer pays $(1-\tau)p$, so quantity is
  $D\!\bigl((1-\tau)p\bigr)$ and government outlays are
  $G^{adv}(p)=\tau\,p\,D\!\bigl((1-\tau)p\bigr)$.
  The firm's profit is
  $\pi^{adv}(p;\tau)=(p-c)\,D\!\bigl((1-\tau)p\bigr),$
  which yields the first-order condition
  \begin{equation}
  D\!\bigl((1-\tau)p\bigr)+(p-c)(1-\tau)\,D'\!\bigl((1-\tau)p\bigr)=0.
  \label{eq:foc_av}
  \end{equation}
  Let $p^{adv}(\tau)$ denote the solution to~\eqref{eq:foc_av}.
  The \ac{hcp} net cost, quantity, expenditure, and
  government outlay are
  $p_c^{adv}(\tau)\equiv(1-\tau)\,p^{adv}(\tau)$,
  $Q^{adv}(\tau)\equiv D\!\bigl(p_c^{adv}(\tau)\bigr)$,
  $E^{adv}(\tau)\equiv p_c^{adv}(\tau)\,Q^{adv}(\tau)$, and
  $G^{adv}(\tau)\equiv\tau\,p^{adv}(\tau)\,Q^{adv}(\tau)$,
  respectively.
  Unlike the price-cap regime, the firm faces a downward-sloping
  residual demand in its posted price because consumers bear a share
  $(1-\tau)$ of each price increase.
  \p2\ restores the elasticity of demand that was suppressed under \p1.

Substituting $p_c\equiv(1-\tau)p$ into~\eqref{eq:foc_av} yields:

\begin{equation}
  D(p_c)+\bigl(p_c-c(1-\tau)\bigr)\,D'(p_c)=0.
  \label{eq:foc_av_q}
  \end{equation}

This is the standard monopoly first-order condition~\eqref{eq:foc_no}
with \emph{effective marginal cost} $c(1-\tau)$ replacing~$c$.
The ad valorem subsidy acts \emph{as if} it reduces the monopolist's
cost by the factor $(1-\tau)$ and lets the consumer face the resulting
monopoly price directly:

\begin{equation}
p_c^{adv}(\tau)=p^{no}\!\bigl(c(1-\tau)\bigr),
\label{eq:pcav_emc}
\end{equation}

where $p^{no}(\tilde{c})$ denotes the monopoly price as a function of marginal cost $\tilde{c}$.

The effective-marginal-cost representation~\eqref{eq:pcav_emc}
characterizes the condition that a \ac{hcp} would voluntarily switch
from \p1\ to~\p2.

Since $dp^{no}(\tilde{c})/d\tilde{c}>0$, the consumer price
  $p_c^{adv}(\tau)=p^{no}\!\bigl(c(1-\tau)\bigr)$ is strictly
  decreasing in~$\tau$, with $p_c^{adv}(0)=p^{no}(c)$ and
  $p_c^{adv}(\tau)<p^{no}(c)$ for all $\tau>0$.
  When the price cap binds, $\bar{p}<p^{no}(c)$, the
  continuity of $p_c^{adv}$ guarantees the existence of a unique
  \emph{critical subsidy rate} $\tau^*\in(0,1)$ satisfying:\footnote{Provided that the market power is not so extreme that $p^{no}(0)\ge\bar{p}$,
  in which case no $\tau\in(0,1)$ induces switching and the \ac{hcp}
  rationally remains on~\p1. Under linear demand, $\tau^*=2(p^{no}-\bar{p})/c$.
  The condition $\tau^*<1$ requires $\bar{p}>p^{no}-c/2$; when this
  fails, even full subsidy coverage cannot offset the monopolist's
  markup. See Appendix~\ref{sec:appx_linear}.}

  \begin{equation}
  p_c^{adv}(\tau^*)=\bar{p},
  \label{eq:cond_barp}
  \end{equation}

A cost-minimizing \ac{hcp} switches from \p1\ to \p2\ whenever
$\tau\ge\tau^*$, i.e., whenever the ad valorem consumer price falls
below the cap.

  \begin{proposition}[Ad valorem dominance]
  \label{prop:cap_to_av}
  Suppose the price cap binds and $\tau\ge\tau^*$. Then:
  \begin{enumerate}
  \item[\emph{(i)}] $p_c^{adv}(\tau)\le\bar{p}=p_c^{cap}$, with strict
  inequality when $\tau>\tau^*$;
  \item[\emph{(ii)}] $Q^{adv}(\tau)\ge Q^{cap}$;
  \item[\emph{(iii)}] if $\varepsilon_D(p)\le 1$ on
  $[p_c^{adv}(\tau),\,\bar{p}]$, then $E^{adv}(\tau)\le E^{cap}$;
  \item[\emph{(iv)}] under the quadratic penalty,
  $G^{adv}(\tau)<G^{cap}$ for sufficiently small $\alpha\gamma$.
  \end{enumerate}
  \end{proposition}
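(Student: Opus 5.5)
The plan is to reduce every part to the effective-marginal-cost representation \eqref{eq:pcav_emc}, $p_c^{adv}(\tau)=p^{no}\bigl(c(1-\tau)\bigr)$, together with the definition \eqref{eq:cond_barp} of $\tau^*$. For part (i), I use two facts. First, $\tilde c\mapsto p^{no}(\tilde c)$ is strictly increasing; this follows from implicit differentiation of \eqref{eq:foc_no}, where strict concavity of revenue makes the second-order term negative. Second, $\tilde c=c(1-\tau)$ is strictly decreasing in $\tau$. Hence $\tau\mapsto p_c^{adv}(\tau)$ is strictly decreasing. Combined with $p_c^{adv}(\tau^*)=\bar p$, this gives $p_c^{adv}(\tau)\le\bar p$ for $\tau\ge\tau^*$, with strict inequality for $\tau>\tau^*$. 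Finally, $p_c^{cap}=\bar p$ because the cap binds, by the discussion preceding Proposition~\ref{prop:cap}.

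Part (ii) then follows immediately from monotonicity of demand. Since $D'<0$ wherever $D>0$, $D$ is nonincreasing, so $p_c^{adv}(\tau)\le\bar p$ implies $Q^{adv}(\tau)=D(p_c^{adv}(\tau))\ge D(\bar p)=Q^{cap}$.

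For part (iii), I write expenditure as the revenue function $R(p)\equiv pD(p)$ evaluated at the consumer price, so that $E^{adv}(\tau)=R(p_c^{adv}(\tau))$ and $E^{cap}=R(\bar p)$. Differentiating gives $R'(p)=D(p)+pD'(p)=D(p)\bigl(1-\varepsilon_D(p)\bigr)$. This is $\ge 0$ on $[p_c^{adv}(\tau),\bar p]$ under the hypothesis $\varepsilon_D\le 1$, which also covers any points where $D=0$. So $R$ is nondecreasing on that interval. Integrating $R'$ from $p_c^{adv}(\tau)$ to $\bar p$, or applying the mean value theorem, yields $E^{adv}(\tau)\le E^{cap}$.

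Part (iv) is the only step with any subtlety; the key is that the two sides depend on enforcement differently. $G^{adv}(\tau)=\tau\,p^{adv}(\tau)\,Q^{adv}(\tau)$ is determined entirely by $(D,c,\tau)$ through \eqref{eq:foc_av}, so it is a finite constant that does not depend on $(\alpha,\gamma)$. By contrast, Proposition~\ref{prop:cap} gives $G^{cap}=D(\bar p)^2/(\alpha\gamma)$. It therefore suffices to show $D(\bar p)>0$, after which
\[
\alpha\gamma<\frac{D(\bar p)^2}{G^{adv}(\tau)}
\]
(read as no restriction if $G^{adv}(\tau)=0$) delivers the strict inequality.

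The main obstacle is justifying $D(\bar p)>0$. I would argue it from the unsubsidized problem. At $p^{no}$ the FOC \eqref{eq:foc_no} must hold with $D(p^{no})>0$, since otherwise monopoly profit is zero and any price slightly above $c$ with positive demand, available because $D(c)>0$ in the environment, does strictly better. Then $\bar p<p^{no}$ and $D$ nonincreasing give $D(\bar p)\ge D(p^{no})>0$. I also note that existence and uniqueness of $p^{adv}(\tau)$ is inherited from uniqueness of $p^{no}(\tilde c)$ via \eqref{eq:foc_av_q}, so $G^{adv}(\tau)$ is well defined.
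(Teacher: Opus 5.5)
Your proposal is correct and follows the paper's proof part by part: effective-marginal-cost monotonicity for (i)--(ii), the sign of $E'(p)=D(p)\bigl(1-\varepsilon_D(p)\bigr)$ for (iii), and for (iv) the comparison of a bounded, enforcement-independent $G^{adv}(\tau)$ with $G^{cap}=D(\bar p)^2/(\alpha\gamma)$. Your explicit check that $D(\bar p)>0$ usefully supplies a step the paper leaves implicit when it asserts $G^{cap}\to\infty$. One small repair to your implicit-differentiation step: strict concavity of revenue signs $2D'+(p-\tilde c)D''$ directly only when $D''\ge 0$, and when $D''<0$ you should instead use $p^{no}(\tilde c)>\tilde c$ at the optimum.
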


  \noindent
  \textit{Proof.}
  See Appendix~\ref{sec:appx_av}.
  \hfill$\square$

  \medskip\noindent
  Proposition~\ref{prop:cap_to_av} establishes that the ad valorem
  mechanism dominates the price cap on every margin relevant to
  policymakers.
  The \ac{hcp} pays less per unit under \p2\ (Part~(i)); a lower
  consumer price raises quantity consumed (Part~(ii)); \ac{hcp}
  expenditure falls when demand is price-inelastic (Part~(iii)); and
  government outlays decline because the consumer absorbs a share
  $(1-\tau)$ of the price, anchoring the monopolist's markup to the
  demand curve rather than to the enforcement technology (Part~(iv)).

\subsection{Consortium under ad valorem (\texorpdfstring{$\mathcal{P}_2^c$}{p2c})}
\label{sec:model_consortium}

Mechanism \p2c\ allows \p2-eligible \acp{hcp} to form a consortium with ineligible \acp{hcp}.
Although subsidies apply only to eligible members, the consortium can
reallocate billed charges across members while holding fixed both
total \ac{isp} revenue and quantities consumed.
Shifting charges toward eligible members increases government
reimbursements, financing lower prices for ineligible members. In
effect, eligible \acp{hcp} become a conduit through which
subsidy benefits are extended to ineligible participants.
We develop a stylized model of this channel.

The monopolist \ac{isp} serves two \acp{hcp}:
an \emph{eligible} member $E$ receiving an ad valorem subsidy at rate $\tau \in(0,1)$, and an \emph{ineligible} member $I$ with no subsidy.
The \ac{isp} determines per-unit prices $(p_E,p_I)$, yielding consumer prices $p_{c,E}=(1-\tau)\,p_E$, $p_{c,I}=p_I$ and quantities
$Q_E=D_E(p_{c,E})$ and $Q_I=D_I(p_{c,I}).$  The firm chooses prices to maximize profit

\begin{equation*}
\pi(p_E,p_I) = (p_E-c)\,D_E\!\bigl( (1-\tau) p_E\bigr)+ (p_I-c)\,D_I(p_I).
\end{equation*}

The equilibrium prices $(p_E^{adv},p_I^{adv})$ satisfy the first-order conditions in each market separately:
\begin{align*}
&D_E\!\bigl((1-\tau)p_E^{adv}\bigr)+(p_{E}^{adv}-c)(1-\tau)\,D_E'\!\bigl((1-\tau)p_{E}^{adv}\bigr)\:=\:0,\\
&D_I(p_{I}^{adv})      +\bigl(p_{I}^{adv}-c        \bigr)\,D_I'(        p_{I}^{adv})\:=\:0.
\end{align*}

The consortium sets an internal price allocation $(\tilde{p}_E,\tilde{p}_I)$ that aims to reduce total \ac{hcp} net cost while holding quantity levels and total \acs{isp} revenue fixed:

\begin{equation*}
p_E^{adv}Q_E^{adv}+p_I^{adv}Q_I^{adv}
=\tilde{p}_E\,Q_E^{adv}+\tilde{p}_I\,Q_I^{adv},
\qquad \tilde{p}_E,\tilde{p}_I\ge 0.
\end{equation*}

The reduction in total \ac{hcp} net cost satisfies the accounting identity

\begin{equation}
\Delta C
=\tau\,(\tilde{p}_E-p_E^{adv})\,Q_E^{adv}
=\tau\,(p_I^{adv}-\tilde{p}_I)\,Q_I^{adv}.
\label{eq:DeltaC}
\end{equation}

Each dollar shifted to the eligible member's bill reduces the combined \ac{hcp} net cost by $\tau$ dollars,
financed by additional government reimbursements.
This is a zero-sum transfer between the consortium and the government budget.
The government pays $\tilde{G} = \tau \tilde{p}_E\,Q_E^{adv}$ and the increase in the government outlay is
$\Delta G = \tau (\tilde{p}_E -p_E^{adv})\,Q_E^{adv} = \Delta C$.

Three reduced-form parameters govern the consortium's problem:
\begin{equation}
  \kappa\equiv\frac{\tilde{p}_E}{p_E^{adv}}\ge 1,
  \qquad
  B\equiv\tau\,p_E^{adv}Q_E^{adv},
  \qquad
  R\equiv\frac{p_I^{adv}Q_I^{adv}}{p_E^{adv}Q_E^{adv}}\ge 0.
  \label{eq:consortium_notation}
  \end{equation}
  The \emph{eligible price-distortion ratio}~$\kappa$ measures how much
  the consortium inflates the eligible member's internal price above the
  equilibrium level: $\kappa=1$ is no distortion; $\kappa>1$ is
  cross-subsidization.
  The \emph{eligible subsidy base}~$B$ is the government reimbursement
  at the undistorted equilibrium, i.e., the baseline subsidy before any
  reallocation.
  The \emph{ineligible-to-eligible revenue ratio}~$R$ measures
  consortium composition: $R=0$ means no ineligible members; large~$R$
  means ineligible revenue dominates.
  Nonnegativity of $\tilde{p}_I$ requires $\kappa\in[1,\,1+R]$: the
  consortium cannot shift more revenue to the eligible bill than the
  ineligible member generates.
  Under this notation, the cost reduction becomes
  $\Delta C=B(\kappa-1)$.

Cross-subsidization is constrained by an expected penalty
  $\alpha\,\Phi(\Delta C;\,R)$, where $\alpha\in(0,1]$ is the
  probability of being audited and
  $\Phi:\mathbb{R}_+\times\mathbb{R}_+\to\mathbb{R}_+$ is a penalty
  function satisfying five conditions:
  (i)~$\Phi(0;\,R)=0$: no distortion incurs no penalty;
  (ii)~$\Phi'(0;\,R)=0$: the marginal penalty at zero distortion is
  zero, ensuring that small deviations are not deterred;
  (iii)~$\Phi'(\delta;\,R)>0$ for $\delta>0$: the penalty is strictly
  increasing in the distortion;
  (iv)~$\Phi''(\delta;\,R)>0$: the penalty is strictly convex, so that
  marginal penalties escalate with the degree of cross-subsidization;
  and
  (v)~$\partial^2\Phi/\partial\delta\,\partial R>0$: the penalty is
  supermodular in $(\delta,R)$, so that consortia with larger
  ineligible-to-eligible revenue ratios face a steeper marginal
  deterrent.
  For tractability, we adopt the quadratic specification
  $\Phi(\delta;\,R)=\tfrac{\gamma R}{2}\,\delta^2$ ($\gamma>0$), which
  satisfies all five conditions and yields closed-form solutions.

  The consortium's price allocation solves
  \begin{align}
  \label{eq:CostMini}
  \min_{\tilde{p}_E,\;\tilde{p}_I\ge 0}\quad
  &\underbrace{(1-\tau)\,\tilde{p}_E\,Q_E^{adv}
    \;+\;\tilde{p}_I\,Q_I^{adv}}_{\text{HCP net cost}}
  \;+\;
  \underbrace{\alpha\,\Phi(\Delta C;\,R)}_{\text{expected penalty}}\\
  \text{s.t.}\quad
  &\tilde{p}_E\,Q_E^{adv}+\tilde{p}_I\,Q_I^{adv}
    \;=\;p_E^{adv}\,Q_E^{adv}+p_I^{adv}\,Q_I^{adv},\nonumber
  \end{align}
  which can be rewritten as a maximization of cost reduction minus the
  expected penalty over the price-distortion ratio~$\kappa$:
  \begin{equation}
  \max_{\kappa\in[1,\,1+R]}\;\Psi(\kappa),
  \qquad
  \Psi(\kappa)\equiv
  B\,\kappa-\alpha\,\Phi\!\bigl(B(\kappa-1);\,R\bigr).
  \label{eq:CostMini2}
  \end{equation}

  \begin{proposition}[Optimal cross-subsidization]
  \label{prop:hump}
  Suppose $\Phi$ satisfies conditions~\emph{(i)--(v)}. Then:
  \begin{enumerate}
  \item[\emph{(i)}]
  Cross-subsidization is always locally profitable:
  $\Psi'(1)=B>0$, which implies $\tilde{p}_E>p_E^{adv}$.

  \item[\emph{(ii)}]
  $\Psi$ is strictly concave on $[1,\,1+R]$.
  Hence the optimal distortion $\kappa^*$ is unique.

  \item[\emph{(iii)}]
  If $\alpha\,\Phi'\!(BR;\,R)>1$, the optimum is interior
  ($1<\kappa^*<1+R$) and satisfies
  \begin{equation}
  \alpha\,\Phi'\!\bigl(B(\kappa^*-1);\,R\bigr)=1.
  \label{eq:foc_kappa}
  \end{equation}
  If $\alpha\,\Phi'\!(BR;\,R)\le 1$, the constraint binds and
  $\kappa^*=1+R$.

  \item[\emph{(iv)}]
  Under the quadratic penalty
  $\Phi(\delta;\,R)=\tfrac{\gamma R}{2}\,\delta^2$,
  \begin{equation}
  \kappa^*(R)
  =\min\!\left\{1+R,\;1+\frac{1}{\alpha\gamma B\,R}\right\}.
  \label{eq:xstar}
  \end{equation}
  The optimal distortion $\kappa^*$ is hump-shaped in~$R$,
  attaining its maximum $\kappa^*=1+R^*$ at
  $R^*=1\big/\!\sqrt{\alpha\gamma B}$,
  and converging to $1$ (no cross-subsidization) as
  $R\to 0$ or $R\to\infty$.
  \end{enumerate}
  \end{proposition}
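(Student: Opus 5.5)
The plan is to work directly with the reduced problem \eqref{eq:CostMini2} and differentiate $\Psi$ in $\kappa$. By the chain rule,
\[
\Psi'(\kappa)=B-\alpha B\,\Phi'\!\bigl(B(\kappa-1);R\bigr)=B\bigl[1-\alpha\,\Phi'\!\bigl(B(\kappa-1);R\bigr)\bigr].
\]
Here $\Phi'$ denotes the derivative in the first argument.

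\emph{Part (i).} Evaluating at $\kappa=1$ and using condition (ii), $\Phi'(0;R)=0$, gives $\Psi'(1)=B$. This is strictly positive because $\tau>0$ and $p_E^{adv}Q_E^{adv}>0$. Hence any maximizer over $[1,1+R]$ lies strictly above $1$ whenever $R>0$, which means $\tilde p_E=\kappa^*p_E^{adv}>p_E^{adv}$. For $R=0$ the interval collapses to $\{1\}$, and I will note this degenerate case separately.

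\emph{Part (ii).} Differentiating once more gives
\[
\Psi''(\kappa)=-\alpha B^2\,\Phi''\!\bigl(B(\kappa-1);R\bigr)<0,
\]
by condition (iv). So $\Psi$ is strictly concave on the compact interval $[1,1+R]$. A continuous strictly concave function on a compact interval has a unique maximizer, which gives uniqueness of $\kappa^*$.

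\emph{Part (iii).} Strict concavity makes $\Psi'$ strictly decreasing. Part (i) already gives $\Psi'(1)>0$, so everything turns on the sign of $\Psi'$ at the right endpoint. That sign is the sign of $1-\alpha\Phi'(BR;R)$.
\begin{itemize}
\item If $\alpha\Phi'(BR;R)>1$, then $\Psi'(1+R)<0$. By the intermediate value theorem $\Psi'$ has a unique zero in $(1,1+R)$, and that zero is the maximizer, giving \eqref{eq:foc_kappa}.
\item Otherwise $\Psi'\ge 0$ on the whole interval, so $\Psi$ is nondecreasing there and $\kappa^*=1+R$.
\end{itemize}

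\emph{Part (iv).} With the quadratic penalty, $\Phi'(\delta;R)=\gamma R\delta$. The first-order condition $\alpha\gamma R\,B(\kappa-1)=1$ then yields the interior candidate $1+1/(\alpha\gamma BR)$. The interior case $\alpha\gamma BR\cdot BR>1$ is exactly the case in which this candidate is below $1+R$. Combining this with part (iii) gives the min formula \eqref{eq:xstar}.

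For the shape in $R$, I compare the two branches:
\begin{itemize}
\item $1+R$ is strictly increasing in $R$;
\item $1+1/(\alpha\gamma BR)$ is strictly decreasing in $R$;
\item they cross where $R^2=1/(\alpha\gamma B)$, i.e.\ at $R^*$.
\end{itemize}
So $\kappa^*$ equals $1+R$ on $(0,R^*]$ and the decreasing branch on $[R^*,\infty)$. It is therefore hump-shaped with peak value $1+R^*$. The limits are immediate: $\kappa^*\le 1+R\to1$ as $R\to0$, and $\kappa^*\le 1+1/(\alpha\gamma BR)\to1$ as $R\to\infty$.

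\emph{Main obstacle.} The only real difficulty is bookkeeping. First, the statement's "$\Phi'$" must be read as the partial derivative in $\delta$, with the chain-rule factor $B$ tracked correctly. Second, $B$ is treated as fixed while $R$ varies, which is legitimate because $B$ depends only on the eligible member's equilibrium. Third, the degenerate endpoint $R=0$ has to be handled. Condition (v) is not needed for (i)–(iii); in (iv) it is what produces the declining branch, through the factor $R$ in the marginal penalty.
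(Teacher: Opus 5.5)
Your proposal is correct and follows essentially the same route as the paper's proof: differentiate $\Psi$, use $\Phi'(0;R)=0$ for (i), $\Phi''>0$ for (ii), the sign of $\Psi'(1+R)$ together with the intermediate value theorem for (iii), and the explicit first-order condition plus a comparison of the two branches for (iv). One slip to fix is in (iv): the interior condition is $\alpha\Phi'(BR;R)=\alpha\gamma R\cdot BR=\alpha\gamma BR^2>1$, not $\alpha\gamma BR\cdot BR>1$, which carries an extra factor of $B$; your crossing point $R^2=1/(\alpha\gamma B)$ already uses the correct version.
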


  \noindent
  \textit{Proof.}
  See Appendix~\ref{sec:appx_hump}.
  \hfill$\square$

  \medskip\noindent
Part~(i) establishes that some cross-subsidization always occurs:
  because $\Phi'(0;\,R)=0$, the marginal penalty at zero distortion is
  zero while the marginal saving is $B>0$.
  Part~(ii) guarantees uniqueness via the convexity of the penalty.
  Part~(iii) characterizes the two regimes.
  When $R$ is small, the feasibility constraint $\kappa\le 1+R$ binds
  before enforcement does: there is insufficient ineligible revenue to
  shift.
  When $R$ is large, enforcement binds first: the marginal penalty at
  full reallocation exceeds the marginal saving, and the consortium
  self-restrains.
  Part~(iv) shows that under the quadratic penalty the transition
  between regimes occurs at $R^*=1/\sqrt{\alpha\gamma B}$, generating
  a hump shape that reflects a Laffer-curve logic applied to subsidy
  extraction: just as the Laffer curve traces the tension between the tax base and the behavioral response, $\kappa^*(R)$ traces the tension between cross-subsidization opportunity and enforcement exposure (see Figure~\ref{fig:hump}).

\begin{figure}[htbp]
\caption{Optimal cross-subsidization distortion as a function of consortium composition.}
\label{fig:hump}
\centering
\begin{tikzpicture}[font=\footnotesize]
\begin{axis}[
    width=0.78\textwidth,
    height=0.45\textwidth,
    xlabel={Revenue ratio, $R$},
    ylabel={Optimal distortion, $\kappa^*(R)$},
    xmin=0, xmax=3,
    ymin=0.5, ymax=3.5,
    xtick={1},
    xticklabels={$R^*$},
    ytick={1, 2},
    yticklabels={$1$, $1{+}R^*$},
    axis lines=left,
    axis line style={->},
    clip mode=individual,
    every axis x label/.style={at={(ticklabel* cs:1)}, anchor=west},
    every axis y label/.style={at={(ticklabel* cs:1)}, anchor=south},
]

\addplot[dashed, black!50, thin, domain=1:2.5, samples=2] {1+x};

\addplot[dashed, black!50, thin, domain=0.4:1, samples=80, smooth] {1+1/x};

\addplot[very thick, black, domain=0.01:1, samples=100] {1+x};
\addplot[very thick, black, domain=1:3, samples=200, smooth] {1+1/x};

\draw[dotted, thin, black!40] (axis cs:0,1) -- (axis cs:3,1);
\node[font=\scriptsize, black!50, anchor=east] at (axis cs:0,1) {$\kappa{=}1$};

\draw[dotted, thin, black!50] (axis cs:1,0.5) -- (axis cs:1,2);
\draw[dotted, thin, black!50] (axis cs:0,2) -- (axis cs:1,2);
\addplot[only marks, mark=*, mark size=2.5pt, black] coordinates {(1,2)};

\node[font=\scriptsize, black!60, anchor=south west] at (axis cs:1.6,2.65)
  {feasibility: $1{+}R$};
\node[font=\scriptsize, black!60, anchor=west] at (axis cs:1.8,1.6)
  {enforcement: $1{+}\frac{1}{\alpha\gamma BR}$};

\end{axis}
\end{tikzpicture}

\vspace{2pt}
\parbox{\linewidth}{\scriptsize\justifying\textit{Notes:} The solid curve traces $\kappa^*(R)=\min\bigl\{1+R,\;1+1/(\alpha\gamma BR)\bigr\}$ (Proposition~\ref{prop:hump}, Part~(iv)). For small~$R$, the feasibility constraint $\kappa\le 1+R$ binds: there is insufficient ineligible revenue to shift. For large~$R$, enforcement binds: the marginal penalty exceeds the marginal saving. The peak at $R^*=1/\sqrt{\alpha\gamma B}$ maximizes the distortion---a Laffer-curve logic applied to subsidy extraction. Dashed lines show each branch beyond its binding region. Illustration uses $\alpha\gamma B=1$.}
\end{figure}
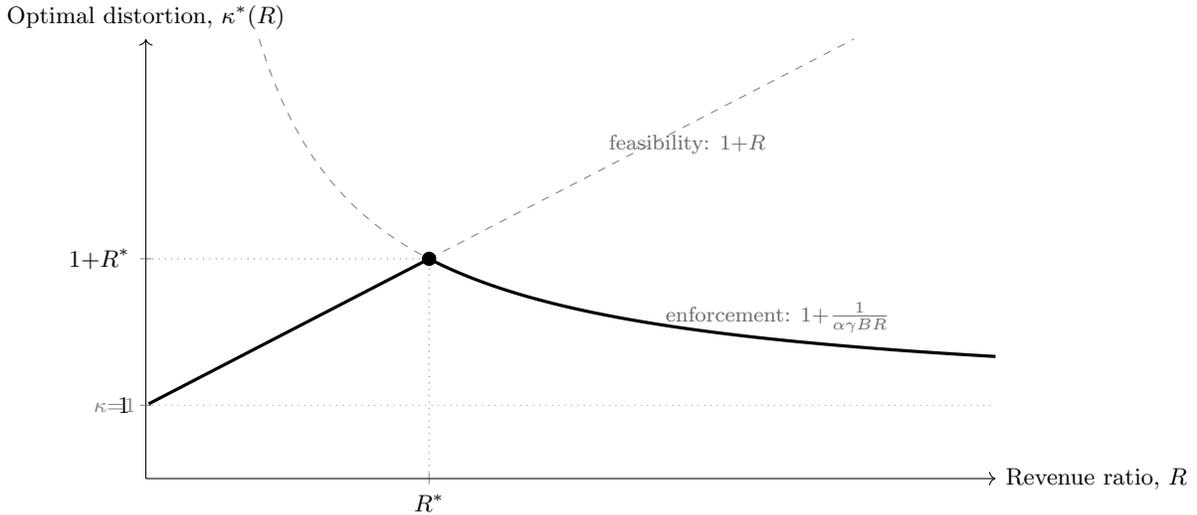

\subsection{Empirical implications}
\label{sec:model_implications}

The three mechanisms analyzed above generate testable predictions.
Our identification exploits the 2014 introduction of the \ac{hcf}, which for the first time offered \p2\ and \p2c\ as alternatives to the incumbent price cap~\p1. Because all \acp{hcp} were on \p1\ prior to 2014, the two observable transitions are \p1\ to \p2\ and \p1\ to~\p2c.

Proposition~\ref{prop:cap_to_av} predicts that switching from the price cap \p1\ to the ad valorem \p2\ lowers the consumer price (Part~(i)), raises quantity (Part~(ii)), reduces \ac{hcp} expenditure when demand is inelastic (Part~(iii)), and lowers government outlays when enforcement is weak (Part~(iv)).
The key economic force is that the ad valorem subsidy restores the link between the billed price and demand, eliminating the insulation rent that arises under the price cap.

An \ac{hcp} switching from \p1\ to \p2c\ is subject to both mechanisms simultaneously: the efficiency gain from ad valorem pricing and the distortion from consortium reallocation (Proposition~\ref{prop:hump}).
Because the consortium inflates the eligible member's billed price above the ad valorem equilibrium ($\tilde{p}_E>p_E^{adv}$), the cost reduction from switching to \p2c\ should be smaller than from switching to \p2.
When the distortion dominates the efficiency gain, the transition can increase costs.
Comparing the two treatment effects therefore provides a direct measure of the cross-subsidization distortion.

Finally, Proposition~\ref{prop:hump}, Part~(iv), predicts that the price distortion $\kappa^*(R)$ first rises and then falls with the ineligible-to-eligible revenue ratio~$R$.
This is a \emph{within-\p2c} prediction: among consortia, the relationship between price inflation and consortium composition should exhibit an inverted~U.

These implications motivate two hypotheses that structure the empirical analysis.

\begin{hypothesis}[Cost effectiveness]
\label{hyp:cost_effectiveness}
The ad valorem mechanism \p2\ reduces prices and government outlays relative to the price cap~\p1.
\end{hypothesis}

\begin{hypothesis}[Cross-subsidization]
\label{hyp:cross_subsidization}
Cross-subsidization under \p2c\ erodes the ad valorem efficiency gain.
The cost reduction from switching to \p2c\ is smaller than from switching to \p2, and when the distortion is sufficiently large it can offset the intended benefits entirely.
\end{hypothesis}

\noindent
Together, these hypotheses capture the paper's central tension: the 2014 reform successfully corrected a design flaw in subsidy delivery, but inadvertently opened a channel for strategic manipulation that partially undoes the correction.
Section~\ref{sec:treatment} develops the econometric framework and Section~\ref{sec:apply} tests both hypotheses.

\section{Econometric framework}
\label{sec:treatment}

\subsection{Setup and identification}
\label{sec:did_setup}

Our empirical setting is a two-period \ac{did} model with two treatment margins.
Let $Y_{it}(d)$ denote the potential outcome for \ac{hcp} $i$ at time $t \in \{0,1\}$ under treatment status $d \in \{1,\, 2,\, 2^c\}$, where $d=1$ corresponds to \p1, $d=2$ to \p2, and $d=2^c$ to \p2c.
Period $t=0$ is 2013 and $t=1$ is 2014.
Because \p2\ and \p2c\ were introduced in 2014, all \acp{hcp} are on \p1\ at $t=0$.
In 2014, each \ac{hcp} either remains on \p1\ or switches to \p2\ or \p2c.

This yields three treatment groups.
Let $g_{d,d'}$ denote the group of \acp{hcp} in status $d$ at $t=0$ and status $d'$ at $t=1$.
Since all \acp{hcp} start on \p1, the set of groups is $\mathcal{G} \equiv \{g_{1,1},\; g_{1,2},\; g_{1,2^c}\}$, where $g_{1,1}$ denotes stayers, $g_{1,2}$ denotes those that switch to \p2, and $g_{1,2^c}$ denotes those that switch to \p2c.
We write $D_{i,2} \equiv \mathbf{1}[G_i = g_{1,2}]$ and $D_{i,2^c} \equiv \mathbf{1}[G_i = g_{1,2^c}]$ for the treatment group indicators, and $T_t \equiv \mathbf{1}[t=1]$ for the post-treatment period.

We target two group-specific average treatment effects on the treated (ATTs), corresponding to the two switching margins.
Table~\ref{tab:att_pt} lists each ATT alongside its identifying parallel trends assumption; the control group in both cases consists of \acp{hcp} that remain on \p1\ in both periods ($g_{1,1}$).%
\footnote{The remaining assumptions (no cross-unit interference (SUTVA), no anticipation, and common support $0<\Pr(G_{i}=g)<1$ for all $g \in \mathcal{G}$) follow the standard \ac{did} framework.}

\begin{table}[htbp]
\centering
\caption{Group-specific treatment effects and parallel trends assumptions.}
\label{tab:att_pt}
\resizebox{\linewidth}{!}{%
\begin{tabular}{ll}
\multicolumn{1}{c}{\small{Treatment Effect (ATT)}} & \multicolumn{1}{c}{\small{Parallel Trends Assumption}} \\
\midrule
$\tau_{1,2} \equiv \mathbb{E}[Y_{i1}(2) - Y_{i1}(1) \mid G_i = g_{1,2}]$ &
$\textbf{PT}_{1,2}\textbf{:}\; \mathbb{E}[Y_{i1}(1) - Y_{i0}(1) \mid G_i = g_{1,2}] = \mathbb{E}[Y_{i1}(1) - Y_{i0}(1) \mid G_i = g_{1,1}]$ \\
\\
$\tau_{1,2^c} \equiv \mathbb{E}[Y_{i1}(2^c) - Y_{i1}(1) \mid G_i = g_{1,2^c}]$ &
$\textbf{PT}_{1,2^c}\textbf{:}\; \mathbb{E}[Y_{i1}(1) - Y_{i0}(1) \mid G_i = g_{1,2^c}] = \mathbb{E}[Y_{i1}(1) - Y_{i0}(1) \mid G_i = g_{1,1}]$ \\
\midrule
\end{tabular}}
\vspace{2pt}
\parbox{\linewidth}{\scriptsize\justifying\textit{Notes:} $Y_{it}(d)$ denotes the potential outcome of \ac{hcp} $i$ at time $t$ under program $d \in \{1, 2, 2^c\}$. $G_i$ is the group assignment: $g_{1,2}$ denotes switchers from \p1{} to \p2{}, $g_{1,2^c}$ from \p1{} to \p2c{}, and $g_{1,1}$ denotes stayers on \p1{}. Each parallel trends assumption requires the counterfactual trend of switchers (had they not switched) to equal the observed trend of stayers.}
\end{table}

Under parallel trends, the ATTs are point identified as:
\begin{align}
\label{eq:tau01}
\tau_{1,2}
&= \mathbb E\bigl[ Y_{i1} - Y_{i0}\mid G_i=g_{1,2}\bigr] - \mathbb E\bigl[ Y_{i1} - Y_{i0}\mid G_i=g_{1,1}\bigr],\\
\label{eq:tau02}
\tau_{1,2^c}
&= \mathbb E\bigl[ Y_{i1} - Y_{i0}\mid G_i=g_{1,2^c}\bigr] - \mathbb E\bigl[ Y_{i1} - Y_{i0}\mid G_i=g_{1,1}\bigr].
\end{align}

The theoretical model in Section~\ref{sec:model} delivers clear predictions for these parameters.
The cost-effectiveness channel implies $\tau_{1,2} < 0$: switching from the price cap to the ad valorem mechanism should lower prices because the \ac{hcp} now bears a share of the cost and resists price inflation.
The parameter $\tau_{1,2^c}$ captures both cost-effectiveness and cross-subsidization simultaneously, since switching from \p1\ to \p2c\ changes both the reimbursement formula and the group structure.
The indirect test $\tau_{1,2^c} - \tau_{1,2} > 0$ therefore isolates the cross-subsidization component: if consortia merely offered the same ad valorem savings as individual filing, the two effects would be equal; a positive difference indicates that consortium membership inflates eligible members' costs beyond what the ad valorem mechanism alone would produce.

\subsection{Estimation}
\label{sec:did_estimation}

Since all \acp{hcp} share the same baseline status (\p1\ at $t=0$), the estimating equation is the standard multi-treatment \ac{twfe}:
\begin{align}
\label{eq:FullDiDi}
Y_{it} \;=\; \alpha \;+\; \gamma_{2}\,D_{i,2} \;+\; \gamma_{2^c}\,D_{i,2^c}
\;+\; \lambda\, T_{t}
\;+\; \tau_{1,2}\bigl(D_{i,2}\times T_{t}\bigr)
\;+\; \tau_{1,2^c}\bigl(D_{i,2^c}\times T_{t}\bigr)
\;+\; \varepsilon_{it},
\end{align}
where $\alpha$ is the intercept, $\gamma_2$ and $\gamma_{2^c}$ are group fixed effects, $\lambda$ is the period effect, and $\varepsilon_{it}$ is a mean-zero error term.
The coefficients $\tau_{1,2}$ and $\tau_{1,2^c}$ recover the ATTs in equations \eqref{eq:tau01}--\eqref{eq:tau02}.%
\footnote{In a balanced panel, equation \eqref{eq:FullDiDi} simplifies to first differences:
$\Delta Y_{it} = \lambda + \tau_{1,2}\, D_{i,2} + \tau_{1,2^c}\, D_{i,2^c} + \Delta \varepsilon_{it},$
where $\Delta Y_{it} \equiv Y_{i1} - Y_{i0}$.}

\subsection{Continuous treatment intensity}
\label{sec:did_continuous}

Each \ac{hcp} may comprise multiple facilities, such as a hospital chain with several campuses, each of which independently chooses among \p1, \p2, and \p2c.
Because the data lack facility identifiers but map every facility to its parent \ac{hcp}, we work at the \ac{hcp} level.
Let $\mathcal{I}_j$ denote the set of facilities in \ac{hcp} $j$, with quantity (Mbps) $Q_{it}$ and price $Y_{it}$ observed at the facility level.
We construct the quantity-weighted average price for \ac{hcp} $j$ in period $t$:
\[
\bar{Y}_{jt} = \frac{\sum_{i \in \mathcal{I}_j} Y_{it}\,Q_{it}}{\sum_{k \in \mathcal{I}_j} Q_{kt}},
\]
and define the continuous treatment shares
\[
S_{jt,2} = \frac{\sum_{i\in \mathcal{I}_j} D_{i,2}\, Q_{it}}{\sum_{k \in \mathcal{I}_j} Q_{kt}}, \qquad
S_{jt,2^c} = \frac{\sum_{i\in \mathcal{I}_j} D_{i,2^c}\, Q_{it}}{\sum_{k \in \mathcal{I}_j} Q_{kt}},
\]
as the fractions of \ac{hcp} $j$'s bandwidth allocated to \p2\ and \p2c\ at time $t$, respectively.
These shares replace the binary indicators in equation \eqref{eq:FullDiDi}:
\begin{align}
\label{eq:LinReg_S}
\bar{Y}_{jt} \;=\; \alpha \;+\; \gamma_{2}\,S_{jt,2} \;+\; \gamma_{2^c}\,S_{jt,2^c}
\;+\; \lambda\, T_{t}
\;+\; \tau_{1,2}\bigl(S_{jt,2}\times T_{t}\bigr)
\;+\; \tau_{1,2^c}\bigl(S_{jt,2^c}\times T_{t}\bigr)
\;+\; \xi_{jt},
\end{align}
where $\xi_{jt}$ is the quantity-weighted average of facility-level errors.
In this specification, $\tau_{1,2}$ measures the effect of shifting all of an \ac{hcp}'s subsidized activity from \p1\ to \p2; intermediate values of $S_{jt,2}$ identify the effect from partial transitions, under linearity.

\subsection{Covariates and inference}
\label{sec:did_covariates}

We augment equation \eqref{eq:LinReg_S} with a vector of observed covariates $\bm{X}_{jt}$ that includes the logarithm of internet speed, \ac{hcp} type indicators, internet service type indicators, state fixed effects, and year fixed effects.
The linear specification is:
\begin{equation}
\label{eq:FullDiDj}
\bar{Y}_{jt}
\;=\;
\bm{S}_{jt}'\,\bm{\theta}_0
\;+\;
\bm{X}_{jt}'\,\bm{\beta}_0
\;+\;
\xi_{jt},
\end{equation}
where $\bm{S}_{jt}$ stacks the treatment share variables and their time interactions, $\bm{\theta}_0$ collects the corresponding parameters including the ATTs of interest, and $\bm{\beta}_0$ captures the covariate effects.
Standard errors are clustered at the \ac{hcp} level throughout.

Linear control for $\bm{X}_{jt}$ may be misspecified if the relationship between covariates and outcomes is nonlinear.
To address this, we supplement the baseline with a \ac{dml} estimator \citep{chernozhukov2018double}, which replaces the linear control function with a flexible, unknown function $g_0(\bm{X}_{jt})$ estimated via random forest.
The method uses Neyman-orthogonal moments and $K$-fold cross-fitting ($K{=}10$) to deliver $\sqrt{n}$-consistent inference on $\bm{\theta}_0$ that is robust to regularization bias in the first-stage nuisance estimation.
Appendix~\ref{sec:machine_learning} details the implementation.
    
\section{An empirical exercise}
\label{sec:apply}

\subsection{Data}
\label{sec:data}

We use the \ac{rhc} Commitments and Disbursements data provided by \ac{usac}. 
The data is made publicly available by an \ac{fcc} mandate.\footnote{\url{https://opendata.usac.org/Rural-Health-Care/RHC-Commitments-and-Disbursements-Tool/sm8n-gg82}}
It comprises the full population of \acp{hcp} who sought subsidies under \p1, \p2, or \p2c, including those who received funding, those who were denied, and those who have requests under review. 
The data spans 2012 to date, with new rows being added on a rolling basis. 
The data has three ``panel'' identifiers: 
funding year serves as the time variable, a unique ID identifies each \ac{hcp}, and another ID identifies each consortium.

The data is at the ``line of subsidy'' level. 
To explain, when a request is submitted, it may demand subsidy for one or multiple lines of internet connection. 
If a request is for one line of internet connection, it would be recorded as one new observation and a unique \ac{frn} identifies the request in the data set. 
Since it demands one line of subsidy, its respective \ac{frnln} would be $1$. 

If a subsidy request comprises multiple lines of internet connection, each line is recorded as a separate row. 
In this case, a unique \ac{frn} would be shared by all lines. 
But each line would be identified by an \ac{frnln} starting at $1$ and rising in integer increments. 
When a consortium application is submitted, all lines of subsidy in it would share the same \ac{frn} and consortium ID, and each request is mapped into its respective \ac{hcp} using the \ac{hcp} ID.

The data underwent an extensive clean-up and exclusion process, detailed in Appendix~\ref{sec:apx_data_cleanup} and summarized in Table~\ref{tab:ab_exclusion}.
Since \p2\ and \p2c\ were implemented in 2014, the baseline analysis focuses on the 2013--2014 year pair.
The 2013--2014 sample starts with 36,576 request-level observations.
We limit to annual contracts and to categories of expense that correspond to internet subscription costs (excluding equipment, maintenance, and infrastructure).
We remove grandfathered requests, pending and withdrawn requests, and cases for which we could not recover internet speed, price, or subsidy amount.
We discard observations located in Alaska because it is an outlier state with exorbitant internet prices, lowest speed, and a unique internet technology profile.
We limit speeds to 1--100\ac{mbps} and drop observations with zero subsidy, yielding 7,780 request-level observations for 2013--2014.

Since the finest identifier is at the \ac{hcp} level, to enable panel analysis, we aggregate the observations at the \ac{hcp} and year level.
We obtain the \ac{hcp}-level speed by summing speeds in a given year that are linked to a common \ac{hcp} ID.
A similar process is used to obtain \ac{hcp}-level values of price and subsidy.
In this process, we keep track of the number of lines of subsidy per \ac{hcp} and use it as a measure of \ac{hcp} size.
After aggregation, the 2013--2014 sample contains 4,231 \ac{hcp}-year observations.
Restricting to \acp{hcp} that appear in both years produces a balanced panel of 1,940 observations (970 \acp{hcp} observed twice).

\begin{table}[htbp]
\centering
\caption{Program utilization, 2013--2014.}
\label{tab:bd_descriptive2_2013_2014}
\begin{minipage}{\columnwidth}
\centering
\small
\begin{tabular*}{\linewidth}{@{\extracolsep{\fill}} lcccc}
\Xhline{2pt}
 & {2013} & \multicolumn{3}{c}{2014} \\
\cline{2-2}\cline{3-5}
 & \p1 & \p1 & \p2 & \p2c \\ \hline
Number of \acp{hcp}                 & 970 & 643 & 419 & 43 \\
Number of requests                  & 1,938 & 1,263 & 850 & 8 \\
Number of lines                     & 1,980 & 1,305 & 973 & 56 \\
Percentage of lines                 & 100.0\% & 55.9\% & 41.7\% & 2.4\% \\
Percentage of speed                 & 100.0\% & 35.3\% & 63.1\% & 1.7\% \\
Percentage of subsidies             & 100.0\% & 77.9\% & 16.5\% & 5.5\% \\
\Xhline{2pt}
\end{tabular*}
\vspace{2pt}
\parbox{\linewidth}{\scriptsize\justifying\textit{Notes:} The table reports program utilization for the 970 baseline \acp{hcp}. A request may comprise multiple lines of internet connection. \p1{} denotes the price-cap program, \p2{} the individual ad valorem program, and \p2c{} the consortium ad valorem program. In 2013, only \p1{} was available; \p2{} and \p2c{} were introduced in 2014. Percentage figures show each program's share of total lines, bandwidth, and subsidy expenditure in a given year.}
\end{minipage}
\end{table}

Table~\ref{tab:bd_descriptive2_2013_2014} shows the number of approved subsidy requests, \acp{hcp}, and lines of internet connection by program for the baseline 2013--2014 sample.
In 2013, all 970 baseline \acp{hcp} were on \p1.
By 2014, 419 had switched to \p2\ and 43 to \p2c, while 643 remained on \p1.
Notably, one \ac{hcp} could be counted under multiple programs if it receives subsidies under more than one program in the same year.
Although \p2c\ accounts for only 2.1\% of lines, it absorbs 4.1\% of subsidies, an early indicator that consortium filing is costlier per line than individual filing.

The three programs use the same eligibility criteria and provide the same coverage via the same bureaucracy. So, rational \acp{hcp} are expected to choose the program that minimizes \ac{hcp} net cost. However, an anomaly occurred in 2014 when \acp{hcp} who switched from \p1\ to \p2c\ accepted unusually high \ac{hcp} net costs.
This pattern, which seems to violate rationality, may be explained by cross-subsidization. 
Cross-subsidization generates inconspicuous gains for the \acp{hcp}.
An \ac{hcp} may switch to a program with a higher \ac{hcp} net cost if the gains from cross-subsidization exceed the associated increase in \ac{hcp} net cost. 
The year 2014 marks the first year in which cross-subsidization was enabled.
Therefore, \acp{hcp} who could gain the most from cross-subsidization had the strongest incentive to switch immediately, creating a visible increase in \ac{hcp} net cost early on. 

\subsection{A descriptive regression analysis}
\label{sec:pooledOLS}

This section discusses the summary statistics and reports a \ac{pols} regression exercise to describe the data and uncover patterns. 
Table \ref{tab:bb_summary_pooledOLS} reports the summary statistics at the level of subsidy line. 
There are large variations in price, subsidy, and \ac{hcp} net cost.
\p1\ was the only option in 2013.
In the 2013--2014 sample, \p1\ makes up 75.7\% of observations (3,205 of 4,234), reflecting its dominance in the pre-treatment year.
\p2\ accounts for 23.0\% and \p2c\ for 1.3\%.
Most of the sample comprises non-profit hospitals, rural health clinics, and community health centers.
\enquote{Ethernet} and \enquote{internet} dominate the service types.
The data spans most \ac{us} states, with higher concentrations in California, Wisconsin, Texas, and Illinois.

\begin{table}[htbp]
\centering
\caption{Summary statistics for the POLS model.}
\label{tab:bb_summary_pooledOLS}
\begin{minipage}{\columnwidth}
\centering
\resizebox{\linewidth}{!}{%
\begin{tabular}{llllllllll}
\Xhline{2pt}
Variable & Obs & Mean & SE & Median & Min & Max & & & \\ \hline
\p1\ & 3,205 & & & & & & & & \\
\p2\ & 973 & & & & & & & & \\
\p2c\ & 56 & & & & & & & & \\
Speed (Mbps) & 4,234 & 13.14 & 0.350 & 1.54 & 1.50 & 100 & & & \\
Price (\$) & 4,234 & 14,044 & 314.70 & 8,328 & 60.00 & 270,024 & & & \\
Subsidy (\$) & 4,234 & 10,706 & 266.38 & 5,715 & 39.00 & 242,974 & & & \\
HCP net cost (\$) & 4,234 & 3,338 & 76.45 & 2,106 & 0.00 & 87,462 & & & \\
\hline
\textbf{HCP type} & \textbf{Obs} & & & & & & & & \\ \hline
Non-profit hospital & 2,352 & Local health department & 266 & Emergency room & 17 &  &  &  &  \\
Rural health clinic & 808 & Mental health center & 265 & Medical school & 6 &  &  &  &  \\
Community health center & 288 & Not available & 232 &  &  &  &  &  &  \\
\hline
\textbf{Service type} & \textbf{Obs} & & & & & & & & \\ \hline
T1 & 2,084 & MPLS & 390 & T3 & 75 & Fiber & 31 &  &  \\
Ethernet & 504 & Internet & 270 & DSL & 65 & VPN & 20 &  &  \\
ISDN & 472 & PRI & 246 & Voice\_Grade & 36 & Other categories & 41 &  &  \\
\hline
\textbf{State} & \textbf{Obs} & & & & & & & & \\ \hline
GA & 585 & MN & 106 & NY & 50 & PA & 20 &  &  \\
WI & 507 & OH & 105 & NM & 49 & SC & 16 &  &  \\
TX & 272 & IA & 100 & OR & 49 & UT & 13 &  &  \\
CA & 261 & MO & 94 & LA & 43 & NH & 11 &  &  \\
VA & 253 & MI & 72 & MT & 36 & ND & 10 &  &  \\
NE & 222 & KS & 67 & OK & 33 & WV & 5 &  &  \\
KY & 216 & SD & 62 & WY & 30 & MA & 4 &  &  \\
MS & 180 & AL & 61 & CO & 28 & FL & 2 &  &  \\
AR & 171 & TN & 57 & NV & 24 &  &  &  &  \\
IL & 152 & NC & 53 & WA & 21 &  &  &  &  \\
AZ & 122 & IN & 52 & ID & 20 &  &  &  &  \\
\Xhline{2pt}
\end{tabular}%
}
\vspace{2pt}
\parbox{\linewidth}{\scriptsize\justifying\textit{Notes:} The table reports summary statistics for the data used in the \ac{pols} model, restricted to the 970 baseline \acp{hcp} over the 2013--2014 period. This includes the number of lines of subsidy under each program, calendar year, internet speed and price measures, \ac{hcp} types, type of internet service, and state where the \ac{hcp} is located. Observations with non-positive \ac{hcp} net cost are excluded.}
\end{minipage}
\end{table}

We use the following specification:

\[
\ln(LHS_{it}) = \beta_1 \mathcal{P}_1 + \beta_2 \mathcal{P}_2 + \beta_3 \mathcal{P}_2^c + \gamma \ln{speed} 
+ \lambda_t + \alpha_s + \psi_i + \theta_z + \epsilon_{it}
\]

where $\ln(LHS_{it})$ is the natural logarithm of the \ac{lhs} variable (price, subsidy, and \ac{hcp} net cost); $\lambda_t$, $\alpha_s$, $\psi_i$, $\theta_z$, respectively, capture the \ac{fe} for year, state, \ac{hcp} type, and internet service type; and $\epsilon_{it}$ is the error term with a standard normal distribution.

\begin{table}[htbp]
\centering
\caption{Pooled OLS regression results.}
\label{tab:ac_pooledOLS1}
\begin{minipage}{\columnwidth}
\centering
\small
\begin{tabular*}{\linewidth}{@{\extracolsep{\fill}} lccc}
\Xhline{2pt}
 & $\ln(\text{price})$ & $\ln(\text{subsidy})$ & $\ln(\text{HCP net cost})$ \\ \hline
$\beta_1$ & 7.1634*** & 7.0166*** & 5.3150*** \\
 & (0.1345) & (0.1759) & (0.2261) \\
 & [0.000] & [0.000] & [0.000] \\
$\beta_2$ & 5.6567*** & 5.5327*** & 4.4648*** \\
 & (0.1363) & (0.1782) & (0.2290) \\
 & [0.000] & [0.000] & [0.000] \\
$\beta_3$ & 7.2355*** & 6.8700*** & 6.3324*** \\
 & (0.1668) & (0.2181) & (0.2802) \\
 & [0.000] & [0.000] & [0.000] \\
$\beta_3 - \beta_2$ & 1.5787*** & 1.3373*** & 1.8675*** \\
 & (0.1092) & (0.1427) & (0.1835) \\
 & [0.000] & [0.000] & [0.000] \\
$\ln(\text{speed})$ & 0.3159*** & 0.2827*** & 0.3059*** \\
 & (0.0168) & (0.0220) & (0.0282) \\
 & [0.000] & [0.000] & [0.000] \\
Year FE, HCP type FE, Service type FE, State FE & Y & Y & Y \\
$N$ & 4,234 & 4,234 & 4,234 \\
Adj.~$R^2$ & 0.9946 & 0.9898 & 0.9789 \\
\Xhline{2pt}
\end{tabular*}
\vspace{2pt}
\parbox{\linewidth}{\scriptsize\justifying\textit{Notes:} The table reports pooled OLS estimates for the 2013--2014 period using the 970 baseline \acp{hcp}. The dependent variables are the natural logarithm of price, subsidy, and \ac{hcp} net cost. $\beta_1$, $\beta_2$, and $\beta_3$ are the coefficients on indicator variables for the price-cap program (\p1{}), the individual ad valorem program (\p2{}), and the consortium ad valorem program (\p2c{}), respectively. The row $\beta_3 - \beta_2$ reports the estimated difference with a two-sided test of $H_0\colon \beta_3 = \beta_2$. The model includes no intercept. Observations with non-positive \ac{hcp} net cost are excluded. Standard errors in parentheses; $p$-values in brackets. Significance: $^{***}\, p < 0.01$, $^{**}\, p < 0.05$, $^{*}\, p < 0.1$.}
\end{minipage}
\end{table}

Table \ref{tab:ac_pooledOLS1} shows the estimation results.
Columns 2-4, respectively, have the natural logarithms of price, subsidy, and \ac{hcp} net cost as the dependent variable. 
A high adjusted $R^2$ suggests that the combination of the covariates explains the majority of the price movements. 
Starting with Column 2, internet plans are cheaper under \p2\ than \p1.
This is in line with the cost efficiency hypothesis, that the better design of \p2\ relative to \p1\ induces \acp{hcp} to be more price sensitive and achieve cost savings.
The same pattern holds for subsidy and \ac{hcp} net cost: \p2\ reduces both relative to \p1, consistent with the cost savings being shared between the government and the provider.

Whereas both \p2\ and \p2c\ incentivize cost-saving, \p2c\ also incentivizes cross-subsidization.
Therefore, if \p2c\ prices are greater than \p2\ prices, this would serve as suggestive evidence of cross-subsidization.
The estimates for $\beta_3-\beta_2$ are large and highly significant across all three outcomes, showing that \p2c\ prices are substantially higher than \p2\ prices.
The difference is largest for \ac{hcp} net cost, indicating that consortium members bear a disproportionate share of the price inflation.

\subsection{Main results}
\label{sec:model_main}

Using the baseline panel described in Section~\ref{sec:data}, we estimate the continuous treatment model (eq.\ref{eq:LinReg_S}), its binary counterpart (eq.\ref{eq:FullDiDi}), and a \ac{dml} specification.
The models estimate the effect of switching from \p1\ to \p2\ ($\tau_{01}$) and from \p1\ to \p2c\ ($\tau_{02}$).
Table~\ref{tab:bj_main_reg_combined} reports the baseline results.
The cost-effectiveness hypothesis is confirmed if $\tau_{01}<0$: switching from \p1\ to \p2\ should reduce prices.
The table strongly supports this across all three outcomes and all three estimation methods.

\begin{sidewaystable}[htbp]
\centering
\caption{Combined regression results, 2014.}
\label{tab:bj_main_reg_combined}
\small
\begin{tabular*}{\linewidth}{@{\extracolsep{\fill}} l ccc ccc ccc}
\Xhline{2pt}
 & \multicolumn{3}{c}{\textbf{Panel A: $\ln(\text{price})$}} & \multicolumn{3}{c}{\textbf{Panel B: $\ln(\text{subsidy})$}} & \multicolumn{3}{c}{\textbf{Panel C: $\ln(\text{HCP net cost})$}} \\
\cmidrule(lr){2-4} \cmidrule(lr){5-7} \cmidrule(lr){8-10}
 & Cont. & Binary & DML & Cont. & Binary & DML & Cont. & Binary & DML \\ \hline
$\tau_{01}$ & -1.249*** & -1.326*** & -1.755*** & -1.262*** & -1.349*** & -1.697*** & -0.704*** & -0.731*** & -1.295*** \\
 & (0.085) & (0.097) & (0.212) & (0.102) & (0.116) & (0.225) & (0.074) & (0.082) & (0.207) \\
 & [0.000] & [0.000] & [0.000] & [0.000] & [0.000] & [0.000] & [0.000] & [0.000] & [0.000] \\
$\tau_{02}$ & 0.556*** & 0.600*** & 0.569** & 0.284* & 0.310 & 0.381 & 1.740*** & 1.863*** & 1.431*** \\
 & (0.136) & (0.158) & (0.264) & (0.164) & (0.190) & (0.269) & (0.119) & (0.135) & (0.262) \\
 & [0.000] & [0.000] & [0.031] & [0.084] & [0.104] & [0.156] & [0.000] & [0.000] & [0.000] \\
$\tau_{02} - \tau_{01}$ & 1.804*** & 1.926*** & 2.324*** & 1.546*** & 1.659*** & 2.078*** & 2.445*** & 2.594*** & 2.726*** \\
 & (0.147) & (0.171) & (0.372) & (0.177) & (0.206) & (0.375) & (0.128) & (0.146) & (0.398) \\
 & [0.000] & [0.000] & [0.000] & [0.000] & [0.000] & [0.000] & [0.000] & [0.000] & [0.000] \\
$N$ & 1,940 & 1,670 & 1,940 & 1,940 & 1,670 & 1,940 & 1,940 & 1,670 & 1,940 \\
$R^2$ & 0.387 & 0.390 &  & 0.312 & 0.311 &  & 0.431 & 0.448 &  \\
\Xhline{2pt}
\end{tabular*}
\vspace{2pt}
\parbox{\linewidth}{\scriptsize\justifying\textit{Notes:} $\tau_{01}$ is the estimated treatment effect of switching from \p1{} to \p2{}, and $\tau_{02}$ is the effect of switching from \p1{} to \p2c{}. The row $\tau_{02} - \tau_{01}$ reports the estimated difference, with the standard error computed from the joint influence-function covariance matrix for the DML column. Cont.~= TWFE with continuous treatment shares; Binary = TWFE with binary treatment indicators; DML = Double Machine Learning with random forests (10-fold cross-validation). A level specification of this table (with price, subsidy, and HCP net cost in thousands of dollars as dependent variables) is reported in Table~\ref{tab:bj_main_reg_levels} in the appendix. Standard errors in parentheses; $p$-values in brackets. Significance: $^{***}\, p < 0.01$, $^{**}\, p < 0.05$, $^{*}\, p < 0.1$.}
\end{sidewaystable}

To test cross-subsidization, we note that $\tau_{12}$ is not directly estimable because \p2\ was not available in 2013, so no \ac{hcp} could switch from \p2\ to \p2c.
We circumvent this by taking advantage of the following rationale:
$\tau_{01}$ captures cost effectiveness, $\tau_{12}$ captures cross-subsidization, and $\tau_{02}$ captures both.
Therefore, a $t$-test of $\tau_{02}-\tau_{01}>0$ indirectly captures cross-subsidization.
Table~\ref{tab:bj_main_reg_combined} strongly supports this hypothesis as well.

The binary treatment specification (Panel~B in Table~\ref{tab:bj_main_reg_combined}) discards \acp{hcp} with mixed program participation within a year, yielding a slightly smaller sample.
\label{sec:ML_results}The \ac{dml} estimates (Panel~C) allow covariates to enter the model nonlinearly using random forests with 10-fold cross-validation (see Appendix~\ref{sec:machine_learning} for details). The estimates are close to the \ac{twfe} baseline and highly significant.
Both alternatives corroborate the continuous treatment findings, confirming that the cross-subsidization result is not an artifact of the linear specification or the choice of estimator.
Table~\ref{tab:bj_twfe_allyears} in the appendix reports results for all year pairs from 2014 through 2021, estimated analogously by limiting each pair $(t-1,t)$ to \acp{hcp} that appear in both years.

Appendix~\ref{sec:robust} reports nine additional robustness checks that vary the sample composition (excluding medical schools, limiting to specific \ac{hcp} types, Ethernet-only service, including Alaska) and data processing assumptions (all speed ranges, all contract durations, Mbps-only speeds). The results are robust throughout.

We also verify that the estimates are not driven by a small number of outliers. Appendix~\ref{sec:apx_influence} applies Cook's Distance diagnostics to the baseline specification and re-estimates the model after dropping all influential \acp{hcp}. The treatment effects remain significant and economically large. A related concern is that \p2c{} participants operate in a narrower speed range than those in \p1{} or \p2{}, so the baseline comparison may not be on common support. Appendix~\ref{sec:apx_common_support} imposes a common-support restriction by limiting all programs to the speed domain of \p2c{}. The coefficients are virtually unchanged.
Finally, Appendix~\ref{sec:apx_oster} applies the coefficient stability test of \citet{oster2019unobservable}. Every $\delta$ is negative, meaning that adding time-varying controls moves the treatment coefficients \textit{further} from zero. Unobservables would need to work in the opposite direction of observables to nullify the effects.

Table~\ref{tab:bk_model_comparison} compares eight price--speed functional forms. The top goodness-of-fit choices are log-log and lin-log.
A Box-Cox test (Table~\ref{tab:bk_boxcox}) rejects both nested specifications, with the maximum-likelihood $\hat{\lambda}$ closer to zero (log-log) than to one (lin-log). We adopt the log-log specification as the baseline.
Since the true transformation lies between the two, we present results under both: Table~\ref{tab:bj_main_reg_levels} reports the lin-log specification alongside the baseline log-log. All key coefficients remain significant and stable, and Table~\ref{tab:bj_elasticity_comparison} confirms that the implied semi-elasticities are comparable in magnitude. Because the two specifications bracket the Box-Cox optimum and yield virtually identical conclusions, the results are not sensitive to the choice of functional form.

\subsection{Testing the parallel-trends assumptions}
\label{sec:parallel}

Central to our analysis is the parallel trends assumption, that absent the effect of switching, the evolution of the outcome variables for \acp{hcp} who switched would be similar to those who did not switch. 
In this section, we briefly describe the approach of \citet{aryal2025benefits}, inspired by \citet{manski2018right}, in testing the sensitivity of the results to a measurable violation of the parallel trends assumption. 
We then proceed to explain how their model, which may not immediately fit our application, could still be applied to our case after a tweak. 

\citet{aryal2025benefits} let the unobserved counterfactual trend for the treatment group ($\beta^{1}$) be proportional to the observed trend of the control group ($\beta^{0}$) by scalar $g \in [0,2]$ such that $\beta^{1} = g \cdot \beta^{0}$. 
The standard parallel trends assumption corresponds to $g=1$. 
A robust treatment effect, $\hat{\beta}^{\text{Robust}}$, which adjusts for this potential violation, is then given by:

\[
\hat{\beta}^{\text{Robust}} = \hat{\beta}^{\text{DiD}} + \left( \hat{\beta}^{0} - \hat{\beta}^{1} \right) = \hat{\beta}^{\text{DiD}} + (1 - g) \cdot \hat{\beta}^{0}
\]

Measuring $\hat{\beta}^{\text{Robust}}$ and testing if its 95\% confidence interval includes zero would provide insights into whether the \ac{did} results could tolerate a known degree and shape of violation of the parallel trends assumption.
This approach is not immediately available to our main specification because it requires $T_t$ to appear as a stand-alone explanatory variable, which is not the case.
We use the following modification in eq.\ref{eq:FullDiDi} to enable \citet{aryal2025benefits}'s technique for $\tau_{01}$, and then separately for $\tau_{02}$. 
In 2013, all subsidy requests were in \p1. 
In 2014, the requests were under \p1, \p2, or \p2c. 
If we discard 2014's \p2c observations, \acp{hcp} who were subsidized in 2013 and 2014 could be placed either in $D_{00}$ or in $D_{01}$, hence, this subset satisfies $D_{02}=D_{11}=D_{12}=D_{22}=0$. 
This simplifies eq.\ref{eq:FullDiDi} to a canonical \ac{did} with a single treatment effect, enabling the test for $\tau_{01}$. 
Likewise, we discard 2014's \p2\ observations (keep \p2c) to enable the test for $\tau_{02}$. 

Implementing the test on a subset of the data may limit inference. 
But if such tests produce estimates that are consistent with the main results, and the results survive the test of \citet{aryal2025benefits}, it would lend support to the robustness of the baseline results to mild violations of the parallel trends assumption. We do this test for $\tau_{01}$ and $\tau_{02}$, separately for price, subsidy, and \ac{hcp} net cost. 
It generates six tests. 

\begin{figure}[htbp]
    \centering
    \begin{minipage}{\linewidth}
    \centering
    \includegraphics[width=\linewidth]{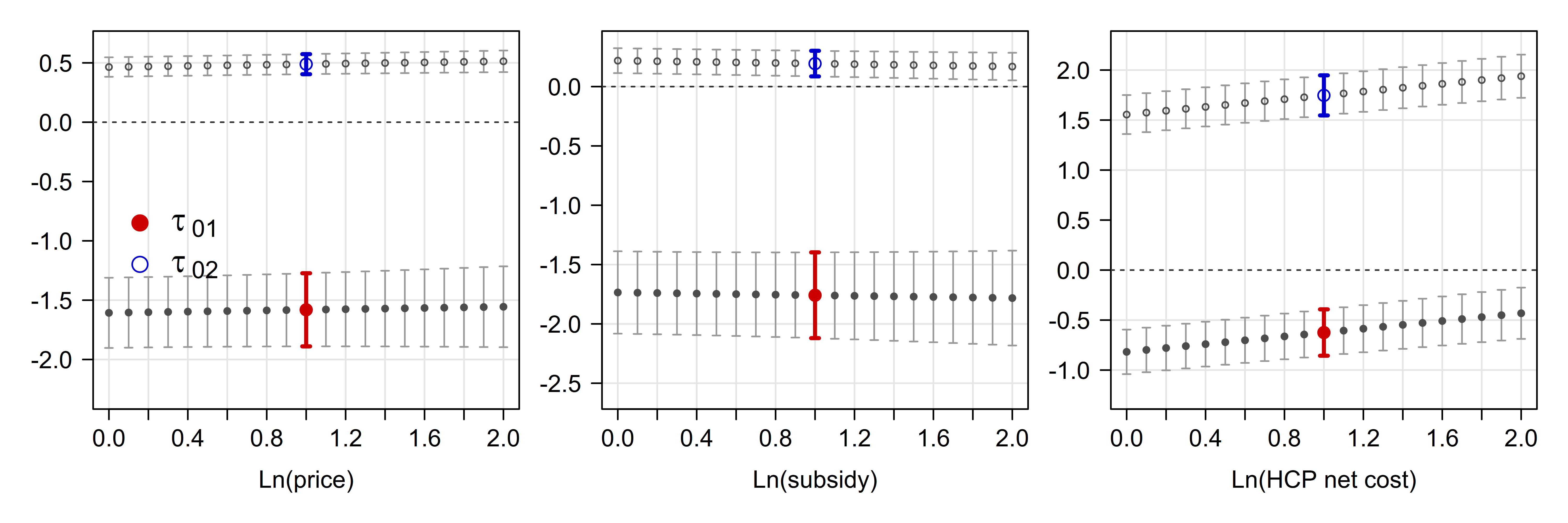}
    \caption{Testing the parallel-trends assumption.}
	\label{fig:ab_manski_combined}
    \vspace{2pt}
    \parbox{\linewidth}{\scriptsize\justifying\textit{Notes:} This figure tests the robustness of the results to a measured linear violation of the parallel-trends assumption \citep{aryal2025benefits, manski2018right}. From left to right, the three panels respectively show the effect of switching between programs on the natural logarithms of price, subsidy, and \ac{hcp} net cost. The red bar is the baseline effect of switching from program \p1\ to \p2. The whiskers create a 95\% confidence interval around the estimate. The gray bars extended to the left and right show what happens if instead of $g=1$, $g$ varies between 0 and 2. Likewise, the blue bar shows the effect of switching from program \p1\ to \p2c. }
    \end{minipage}
\end{figure}

The results are reported in Figure \ref{fig:ab_manski_combined}. 
The two tests for $\tau_{01}$ and $\tau_{02}$ are separately done for price, but put in the same frame (left panel) to facilitate analysis. 
Likewise, the two tests for subsidy in the middle, and the two tests for \ac{hcp} net cost on the right, are placed together. 
The values of $g$ are shown on the horizontal axis, and the treatment effects are measured in the vertical axis. 
The estimates under the parallel trends assumption ($g=1$) are shown in color (solid red for $\tau_{01}$, and hollow blue for $\tau_{02}$). 
The estimates under deviations from the parallel trends assumption are shown in gray, extending to the left ($g<1$) and right ($g>1$) of each central point.

The results with $g=1$ confirm the baseline results, that $\tau_{01}<0$ and $\tau_{01}<\tau_{02}$.
In this regard, this subset of the data behaves similarly to the baseline in supporting the two hypotheses. 
As $g$ varies in $[0,2]$, all the estimates maintain their significance and economic magnitude. 
Another observation that may be cautiously made is that, in each panel, the gap between the blue and red trend, which is sizable and persistent, indirectly corroborates cross-subsidization.

\subsection{An empirical test of cross-subsidization}
\label{sec:share_ineligible}

We found that \p2c\ prices are consistently higher than \p2\ prices, and that this could be explained by cross-subsidization. 
This section presents a test that sheds light on whether cross-subsidization is the underlying mechanism.
Cross-subsidization requires a mix of eligible and ineligible members within a consortium. It is impossible when all members share the same status. So, it must peak at an interior mix. This predicts an inverted U-shaped relationship between the internet price for eligible members and the fraction of members that are ineligible.

To test this hypothesis, we limit the data to consortium applications (\p2c).
So far in the empirical analysis, we strictly used the data of eligible \acp{hcp}. 
In this section, we need to identify ineligible members and measure the fraction of each consortium that they make up.
We identify ineligible members as \acp{hcp} who
(1) are part of a consortium application that has been approved and some members of the consortium have been reimbursed, 
(2) have internet speed and price information of their requested internet on file, and
(3) have received no subsidy. 

We use two methods to calculate the fraction of each consortium that is made up of ineligible \acp{hcp}. 
Our main measure is based on speed, i.e., a consortium's ineligible fraction is equal to the fraction of its total speed that belongs to ineligible members. 
In this method, speed is taken as a measure of the quantity of internet consumed.
Our second approach sets the fraction equal to the percentage count of ineligible members.
For example, when two out of five consortium members are ineligible, the fraction would be 0.40. 

\begin{table}[htbp]
\centering
\caption{Summary statistics of consortium members.}
\label{tab:ae_summary_fwl}
\begin{minipage}{\columnwidth}
\centering
\small
\begin{tabular*}{\linewidth}{@{\extracolsep{\fill}} lcccccc}
\Xhline{2pt}
Variable & Obs & Mean & SE & Median & Min & Max \\ \hline
Mean price for eligible \acp{hcp} & 628 & 11340.5 & 262.6 & 10312.1 & 321.0 & 63216.5 \\
Mean speed for eligible \acp{hcp} & 628 & 1397.0 & 55.8 & 996.4 & 1.5 & 8192.0 \\
Mean number of bidders & 628 & 0.849 & 0.055 & 0.140 & 0.000 & 9.938 \\
Ineligible \acp{hcp}' share by speed & 628 & 0.196 & 0.009 & 0.112 & 0.000 & 1.000 \\
Ineligible \acp{hcp}' share by count & 628 & 0.217 & 0.007 & 0.162 & 0.010 & 0.944 \\
Consortium's total speed & 628 & 77133.0 & 7007.6 & 25274.9 & 12.4 & 1801124.7 \\
\Xhline{2pt}
\end{tabular*}
\vspace{2pt}
\parbox{\linewidth}{\scriptsize\justifying\textit{Notes:} This table reports the summary statistics exclusively for consortium members, pooled across all data years. \acp{hcp} that are members of a consortium could be eligible for subsidies (via program \p2c) or ineligible. The table reports the mean values of internet price and speed for eligible members, the number of \acp{isp} who participate in the competitive bidding process, and the fraction of consortium members that are ineligible in terms of speed and count.}
\end{minipage}
\end{table}

Table \ref{tab:ae_summary_fwl} shows the summary statistics for the data used in this practice. 
All data years that had sufficient observations of consortium applications were combined. 
Mean price and mean speed reflect those of eligible members. 
An average consortium comprises 21.7\% ineligible members who consume 19.6\% of the consortium's total internet speed.

Our estimation implements an \ac{fwl} residualization procedure to isolate the variation in treatment intensity that is orthogonal to high-dimensional controls. 
Specifically, we regress $ln(price)$ and the fractions of ineligible members on the following variables: 
mean number of bidders who participated in the auction, the natural logarithm of the average internet speed requested per eligible member, the natural logarithm of the sum of all speeds requested by the consortium (a measure of consortium size), year \ac{fe}, and consortium \ac{fe}. 
Residuals from these regressions are then adjusted back to the original variable means to facilitate interpretation. 
We then apply local linear regression to the residualized data using a nonparametric locally weighted scatterplot smoothing (LOWESS) estimator.
This approach flexibly estimates the conditional relationship between the fraction of ineligible participants and $ln(price)$ without imposing a parametric functional form. 
This was implemented in R, using \textit{ggplot} with \textit{geom smooth} method set to \textit{loess}.

\begin{figure}[!htb]
    \centering
    \includegraphics[width=.8\linewidth]{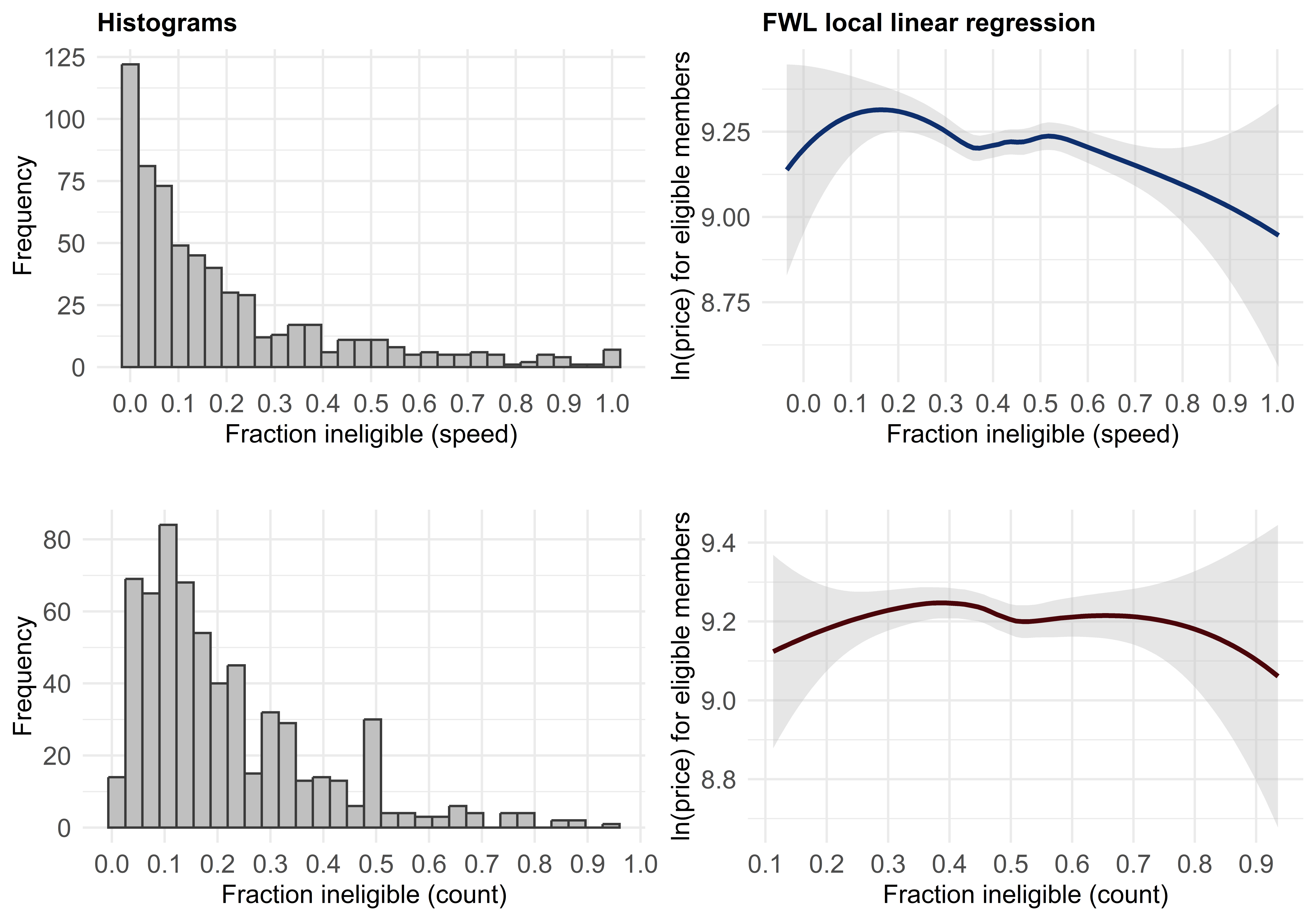}
	\caption{\ac{fwl} local linear regression results.}
	\label{fig:ae_fwl1}
    \vspace{2pt}
    \parbox{\linewidth}{\scriptsize\justifying\textit{Notes:} Left panels show histograms of the fraction of consortium members ineligible for subsidies, measured by share of total consortium speed (top) or by headcount (bottom). Right panels show Frisch--Waugh--Lovell residualized LOESS fits of $\ln(\text{price})$ for eligible members against the ineligible fraction, after partialling out consortium and year fixed effects, $\ln(\text{total consortium speed})$, $\ln(\text{mean speed})$, and mean number of bidders.}
\end{figure}

Figure \ref{fig:ae_fwl1} shows the result of this practice.
Although the sample is small, the results corroborate the presence of an inverted U-shape between the fraction of ineligible members and $ln(price)$, consistent with the underlying role of cross-subsidization in explaining the higher prices in \p2c.

\subsection{Endogeneity in program selection}
\label{sec:Endo}

A natural concern is that the estimated treatment effects reflect selection rather than the causal impact of program design.
\acp{hcp} choose which mechanism to join, and this choice may correlate with unobserved time-varying characteristics that also affect prices, violating the parallel trends assumption.
Our estimates capture the average treatment effect on the treated (ATT) under parallel trends.
We do not claim to eliminate endogeneity. Rather, several features of the setting and the results collectively establish it as a second-order concern.

Programs $\mathcal{P}_1$, $\mathcal{P}_2$, and $\mathcal{P}_2^c$ share identical eligibility criteria and administrative procedures.
The only substantive difference across programs is the reimbursement formula.
Conditional on eligibility, the choice is purely a cost-minimization decision.
Selection therefore operates through the same channel as the treatment effect: an \ac{hcp} switches because the alternative program lowers its cost exposure, and the within-\ac{hcp} price change after switching measures exactly that incentive at work.
Hospital fixed effects absorb all time-invariant heterogeneity.

To see what drives switching, note that a cost-minimizing \ac{hcp} prefers $\mathcal{P}_2$ if and only if $0.35p < p_u$, i.e., $p/p_u < 1/0.35 \approx 2.86$.
We define $H_i = \mathbf{1}[p_i/p_{u,i} > 1/0.35]$ using 2013 values and estimate a logit model of the 2014 switching decision on the balanced panel:
\[
\Pr(\text{switched}_i = 1) = \Lambda\!\left(\beta_0 + \beta_1 \, H_i + \mathbf{X}_i'\boldsymbol{\gamma}\right)
\]
where $\mathbf{X}_i$ includes $\ln(\text{speed})$, $\ln(\text{price})$, and $\ln(\text{requests})$.
Table~\ref{tab:zz4_logit_switching} shows that $H_i$ is negative and highly significant across four specifications with increasingly rich controls. Adding \ac{hcp}-level covariates contributes virtually no explanatory power (pseudo $R^2$: 0.031 to 0.037), confirming that the switching decision is driven predominantly by the cost signal.

\begin{table}[htbp]
\centering
\caption{Logit estimates of the switching decision from $\mathcal{P}_1$ to $\mathcal{P}_2$ in 2014.}
\label{tab:zz4_logit_switching}
\begin{minipage}{\columnwidth}
\centering
\small
\begin{tabular*}{\linewidth}{@{\extracolsep{\fill}} lcccc}
\Xhline{2pt}
 & (1) & (2) & (3) & (4) \\ \hline
$H_i$ & $-$0.966*** & $-$0.990*** & $-$1.094*** & $-$1.079*** \\
 & (0.207) & (0.208) & (0.226) & (0.229) \\
$\ln(\text{speed})$ &  & $-$0.180** & $-$0.280** & $-$0.283** \\
 &  & (0.086) & (0.121) & (0.122) \\
$\ln(\text{price})$ &  &  & 0.194 & 0.176 \\
 &  &  & (0.159) & (0.166) \\
$\ln(\text{requests})$ &  &  &  & 0.108 \\
 &  &  &  & (0.275) \\
Constant & $-$0.336** & $-$0.091 & $-$1.744 & $-$1.595 \\
 & (0.169) & (0.204) & (1.373) & (1.427) \\
\hline
Observations & 565 & 565 & 565 & 565 \\
Pseudo $R^2$ & 0.031 & 0.037 & 0.037 & 0.036 \\
\Xhline{2pt}
\end{tabular*}
\vspace{2pt}
\parbox{\linewidth}{\scriptsize\justifying\textit{Notes:} Logit estimates of the probability of switching from $\mathcal{P}_1$ to $\mathcal{P}_2$ in 2014. All covariates are measured in 2013. $H_i = \mathbf{1}[p_i/p_{u,i} > 1/0.35]$. Standard errors in parentheses. *** $p<0.01$, ** $p<0.05$, * $p<0.1$.}
\end{minipage}
\end{table}

Three additional results support the baseline findings against selection concerns.
First, Section~\ref{sec:parallel} shows that treatment effects remain significant for proportional trend violations across the full range $g \in [0,2]$ (Figure~\ref{fig:ab_manski_combined}).
Second, the \ac{dml} estimates in Section~\ref{sec:ML_results} allow covariates to enter nonlinearly and use cross-fitting. If misspecification of the linear control function drove the baseline, flexible ML should absorb the confounding, yet the \ac{dml} estimates are close to the \ac{twfe} baseline and highly significant.
Third, the inverted U-shaped relationship between ineligible consortium share and eligible members' prices (Section~\ref{sec:share_ineligible}) is a qualitative prediction of the cross-subsidization model that selection bias, which would plausibly generate a monotone pattern, cannot easily produce.
This correspondence between theory and data corroborates cross-subsidization as the underlying channel.

In summary, for endogeneity to overturn the findings, an unobserved time-varying confounder would need to:
(i)~predict switching beyond the cost ratio despite adding no explanatory power over observables;
(ii)~survive hospital fixed effects in a balanced panel;
(iii)~correlate with price changes in a way that reproduces the theory's predicted patterns, including the inverted U-shape;
(iv)~persist across both \ac{twfe} and \ac{dml} estimation;
(v)~withstand proportional violations of parallel trends up to $g = 2$;
(vi)~survive the removal of all statistically influential observations (Appendix~\ref{sec:apx_influence});
(vii)~hold when the sample is restricted to the common speed support of \p2c\ (Appendix~\ref{sec:apx_common_support}); and
(viii)~reverse the \citet{oster2019unobservable} stability bounds, which currently show that unobservables would need to work in the opposite direction of observables to nullify the effects (Appendix~\ref{sec:apx_oster}).
Endogeneity may affect precise magnitudes, but the direction and economic significance of the results are robust to any plausible selection story.

\section{Discussion}
\label{sec:discussion}

Our results strongly support the cost-effectiveness hypothesis. 
In the first year of implementing \ac{hcf}, switching from \p1\ to \p2\ caused a 71.31\% reduction in internet price, 71.68\% reduction in the subsidy, and 50.52\% reduction in the \ac{hcp} net cost. 
Back-of-the-envelope calculations suggest that switching all existing \acp{hcp} from \p1\ to \p2\ would save an additional \$33 million annually.

We also found strong evidence for the cross-subsidization hypothesis. 
In contrast to switching from \p1\ to \p2\ that lowered expenditure, switching from \p1\ to \p2c\ \textit{increased} subsidies by 32.86\%. This reversal is difficult to rationalize without cross-subsidization.
The contrast extends to \ac{hcp} net cost: switchers to \p2\ saw a 50.52\% reduction, while switchers to \p2c\ experienced a 469.73\% increase.
It is hard to justify a rational \ac{hcp}'s voluntary switching to a program that is six times costlier to them, unless such decisions are shaped by latent incentives such as cross-subsidization.

The original intention of the \ac{usf} was to ensure that rural \acp{hcp} do not pay more than their urban counterparts on internet plans. 
So, even in the absence of cross-subsidization, \p2c's extension of subsidies to urban \acp{hcp} seems unjustified. 
The fact that majority-rural status is determined by a head count of \acp{hcp} is immensely problematic, as it implies that a large urban general hospital could receive subsidies to cover 65\% of its internet costs if it forms a consortium with two small rural entities, e.g., health clinics.
By extending subsidies to urban \acp{hcp}, \p2c\ has dramatically increased the number of subsidy recipients, imposing a financial strain on the program's budget.
That said, we recommend discontinuing programs \p1\ and \p2c, and migrating all \acp{hcp} to \p2\ as the only subsidy mechanism.
This will remove the inferior design of \p1\ that results in wasteful spending, prevent cross-subsidization, and end spending on urban \acp{hcp}.

Cross-subsidization adds to the literature of subsidy programs that target one group and exclude another. 
For example, \citet{rotemberg2019equilibrium} shows that offering subsidies to eligible recipients could benefit them, while putting their ineligible rivals at a disadvantage, crowding out the latter group by the former. 
In our study, we find a positive spillover effect of the program on ineligible entities. 
Understanding such dynamics is crucial because, when designing interventions, policymakers must consider the direct effect on the recipient and the indirect effect on third parties who could benefit from or be harmed by interventions.

Cross-subsidization involves an artificial cost inflation for eligible members. Existing literature shows a similar price inflation in the High Cost program, where a subsidy item covers the infrastructure investment cost (the high cost loop support). 
The program provides a higher percentage subsidy to firms with a higher investment cost. \citet{berg2011incentives, berg2011universal} found that firms inflated their investment cost to be eligible for a higher subsidy percentage, and the amount of cost inflation rises with the size of the anticipated subsidy increment. 

Given the complex design of the \ac{rhc}, there are many facets of the program that are worth investigating that are beyond the scope of this study. 
First, the program's \$400 million budget cap was not met for the first two decades of the program, and it was common knowledge that this budget cap existed. 
The mere imposition of an unmet budget cap could signal to recipients the availability of unallocated funds and incentivize a more aggressive pursuit of subsidies, leading to wasteful spending and cost overstatement \citep{berg2011incentives}.

Another avenue for future research, that directly builds on our study, is that cross-subsidization would require cooperation (collusion) between an \ac{isp} and a consortium.
The \ac{isp} may demand a fraction of the proceeds, causing price inflation for eligible \acp{hcp} that exceeds the deflation for ineligible ones \citep{ashenfelter2010effect, deneckere1985incentives, viscusi2018economics}.
Cooperative equilibria are more likely to be sustained in repeated games if fewer parties are involved \citep{bourreau2021market, dal2018determinants, sugaya2022repeated}.
In the context of Medicare reimbursements, \citet{clemens2017shadow} show that subsidies induce higher prices, and this effect is intensified in highly concentrated markets.
We conjecture that a higher degree of market concentration among \acp{isp}, \acp{hcp}, or both, may facilitate cooperation and intensify the cost burden of cross-subsidization.

The impact of the \ac{rhc} on internet connectivity and the subsequent impact on rural healthcare is to be examined in future research. 
Existing studies cast doubt on the effectiveness of the other components of \ac{usf}.
The E-Rate program increased internet connectivity in schools by 68\%, but it did not improve student outcomes \citep{goolsbee2006impact, hazlett2019educational}. 
The High Cost program is unjustified for a large fraction of recipients in North Carolina, in the sense that the benefit to the recipients is less than the subsidy expense \citep{boik2017economics}. 
Likewise, most of the Lifeline program's budget was spent on existing users, with little increase in connectivity \citep{wallsten2016learning, ackerberg2014estimating, lyons2023assessing}. 
Three quarters of low-income households did not take up the subsidy offered by the Lifeline program despite being eligible, whereas the program has subsidized many ineligibles \citep{ward2010effect}.

A subsidy program is unjustified unless it makes the intended impact. 
The \ac{rhc} aims to promote \ac{hcp} internet connectivity.
The combined annual revenue of the subsidized \acp{hcp} exceeds \$1 trillion.%
\footnote{In 2023, \ac{us} hospitals \href{https://www.kff.org/health-costs/key-facts-about-hospitals/?entry=overview-introduction}{generated \$1.5 trillion}. 67.4\% of these hospitals are \href{https://www.aha.org/statistics/fast-facts-us-hospitals}{non-profit or public entities}.} 
So, internet subsidies account for less than 0.1\% of the subsidy recipients' overall operating scale, and it is implausible that the average \ac{hcp} would forgo internet connectivity if the subsidy were withdrawn. 
If the program fails to boost connectivity, then it reduces to a wealth transfer from taxpayers to \acp{hcp}.

\section{Conclusions}
\label{sec:conclusions}

This paper provides a comprehensive economic evaluation of the Federal Communications Commission’s \acf{rhc}, analyzing three subsidy mechanisms: 
price-cap (\p1), individual ad valorem subsidy (\p2), and consortium ad valorem subsidy (\p2c). 
We use administrative data and causal econometric models. 
The findings reveal that transitioning from the price-cap to the individual ad valorem mechanism delivers \$33 million in annual cost savings while preserving the program's effectiveness. 

In contrast, the consortium option enables an unintended cross-subsidization scheme that allows ineligible consortium members to use eligible members as a conduit to bypass their ineligibility and receive subsidies: 
Eligible members bear inflated internet prices, mirrored by deflated prices for ineligible members, driving up the consortium's overall subsidy outlays. 
We empirically confirm the presence of cross-subsidization.
The existing enforcement regime, limited to occasional audits and modest fines, lacks the incentives to contain or eliminate it.

The \ac{rhc} program costs have been steadily rising for almost three decades. 
While the concerns with the program's inefficiencies and loopholes are documented (Section \ref{sec:programs}), little has been done to rectify it.
Instead, policymakers have focused on raising more funds and taxing new entities to keep the program afloat \citep{figliola_2025_usf_overview, fcc_2012_usf_contribution_nprm, kelly_et_al_2025}.
A rigorous economic diagnosis has been missing, and our study is an attempt to address this.
We presented actionable recommendations that policymakers may use to amend the program. 
Yet, many questions are to be answered in future studies including the mediating role of market power on the intensity of cross-subsidization, measuring the program's effect on connectivity and the subsequent effect on health outcomes, and the possibility of sophisticated forms of cross-subsidization if internet service providers partake in the scheme. 

\subsection*{CRediT author statement}

\textbf{R.S.D.}, \textbf{M.R.}, and \textbf{R.P.} have been equally involved in the following stages of the study: conceptualization, methodology, software, validation, visualization, formal analysis, investigation, data curation, writing - original draft, and reviewing and editing the manuscript.


\newpage
{\footnotesize
\begin{singlespacing}
\setlength{\bibsep}{3pt}
\bibliography{0_bibliography}
\bibliographystyle{chicago}
\end{singlespacing}
}

\clearpage
\ \\[3.5in]
\begin{center}
    {\Huge Appendix}
\end{center}
    \setcounter{page}{1}
    \setcounter{table}{0}
    \renewcommand\thetable{A.\arabic{table}}

\addcontentsline{toc}{section}{Appendices}

\newpage
\appendix

\setcounter{table}{0}
\renewcommand{\thetable}{A\arabic{table}}

\setcounter{figure}{0}
\renewcommand{\thefigure}{A\arabic{figure}}
\clearpage 
\section{Detailed description of the institutional environment}
\label{Apx:institutional}

\subsection{Internet subsidy programs in the United States}
\label{sec:programs}

The \ac{frc} was the original government agency that regulated radio communication in the \ac{us} from 1927 to 1934.%
\footnote{\href{https://en.wikipedia.org/wiki/Federal_Radio_Commission}{Wikipedia: Federal Radio Commission}} 
The Communications Act of 1934 abolished \ac{frc} and led to the creation of the \ac{fcc}. 
Among other roles, the \ac{fcc} provided subsidies to ensure that \enquote{\textit{all people in the United States shall have access to rapid, efficient, nationwide communications service with adequate facilities at reasonable charges}}.%
\footnote{\href{https://www.fcc.gov/general/universal-service-fund}{FCC: Universal Service Fund}}
The Telecommunications Act of 1996 substantially expanded the subsidy coverage, creating four sister programs, each targeting one recipient group as shown in Figure \ref{fig:schematic}. 
The High Cost Program partially reimburses telecommunication companies that serve high-cost areas; E-Rate subsidizes internet access for schools and libraries; the Lifeline Program assists low-income consumers with their monthly telephone and telecommunication bills; and the \ac{rhc} (the program that we study) subsidizes internet access and related equipment for eligible \acp{hcp}.

The programs are funded and administered under the \ac{usf}. 
The \ac{fcc} has appointed \ac{usac}, a private non-profit entity, to administer \ac{usf}, while the \ac{fcc} retains the right to oversight and assessment. 
\ac{usf} is not dependent on the federal budget. 
Instead, funds are raised by taxing telecommunication companies through the ``contribution factor.''
The contribution factor is the percentage of interstate end-user revenues that telecommunication companies must pay to \ac{usac} to fund \ac{usf}. A complex system of accounting determines the contribution factor that each telecommunication company must pay.%
\footnote{Some explanations may be found on the \href{https://www.fcc.gov/general/contribution-factor-quarterly-filings-universal-service-fund-usf-management-support}{FCC website}.} 
\ac{usac} sets the contribution factor on a quarterly basis to ensure that program costs remain covered. 
Since 1997, \ac{usac} has set the contribution factor over 100 times.%
\footnote{\href{https://www.fcc.gov/general/contribution-factor-quarterly-filings-universal-service-fund-usf-management-support}{Contribution Factor \& Quarterly Filings - USF Management Support}.}
While the \ac{fcc} has the authority to appeal or adjust the contribution factor, it has never done so \citep{dunstan2023fcc}.
Therefore, \ac{usac} is the de facto setter and enforcer of the contribution factor.

In response to the ever-increasing program expenditure, \ac{usac} has been steadily raising the contribution factor.
Telecommunication companies are allowed to shift the contribution factor to consumers, itemized separately under ``Universal Service Charge'', ``USF fee'', or similar names.%
\footnote{\href{https://www.govinfo.gov/content/pkg/CFR-2020-title47-vol3/pdf/CFR-2020-title47-vol3-sec54-712.pdf}{47 CFR Ch. I § 54.712}, \href{https://www.usac.org/about/universal-service/}{USAC: Universal Service}}
Therefore, the contribution factor is a tax on ordinary internet users.
The Government Accountability Office has \Mquote{raised concerns about the cost burden the fund imposes on consumers} \citep{gao_2012_high_cost_program}. 

Figure \ref{fig:be_contribution_factor} shows the trend of the contribution factor since the program's inception. 
It began at 3.14\% in 1998 \citep{FCC1998FR_DA98_1130}. 
Within two years, it rose to 5.7\% \citep{FCC2000DA00517}. 
When the contribution factor reached 16.1\% in 2014, an \ac{fcc} commissioner called it a \Mquote{clearly disturbing and unsustainable} path.%
\footnote{\href{https://www.fcc.gov/news-events/blog/2014/09/11/usf-contribution-factor-over-time}{USF Contribution Factor Over Time}.}
This concern was echoed in the Senate in 2023 when the contribution factor reached 29\%, where Senator John Thune emphasized the need to \Mquote{address the inefficiencies within} the program, and raised the concern that the program is riddled with waste, fraud, and abuse \citep[p.3-4]{senate_universal_service1}.
As of the first quarter of 2026, the contribution factor is 37.6\%.%
\footnote{\href{https://docs.fcc.gov/public/attachments/DA-25-223A1.pdf}{FCC:DA-25-223A1}.}

\begin{figure}[htbp]
    \centering
    \begin{minipage}{\linewidth}
    \centering
    \includegraphics[width=.8\linewidth]{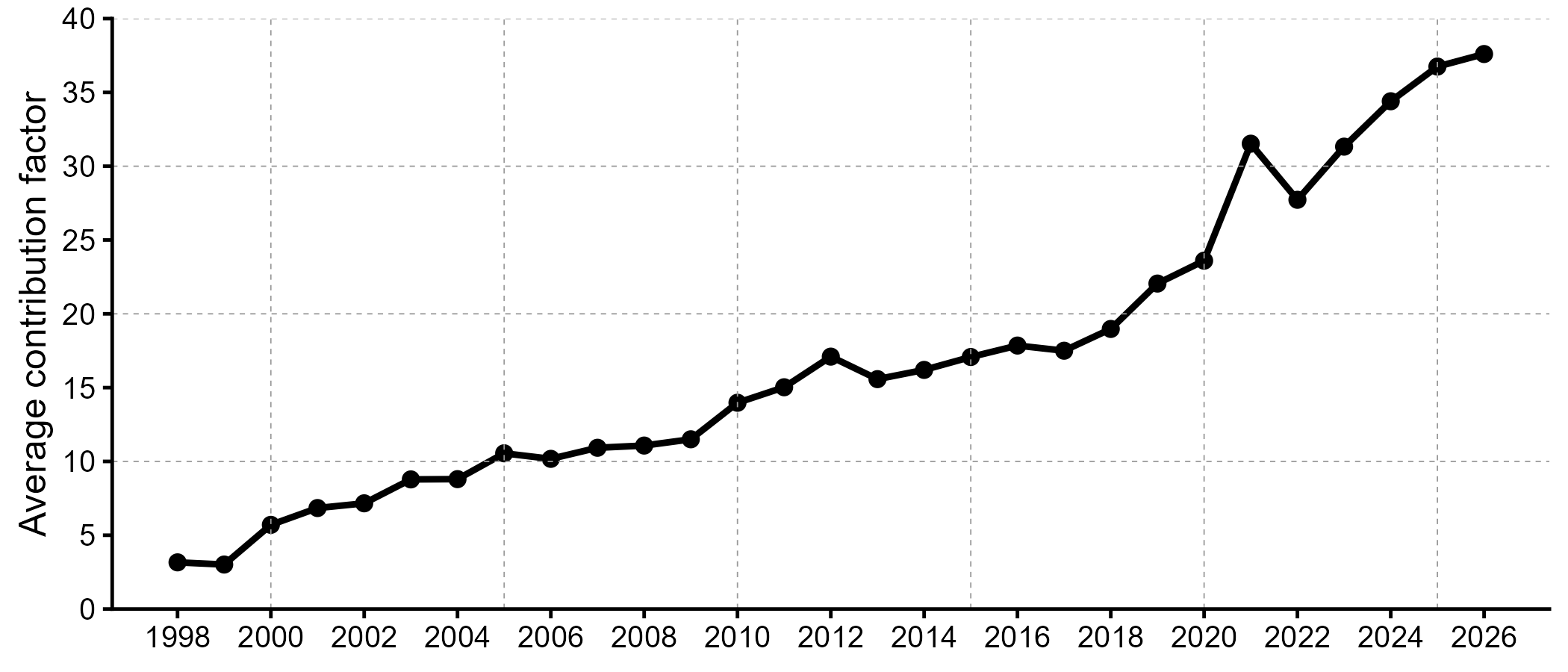}
    \caption{The evolution of the contribution factor.}
    \label{fig:be_contribution_factor}
    \vspace{2pt}
    \parbox{\linewidth}{\scriptsize\justifying\textit{Notes:} The figure illustrates the evolution of the contribution factor over its lifetime. The contribution factor is set on a quarterly basis. The figure reports the average of the quarterly values to report a single observation per year. The contribution factor in percentages is reported on the vertical axis.}
    \end{minipage}
\end{figure}

Below is a back-of-the-envelope calculation of the effect of the program on internet prices in the \ac{us}. 
The total \ac{isp} revenues in the \ac{us} in 2025 are estimated to be \href{https://www.cognitivemarketresearch.com/internet-service-market-report}{\$127.5 billion}. 
The \ac{usf} budget has been \$8.6 billion in 2024 \citep{usf_annual_report_2024}.
Knowing that internet prices must be raised to fund \ac{usf}, internet prices are likely inflated by $8.6/(127.5 - 8.6)\times100=7.2\%$ via the contribution factor.
The price elasticity of demand for residential internet is -0.18 to -0.47 \citep{czajkowski2024assessing}. 
Using this, the contribution factor likely priced out 1.3\% to 3.4\% of users, who could become reliant on subsidies to stay connected.
This creates a positive feedback loop: rising program costs increase the contribution factor, which raises internet prices, exacerbates affordability, increases internet users' subsidy reliance, and drives up subsidy expenditure \citep{hazlett2019educational}. 

It is understood in public choice theory that bureaucratic entities may seek to maximize the budget under their management as a means of political influence and financial power \citep{niskanen1968peculiar}. Besides, \ac{usac}'s own expenses are paid using the contribution factor \citep{gao_usac_admin_costs}, which amounted to \$324 million in 2024 \citep{usf_annual_report_2024}. Therefore, the relationship between the \ac{fcc} and \ac{usac} may create a classical principal-agent problem, where cost-containment is not aligned with, or potentially contradicts, \ac{usac}'s priorities.
In a review of the High Cost program's 2008 spending, the \ac{fcc} Office of Inspector General estimated that 23.3\% of the subsidies were paid out ``erroneously'' \citep[p.9]{wallsten2011universal}, shedding light on the possible size of waste, fraud, and abuse. 
Despite the \ac{fcc}'s authority and central role, it has admitted to falling short in auditing cost information of entities under its oversight \citep{zolnierek2008loop}.

\subsection{The Rural Health Care program}
\label{sec:rhc_program}

\ac{rhc}'s original goal in 1997 was to ensure that rural \acp{hcp} have access to \enquote{telecommunication services necessary for the provision of health care services} \citep[p.4]{gilroy2008universal}, and that rural \acp{hcp} pay no more than their urban counterparts on internet bills.%
\footnote{\href{https://www.fcc.gov/guides/universal-service-program-rural-health-care-providers}{The FCC’s Universal Service Rural Health Care Programs}}
This led to the creation of the first subsidy mechanism under the \ac{rhc}, called the Telecommunications Program (\p1)\ in 1997. 
\p1\ uses the following flat-rate subsidy reimbursement scheme: 
If an eligible rural \ac{hcp} who pays internet price $p$ could show that a similar plan costs $p_u$ in a nearby urban location, the \ac{hcp} would pay $p_u$ out of its own pocket (\ac{hcp} net cost), and the subsidy would cover $p-p_u$. 

This design is economically problematic.
If an \ac{hcp} chooses an unjustifiably expensive internet plan, \p1\ fully compensates the \ac{hcp} for the excessive payment. In contrast, if an \ac{hcp} switches to a more cost-effective internet plan and minimizes waste, \p1\ fully captures the cost savings.
This full cost insulation creates a perfectly inelastic demand for internet services, eliminating any cost-saving incentives \citep{rabbani_bb2}.
 
The \acf{hcf} was introduced in 2012, piloted in 2013, and fully implemented in 2014 to run alongside \p1. 
Instead of benchmarking on urban rates, \ac{hcf} reimburses \acp{hcp} for 65\% of their rural internet bills while requiring the \acp{hcp} to pay the remaining 35\%. 
This proportional cost-sharing incentivizes waste reduction because it rewards \acp{hcp} \$0.35 for every \$1.00 they save on their internet bills, and penalizes them \$0.35 for every \$1.00 they waste. 
Although none of the official \ac{fcc} documents discloses the reasons for introducing \ac{hcf} to run alongside \p1, the \ac{fcc} stated that they \enquote{expect} a gradual migration of \p1\ users to \ac{hcf}%
\footnote{\href{https://rb.gy/5fkul}{Healthcare Connect Fund - Frequently Asked Questions}}, hinting that the introduction of \ac{hcf} was an attempt to phase out \p1's suboptimal incentive design. 

\ac{hcf} allows \acp{hcp} to either individually request subsidies (\p2) or to create a consortium of \acp{hcp} that requests subsidies on behalf of its members (\p2c). 
\p2c\ extends subsidy coverage to eligible urban \acp{hcp} that belong to a majority-rural consortium. 
A consortium is deemed majority rural if more than half of its members are rural. 
It is based on a simple count of \acp{hcp} and not size or market capitalization.
For example, if a large urban general hospital and two small rural clinics form a consortium, it is considered majority rural because two of the three members are rural, and all three members would be subsidized.

\p1, \p2, and \p2c\ use a common trio of eligibility criteria:
Subsidy recipients must 
(1) be a non-profit or public entity, 
(2) be in a rural area or belong to a majority-rural consortium, and 
(3) belong to one of the following categories: 
non-profit hospitals; 
skilled nursing facilities; 
community mental health centers; 
rural health clinics; 
dedicated emergency departments of rural for-profit hospitals; 
community health centers or health centers providing health care to migrants; 
local health departments or agencies; 
post-secondary educational institutions offering health care instruction such as teaching hospitals or medical schools; 
part-time eligible entities located in a facility that is ineligible; 
or consortia of health care providers consisting of one or more entities listed above.%
\footnote{\url{https://www.usac.org/rural-health-care/telecommunications-program/step-1-determine-eligibility-of-your-site/}}
Once eligibility is established, \acp{hcp} can choose either program as their subsidy channel, except for eligible urban \acp{hcp} who must remain in \p2c. 
\acp{hcp} can switch programs at the time of subsidy renewal.

Besides eligible rural and eligible urban, \acp{hcp} could be ineligible rural or ineligible urban.
Ineligible \acp{hcp} do not count in the calculation that determines the majority-rural status of a consortium.
When a consortium submits a subsidy request, it enters a competitive bidding process, in which \acp{isp} compete by placing bids for providing all the internet services listed by the consortium for all its members.
As explained in Section \ref{sec:intro}, we hypothesize that this could give rise to an unintended cross-subsidization scheme, where eligible \acp{hcp} covertly extend subsidy coverage to ineligible members in the same consortium. 
This could be done by artificially lowering the rates for ineligible members while raising the rates for eligible members in a manner that leaves the \ac{isp} revenue intact. 
For every \$1 that is cross-subsidized, the consortium would collect an additional \$0.65 in subsidies.

\begin{table}[htbp]
\centering
\caption{\ac{hcp} entity type composition by program and year, 2013--2014.}
\label{tab:bc_HCP_type}
\begin{minipage}{\columnwidth}
\centering
\small
\begin{tabular*}{\linewidth}{@{\extracolsep{\fill}} l r rrr}
\Xhline{2pt}
 & \multicolumn{1}{c}{2013} & \multicolumn{3}{c}{2014} \\
\cmidrule(lr){2-2} \cmidrule(lr){3-5}
HCP type & \p1{} & \p1{} & \p2{} & \p2c{} \\ \hline
Rural health clinic & 262 & 136 & 96 & 21 \\
Non-profit hospital & 398 & 144 & 150 & 7 \\
Mental health center & 84 & 49 & 18 & 0 \\
Community health center & 96 & 67 & 24 & 1 \\
Local health department & 115 & 107 & 2 & 0 \\
Skilled nursing facility & 0 & 0 & 0 & 0 \\
Medical school & 3 & 1 & 2 & 0 \\
Emergency room & 3 & 0 & 2 & 0 \\
Not available & 2 & 2 & 0 & 0 \\
\Xhline{2pt}
\end{tabular*}
\vspace{2pt}
\parbox{\linewidth}{\scriptsize\justifying\textit{Notes:} The table reports the number of \ac{hcp}s of each entity type participating in each program in 2013 and 2014. \p1{} is the price-cap program, \p2{} is the individual ad valorem program, and \p2c{} is the consortium ad valorem program. The sample is restricted to the 970 baseline \acp{hcp}. \ac{hcp}s with mixed program participation within a year are excluded.}
\end{minipage}
\end{table}

Table \ref{tab:bc_HCP_type} shows the composition of \acp{hcp} that received subsidies under each program during 2013--2014.
Non-profit hospitals make up the largest group, followed by rural health clinics, local health departments, community health centers, and mental health centers.

\subsection{Institutional underpinnings of cross-subsidization}
\label{sec:cross_sub}

This section describes the institutional settings that may enable cross-subsidization.
Any group of \acp{hcp} could consolidate all its subsidy requests into a single consortium application. 
The application must enter what resembles an auction process that allows \acp{isp} to place bids to provide services to the consortium. 
The process begins with \acp{hcp} filing Form 465 to \ac{usac} to officially declare and detail their internet connectivity needs (bandwidth, locations, technology, etc.) in a \ac{rfp}. 
\ac{usac} puts \acp{rfp} on its website for 28 days. 
This is called the \enquote{competitive bidding} process, which is the \acp{isp}' opportunity to place bids. 
After this window ends, the consortium can review the bids and select the winner. 
The consortium may sign a contract with the winner as early as on day 29, without \ac{usac}'s approval, so long as the HCP can show that it has chosen the \enquote{most cost-effective} bid and document its evaluation criteria \citep{usac_guide_461_462}.

\begin{table}[htbp]
\centering
\caption{\ac{rfp} criteria examples.}
\label{tab:rfp_weights}
\begin{minipage}{\columnwidth}
\centering
\small
\begin{tabular*}{\linewidth}{@{\extracolsep{\fill}} llc}
\Xhline{2pt}
                              & Criteria                                          & Weight (\%) \\ \hline
\multirow{8}{*}{Consortium 1} & Cost                                              & 16          \\
                              & Overall network solution                           & 14          \\
                              & 24 hour support team                              & 13          \\
                              & Prior experience including past performance       & 12          \\
                              & Prices for ineligible services, products and fees & 14          \\
                              & Provisioning and installation time            & 11          \\
                              & Leveraging existing technology                    & 10          \\
                              & Reliability of equipment (new v. refurbished)     & 10          \\ \hline
\multirow{2}{*}{Consortium 2} & Cost                                              & 50          \\
                              & Prior experience including past performance       & 50          \\ \hline
\multirow{5}{*}{Consortium 3} & Prior experience, including past performance      & 20          \\
                              & Reliability of Service                            & 30          \\
                              & Other (solicitation compliance)                   & 10          \\
                              & Cost                                              & 30          \\
                              & Project mnagement plan                           & 10          \\ \hline
\multirow{6}{*}{Consortium 4} & Cost                                              & 20          \\
                              & Reliability of service                            & 20          \\
                              & Other (consideration of early termination fees)   & 15          \\
                              & Leverage existing resources                       & 15          \\
                              & Prior experience, including past performance      & 15          \\
                              & Technical support                                 & 15          \\ \Xhline{2pt}
\end{tabular*}
\vspace{2pt}
\parbox{\linewidth}{\scriptsize\justifying\textit{Notes:} The table shows actual criteria that four different consortia have reported to use in selecting the winning \ac{isp} in the competitive bidding process. The table reports the criteria and weights. While the consortium names have been masked, all four consortia are subsidy recipients.}
\end{minipage}
\end{table}

Notably, the auction winner is not the \ac{isp} that places the lowest bid. 
Instead, the consortium is at liberty to assign any weights to any objective or subjective criteria that it deems to be important \citep[p.3]{fcc_form_461}. 
Table \ref{tab:rfp_weights} shows four actual instances of criteria and weights that were used in \acp{rfp}. 
This allows consortia to engineer the criteria and weights to make a particular \ac{isp} win, even if the \ac{isp}'s bid is objectively uncompetitive. 

Even the legal language that is meant to prevent abuse seems vague and unenforceable: 
The \ac{rfp} submission system requires the consortium representative to certify under penalty of perjury that they have selected the 
\Mquote{method that costs the least after consideration of the features, quality of transmission, reliability, and other factors that the health care provider deems relevant to choosing a method of providing the required health care services.} \citep[p.2]{fcc_form_462}.

This process does not have the economic building blocks of an auction. 
An auction has quantitative metrics in place that are public information before the auction begins, so that a given set of bids results in a deterministically or probabilistically predictable outcome, a deviation from which could be appealed. 
In the case of \ac{rhc}, the process is based on subjective and unquantifiable measures that the consortium sets and enforces, and its subjectivity makes it unobjectionable. 

Instead, this process resembles a procurement mechanism with pure buyer discretion at the cost of a third party.
This unusual combination (the consortium sets the auction criteria, selects the winner, and makes taxpayers pay the price) gives the consortium the utmost negotiating leverage. 
Therefore, the \ac{isp}'s best response may not entail competitive bidding. 
Instead, an \ac{isp}'s winning strategy may be to surpass other bidders in bending the terms of the trade in favor of the consortium. 
Of course, there is a limit to how much the terms of the trade could be distorted before it becomes unattractive to the bidding \ac{isp}. 

We hypothesize that this process allows for devising a scheme in which a surplus is transferred from taxpayers to the consortium without cannibalizing the \ac{isp}'s revenue.
Each consortium may have members (\acp{hcp}) that are eligible to receive subsidies, and members that are ineligible.
Eligibility is determined before competitive bidding begins \citep[p.5]{fcc_form_460}.
So, when bids are being placed and negotiations are ongoing, both parties (consortium and \ac{isp}) know which lines of internet connection belong to eligible members and which lines belong to ineligible members. 
Both parties know that every \$1 of cross-subsidization transfers \$0.65 from taxpayers to the consortium in a way that is revenue-neutral to the \ac{isp}. 

Given that each \ac{rfp} has many details that may need clarification, consortia and \acp{isp} are allowed to engage in one-on-one discussions to clarify needs and expectations before submitting bids. 
Moreover, \ac{usac} allows \ac{isp} who bid in some auctions to act as a consortium representative or consultant in other auctions, provided the \ac{isp} maintains sufficient organizational separation between staff involved in bidding and those engaged in consortium consulting activities \citep[p.10]{fcc_form_460}. 
This may facilitate collaboration between consortia and \acp{isp}, weaken the competitive nature of the bidding process, and induce cartel-like behaviors. 

If a consortium and \ac{isp} deem the bidding process as one stage of a repeated game, a cooperative equilibrium may be maintained in the long run.
In this setting, collusion is lucrative and likely. 
In 2023, an \ac{isp} named GCI Communication Corp agreed to pay \$42.1 million to settle a fraud allegation case where it was accused of violating the competitive bidding regulations and colluding with an \ac{hcp} to inflate prices and knowingly receive excessive subsidy payments during 2015-2018 (FCC order \href{https://docs.fcc.gov/public/attachments/DA-23-380A1.pdf}{DA 23-380}). 
The data shows that, during 2012-2023, GCI was the service provider to 4275 lines of subsidy, and it has received \$1,237,016,437 in subsidy payments.
So, the \$42.1 million settlement claws back only 3.4\% of the total subsidies GCI collected since 2012. 
This creates a clear economic incentive to engage in fraudulent behavior: the potential gains from inflating claims far exceed the financial risk posed by penalty and enforcement. 
In effect, the penalty functions as a minor operational expense rather than a meaningful deterrent, rendering the expected payoff of collusion overwhelmingly positive.

\section{Model proofs and derivations}
\label{sec:appx_model}

\subsection{Proof of Proposition~\ref{prop:cap}}
\label{sec:appx_cap}

When the cap binds ($p\ge\bar{p}$), the consumer price is pinned at $\bar{p}$ and demand is $D(\bar{p})$. The firm's objective~\eqref{eq:cap_problem} is
\[
\pi(p)=(p-c)\,D(\bar{p})-\alpha\,\Phi(p-\bar{p}).
\]
The first-order condition with respect to $p$ is
\[
D(\bar{p})-\alpha\,\Phi'(p-\bar{p})=0.
\]
Since $\Phi''>0$ and $\Phi'(0)=0$, the marginal penalty $\Phi'$ is strictly increasing from zero. Hence the equation $\Phi'(p-\bar{p})=D(\bar{p})/\alpha$ has a unique solution
\[
p^{cap}-\bar{p}=(\Phi')^{-1}\!\!\left(\frac{D(\bar{p})}{\alpha}\right),
\]
establishing~\eqref{eq:pcap}. Because $(\Phi')^{-1}$ is increasing, $p^{cap}$ is increasing in $D(\bar{p})$ and decreasing in $\alpha$. The second-order condition $-\alpha\,\Phi''(p-\bar{p})<0$ holds by $\Phi''>0$, confirming a maximum.

Under the quadratic penalty $\Phi(\delta)=\tfrac{\gamma}{2}\delta^2$, $\Phi'(\delta)=\gamma\delta$, so $p^{cap}-\bar{p}=D(\bar{p})/(\alpha\gamma)$.
Government outlays are
\[
G^{cap}=(p^{cap}-\bar{p})\,D(\bar{p})=\frac{D(\bar{p})}{\alpha\gamma}\cdot D(\bar{p})=\frac{D(\bar{p})^2}{\alpha\gamma}.
\tag*{$\square$}
\]

\subsection{Proof of Proposition~\ref{prop:cap_to_av}}
\label{sec:appx_av}

\paragraph{Part~(i).}
By the effective-marginal-cost representation~\eqref{eq:pcav_emc},
$p_c^{adv}(\tau)=p^{no}\!\bigl(c(1-\tau)\bigr)$.
Since $dp^{no}/d\tilde{c}>0$, the map
$\tau\mapsto p_c^{adv}(\tau)$ is strictly decreasing.
The critical rate $\tau^*$ is defined by
$p_c^{adv}(\tau^*)=\bar{p}$~\eqref{eq:cond_barp}.
For $\tau\ge\tau^*$, monotonicity gives
$p_c^{adv}(\tau)\le p_c^{adv}(\tau^*)=\bar{p}$, with strict
inequality when $\tau>\tau^*$.
Since the cap binds, $p_c^{cap}=\bar{p}$.

\paragraph{Part~(ii).}
Since $D'<0$ and $p_c^{adv}(\tau)\le\bar{p}=p_c^{cap}$ by Part~(i),
\[
Q^{adv}(\tau)=D\!\bigl(p_c^{adv}(\tau)\bigr)\ge D(\bar{p})=Q^{cap}.
\]

\paragraph{Part~(iii).}
Define $E(p)\equiv p\,D(p)$.
Then
\begin{equation}
E'(p)=D(p)+p\,D'(p)=D(p)\bigl(1-\varepsilon_D(p)\bigr),
\label{eq:appx_Eprime}
\end{equation}
where $\varepsilon_D(p)\equiv -pD'(p)/D(p)$ is the price elasticity of demand.
If $\varepsilon_D(p)\le 1$ on $[p_c^{adv}(\tau),\,\bar{p}]$, then $E'(p)\ge 0$ on that interval, so $E$ is nondecreasing.
Since $p_c^{adv}(\tau)\le\bar{p}$,
\[
E^{adv}(\tau)=E\!\bigl(p_c^{adv}(\tau)\bigr)\le E(\bar{p})=E^{cap}.
\]

\paragraph{Part~(iv).}
Under the quadratic penalty $\Phi(\delta)=\tfrac{\gamma}{2}\delta^2$,
Proposition~\ref{prop:cap} gives
$G^{cap}=D(\bar{p})^2/(\alpha\gamma)$.
For any fixed $\tau\in(0,1)$, the ad valorem equilibrium price
$p^{adv}(\tau)$ is finite, since it solves the smooth first-order
condition~\eqref{eq:foc_av} with bounded primitives. Therefore,
$G^{adv}(\tau)=\tau\,p^{adv}(\tau)\,Q^{adv}(\tau)$
is finite and independent of $(\alpha,\gamma)$.
As $\alpha\gamma\to 0$, $G^{cap}\to\infty$ while $G^{adv}(\tau)$
remains bounded.
Hence $G^{adv}(\tau)<G^{cap}$ for sufficiently small $\alpha\gamma$.
\hfill$\square$

\subsection{Proof of Proposition~\ref{prop:hump}}
\label{sec:appx_hump}

From~\eqref{eq:CostMini2}, the consortium's reduced-form objective is
\[
\Psi(\kappa)=B\,\kappa-\alpha\,\Phi\!\bigl(B(\kappa-1);\,R\bigr),
\qquad \kappa\in[1,\,1+R].
\]

\paragraph{Part~(i).}
Differentiating,
$\Psi'(\kappa)=B-\alpha\,B\,\Phi'\!\bigl(B(\kappa-1);\,R\bigr)
=B\bigl[1-\alpha\,\Phi'\!\bigl(B(\kappa-1);\,R\bigr)\bigr]$.
Evaluating at $\kappa=1$:
\[
\Psi'(1)=B\bigl[1-\alpha\,\Phi'(0;\,R)\bigr]=B>0,
\]
where the last equality uses condition~(ii), $\Phi'(0;\,R)=0$.
Since $\Psi'(1)>0$, a small increase in $\kappa$ above~$1$ raises the consortium's payoff, so $\kappa^*>1$ and $\tilde{p}_E>p_E^{adv}$.

\paragraph{Part~(ii).}
The second derivative is
\[
\Psi''(\kappa)=-\alpha\,B^2\,\Phi''\!\bigl(B(\kappa-1);\,R\bigr)<0
\qquad\text{for all }\kappa\ge 1,
\]
by condition~(iv), $\Phi''>0$.
Hence $\Psi$ is strictly concave on $[1,\,1+R]$, and the maximizer $\kappa^*$ is unique.

\paragraph{Part~(iii).}
By part~(i), $\Psi'(1)>0$, so $\kappa^*>1$.
At the upper boundary,
$\Psi'(1+R)=B\bigl[1-\alpha\,\Phi'(BR;\,R)\bigr]$.
If $\alpha\,\Phi'(BR;\,R)>1$, then $\Psi'(1+R)<0$.
Since $\Psi'$ is continuous, strictly decreasing (by $\Psi''<0$), positive at $\kappa=1$, and negative at $\kappa=1+R$, the intermediate value theorem yields a unique interior $\kappa^*\in(1,\,1+R)$ satisfying $\Psi'(\kappa^*)=0$, which gives~\eqref{eq:foc_kappa}.
If $\alpha\,\Phi'(BR;\,R)\le 1$, then $\Psi'(\kappa)\ge 0$ for all $\kappa\in[1,\,1+R]$, so $\Psi$ is nondecreasing on the entire domain and $\kappa^*=1+R$.

\paragraph{Part~(iv).}
Under $\Phi(\delta;\,R)=\tfrac{\gamma R}{2}\,\delta^2$, verify:
(i)~$\Phi(0;\,R)=0$;
(ii)~$\Phi'(0;\,R)=\gamma R\cdot 0=0$;
(iii)~$\Phi'(\delta;\,R)=\gamma R\,\delta>0$ for $\delta>0$;
(iv)~$\Phi''(\delta;\,R)=\gamma R>0$;
(v)~$\partial^2\Phi/\partial\delta\,\partial R=\gamma>0$.
The first-order condition~\eqref{eq:foc_kappa} becomes
$\alpha\,\gamma R\,B(\kappa^*-1)=1$, yielding
$\kappa^*-1=1/(\alpha\gamma B\,R)$.
This is feasible ($\kappa^*-1\le R$) if and only if
$1/(\alpha\gamma B\,R)\le R$, i.e., $R\ge R^*\equiv 1/\sqrt{\alpha\gamma B}$.
When $R<R^*$, the interior solution exceeds the upper bound, so the constraint binds and $\kappa^*=1+R$.
Therefore
\[
\kappa^*(R)=\min\!\left\{1+R,\;1+\frac{1}{\alpha\gamma B\,R}\right\}.
\]
For $R\le R^*$: $\kappa^*(R)=1+R$, strictly increasing.
For $R\ge R^*$: $\kappa^*(R)=1+1/(\alpha\gamma B\,R)$, strictly decreasing.
Both branches meet at $R=R^*$ with $\kappa^*(R^*)=1+R^*$.
As $R\to 0$, $\kappa^*\to 1$; as $R\to\infty$, $\kappa^*\to 1$.
Hence $\kappa^*$ is hump-shaped, attaining its unique maximum at $R^*$.
\hfill$\square$

\subsection{Closed-form solutions under linear demand}
\label{sec:appx_linear}

Throughout this subsection, specialize to linear demand $D(p_c)=a-bp_c$ with $a,b>0$ and $a>bc$ (to ensure positive equilibrium quantities), and use the quadratic penalty $\Phi(\delta)=\tfrac{\gamma}{2}\delta^2$ ($\gamma>0$).

\subsubsection*{No-subsidy monopoly ($M0$)}

The monopolist solves $\max_{p\ge 0}\,(p-c)(a-bp)$.
The first-order condition $a-bp-(p-c)b=0$ yields
\begin{equation}
p^{no}=\frac{1}{2}\!\left(\frac{a}{b}+c\right),
\qquad
Q^{no}=\frac{1}{2}(a-bc).
\label{eq:appx_no}
\end{equation}
Profits, consumer expenditures, and government outlays are
\[
\pi^{no}=\frac{(a-bc)^2}{4b},
\qquad
E^{no}=p^{no}\,Q^{no}=\frac{(a+bc)(a-bc)}{4b},
\qquad
G^{no}=0.
\]

\subsubsection*{Price-cap regime}

Under a binding cap ($p\ge\bar{p}$), demand is $D(\bar{p})=a-b\bar{p}$.
Proposition~\ref{prop:cap} with $\Phi'(\delta)=\gamma\delta$ gives
\begin{equation}
p^{cap}=\bar{p}+\frac{a-b\bar{p}}{\alpha\gamma},
\qquad
G^{cap}=\frac{(a-b\bar{p})^2}{\alpha\gamma}.
\label{eq:appx_cap}
\end{equation}
The cap binds whenever the unconstrained monopoly price exceeds $\bar{p}$, i.e., $p^{no}>\bar{p}$, which requires $\bar{p}<\tfrac{1}{2}(a/b+c)$.

\subsubsection*{Ad valorem regime}

The firm maximizes $(p-c)(a-b(1-\tau)p)$.
The first-order condition yields
\begin{equation}
p^{adv}(\tau)=\frac{1}{2}\!\left(\frac{a}{b(1-\tau)}+c\right).
\label{eq:appx_pAV}
\end{equation}
The consumer price, quantity, and government outlays are
\begin{align}
p_c^{adv}(\tau) &=(1-\tau)\,p^{adv}(\tau)
=\frac{1}{2}\!\left(\frac{a}{b}+c(1-\tau)\right),
\label{eq:appx_pcAV}\\[4pt]
Q^{adv}(\tau) &=a-b\,p_c^{adv}(\tau)
=\frac{1}{2}\bigl(a-bc(1-\tau)\bigr),
\label{eq:appx_QAV}\\[4pt]
G^{adv}(\tau) &=\tau\,p^{adv}(\tau)\,Q^{adv}(\tau).
\label{eq:appx_GAV}
\end{align}

\subsubsection*{Sufficient condition for ad valorem dominance}

Condition~\eqref{eq:cond_barp} requires $p_c^{adv}(\tau)\le\bar{p}$.
Substituting~\eqref{eq:appx_pcAV}:
\begin{equation}
\frac{1}{2}\!\left(\frac{a}{b}+c(1-\tau)\right)\le\bar{p}
\quad\Longleftrightarrow\quad
\tau\ge 1-\frac{2\bar{p}-a/b}{c},
\label{eq:appx_tau_condition}
\end{equation}
provided the right-hand side lies in $(0,1)$.

\subsubsection*{Fiscal comparison under linear demand}

From~\eqref{eq:appx_cap}, $G^{cap}=(a-b\bar{p})^2/(\alpha\gamma)$.
As $\alpha\gamma\to 0$, $G^{cap}\to\infty$.
For any fixed $\tau\in(0,1)$, $G^{adv}(\tau)$ is bounded, confirming Proposition~\ref{prop:cap_to_av}(iv).

To illustrate the comparison at finite $\alpha\gamma$, note that $G^{adv}(\tau)<G^{cap}$ whenever
\[
\tau\,p^{adv}(\tau)\,Q^{adv}(\tau)<\frac{(a-b\bar{p})^2}{\alpha\gamma}.
\]
Using~\eqref{eq:appx_pAV}--\eqref{eq:appx_QAV}, the left-hand side is a function of $(\tau,a,b,c)$ that does not depend on $\alpha\gamma$, while the right-hand side is unbounded as $\alpha\gamma\to 0$.
Hence for any parameter configuration satisfying $a>bc$ and condition~\eqref{eq:appx_tau_condition}, there exists a threshold $\overline{\alpha\gamma}>0$ below which $G^{adv}(\tau)<G^{cap}$.

\subsubsection*{Consortium under linear demand}

With linear demand $D_j(p)=a_j-b_j p$ for each member $j\in\{E,I\}$, the monopoly equilibrium prices are
\[
p_E^{adv}=\frac{1}{2}\!\left(\frac{a_E}{b_E(1-\tau)}+c\right),
\qquad
p_I^{adv}=\frac{1}{2}\!\left(\frac{a_I}{b_I}+c\right).
\]
The reduced-form objects defined in~\eqref{eq:consortium_notation} become
\[
B=\tau\,p_E^{adv}\bigl(a_E-b_E(1-\tau)p_E^{adv}\bigr),
\qquad
R=\frac{p_I^{adv}(a_I-b_I p_I^{adv})}{p_E^{adv}(a_E-b_E(1-\tau)p_E^{adv})}.
\]
Proposition~\ref{prop:hump} then delivers the optimal distortion
\[
\kappa^*(R)=\min\!\left\{1+R,\;1+\frac{1}{\alpha\gamma B\,R}\right\},
\]
with peak distortion at $R^*=1/\sqrt{\alpha\gamma B}$.

For the symmetric-demand case $D_E=D_I=D$ with $a_E=a_I=a$, $b_E=b_I=b$:
\[
p_E^{adv}=\frac{1}{2}\!\left(\frac{a}{b(1-\tau)}+c\right),
\qquad
p_I^{adv}=\frac{1}{2}\!\left(\frac{a}{b}+c\right)=p^{no}.
\]
The revenue ratio simplifies to
\[
R=\frac{p^{no}\,D(p^{no})}{p_E^{adv}\,D\!\bigl((1-\tau)p_E^{adv}\bigr)},
\]
which is a function of $\tau$ alone (for given $a,b,c$).
As $\tau\to 0$, $p_E^{adv}\to p^{no}$ and $R\to 1$, while the subsidy base $B\to 0$, so $R^*\to\infty$ and $\kappa^*\to 1+R$: the constraint binds but the subsidy base is negligible.
As $\tau\to 1$, $p_E^{adv}\to\infty$ and $B$ grows, so $R^*\to 0$: enforcement dominates and $\kappa^*\to 1$.

\clearpage 
\section{Double/Debiased machine learning implementation}
\label{sec:machine_learning}

This section details the implementation of the machine learning method. 
We estimate the following \ac{plr} model:
\begin{equation}
\label{eq:PLR}
\bar{Y}_{jt} = \bm{S}_{jt}'\,\bm{\theta}_0 + g_0(\bm{X}_{jt}) + \varepsilon_{jt}, \qquad \mathbb{E}[\varepsilon_{jt} \mid \bm{X}_{jt}, \bm{S}_{jt}] = 0,
\end{equation}

\noindent where \(\bm{\theta}_0\) is the parameter vector of interest, \(g_0(\bm{X}_{jt})\) is an unknown confounding function, and \(\bm{X}_{jt} \in \mathbb{R}^{p}\) is a vector of covariates. The nuisance functions are:
\begin{equation}
\label{eq:nuisances}
\ell_0(\bm{x}) = \mathbb{E}[\bar{Y}_{jt} \mid \bm{X}_{jt} = \bm{x}], \qquad \bm{m}_0(\bm{x}) = \mathbb{E}[\bm{S}_{jt} \mid \bm{X}_{jt} = \bm{x}] \in \mathbb{R}^{N_d},
\end{equation}
where \(\bm{X}_{jt}\) denotes the conditioning variables.
The \ac{dml} estimator leverages the Neyman-orthogonal score for the \ac{plr} model:
$\psi(\bm{W}_{jt}; \bm{\theta}, \eta) = \big(\bm{S}_{jt} - \bm{m}(\bm{X}_{jt})\big) \Big(\bar{Y}_{jt} - \ell(\bm{X}_{jt}) - \big(\bm{S}_{jt} - \bm{m}(\bm{X}_{jt})\big)'\bm{\theta}\Big),$
where \(\bm{W}_{jt} = (\bar{Y}_{jt}, \bm{S}_{jt}, \bm{X}_{jt})\), \(\eta = (\ell, \bm{m})\), and \(\mathbb{E}[\psi(\bm{W}_{jt}; \bm{\theta}_0, \eta_0)] = \bm{0}\).
The score satisfies Neyman orthogonality: $\left.\frac{\partial}{\partial t} \mathbb{E}\left[\psi(\bm{W}_{jt}; \bm{\theta}_0, \ell_0 + t h_\ell, \bm{m}_0)\right]\right|_{t=0} = 0,$ 
and $\left.\frac{\partial}{\partial t} \mathbb{E}\left[\psi(\bm{W}_{jt}; \bm{\theta}_0, \ell_0, \bm{m}_0 + t h_m)\right]\right|_{t=0} = 0$ for regular perturbations \(h_\ell, h_m\), ensuring robustness to nuisance estimation errors.
The nuance parameters are estimated via plug-in method that employs machine learning techniques. In our case, we  employ an ensemble learning method that combines multiple decision trees to produce robust and flexible predictions for regression tasks. Each tree is trained on a bootstrap sample of the data, and predictions are aggregated to reduce variance and improve generalization.

Relative to \ac{ols}, \ac{dml} allows $\bm{X}_{jt}$ to enter through a general, unknown function $g_0(\cdot).$
This method removes first-order sensitivity of the target to nuisance estimation error via orthogonalization.
We mitigate the problem of overfitting through sample splitting and out-of-fold prediction.
Specifically, our estimator is based on a \(K\)-fold cross-fitting procedure following \citet{chernozhukov2018double}:

\begin{enumerate}
\item \textbf{Split.} Randomly partition the sample \(\{1, \dots, n\}\) into \(K\) disjoint folds \(I_1, ..., I_K\), each containing approximately \(n/K\) observations.
\item \textbf{Train nuisances off-fold.} For each fold \(k\), fit \(\hat{\ell}^{(-k)}(\bm{x})\) and \(\hat{\bm{m}}^{(-k)}(\bm{x})\) on the complement sample \(I_k^c\) using plug-in machine learning methods.
\item \textbf{Predict on held-out fold.} For each \((j,t) \in I_k\), compute the cross-fit residuals using the out-of-fold nuance estimators:
\[
\tilde{Y}_{jt} = \bar{Y}_{jt} - \hat{\ell}^{(-k)}(\bm{X}_{jt}), \quad \tilde{\bm{S}}_{jt} = \bm{S}_{jt} - \hat{\bm{m}}^{(-k)}(\bm{X}_{jt}),
\]
where \(\hat{\ell}^{(-k)}\) and \(\hat{\bm{m}}^{(-k)}\) are machine learning estimates of \(\ell_0\) and \(\bm{m}_0\) trained on the complement of fold \(k\).
\item \textbf{Aggregate and estimate.} Stack residuals across all folds and compute \(\hat{\bm{\theta}}\) using the following equation:
\begin{equation*}
\hat{\bm{\theta}} = \left( \frac{1}{n} \sum_{j,t} \tilde{\bm{S}}_{jt} \tilde{\bm{S}}_{jt}' \right)^{-1} \left( \frac{1}{n} \sum_{j,t} \tilde{\bm{S}}_{jt} \tilde{Y}_{jt} \right).
\end{equation*}
\item \textbf{Variance.} The asymptotic variance of our estimators is evaluated using the sandwich form:
\begin{align*}
\widehat{\mathrm{Var}}(\hat{\bm{\theta}}) &= \hat{J}^{-1} \hat{\Sigma} \hat{J}^{-1} / n,
\end{align*}
where $\hat{J} = -\frac{1}{n} \sum_{j,t} \tilde{\bm{S}}_{jt} \tilde{\bm{S}}_{jt}',$ the estimated covariance matrix is  
$\hat{\Sigma} = \frac{1}{n} \sum_{j,t} \psi_{jt}(\hat{\bm{\theta}}) \psi_{jt}(\hat{\bm{\theta}})',$ where 
$\psi_{jt}(\hat{\bm{\theta}}) = \tilde{\bm{S}}_{jt} \big( \tilde{Y}_{jt} - \tilde{\bm{S}}_{jt}' \hat{\bm{\theta}} \big).$
\end{enumerate}
\clearpage 
\section{Data clean-up and exclusion process}
\label{sec:apx_data_cleanup}

This appendix explains the steps taken to turn the original data into the net sample used in the baseline analysis.
The descriptions below closely follow Table~\ref{tab:ab_exclusion}.
The raw 2013--2014 sample consists of 36,576 request-level observations.
Subsidy requests could be for one year, a fraction of a year, or multiple years. 
To mitigate the potential biases that this variability could induce, we discarded non-annual contract. 
Table \ref{tab:robustness_combined} column~r3 confirms that this exclusion step is inconsequential. 

\begin{table}[htbp]
\centering
\caption{Data clean-up steps.}
\label{tab:ab_exclusion}
\begin{minipage}{\columnwidth}
\centering
\small
\begin{tabular*}{\linewidth}{@{\extracolsep{\fill}} lc}
\Xhline{2pt}
Data clean-up step & Sample size \\ \hline
Original sample & 36,576 \\
Limit to annual contracts & 13,370 \\
Limit category: Data; Leased/Tariffed Facilities or Services & 13,350 \\
Drop grandfathered: P1 obs with missing price info & 12,080 \\
Removed requests that were submitted or withdrawn & 12,038 \\
Remove cases for which we could not recover price & 11,798 \\
Drop Alaska & 11,308 \\
Drop if speed is missing & 11,290 \\
Limit to valid speed units & 11,245 \\
Keep speed above 50Kbps & 11,234 \\
Limit to speeds between 1-100 Mbps & 8,050 \\
Drop subsidy = 0 & 7,780 \\
One observation per HCP-year (TWFE aggregation) & 4,231 \\
Limit to HCPs that appear in both years & 1,940 \\ \Xhline{2pt}
\end{tabular*}
\vspace{2pt}
\parbox{\linewidth}{\scriptsize\justifying\textit{Notes:} Each row reports the number of remaining observations after the stated exclusion step. Counts are request-level through the speed and subsidy filters and become HCP-year-level after aggregation. All exclusion criteria are applied to every year in the data, but sample sizes shown here are restricted to the 2013--2014 baseline year pair used in the TWFE analysis.}
\end{minipage}
\end{table}

While \p2/\p2c\ subsidizes only internet plans, \p1\ may also cover other categories such as equipment, construction, infrastructure, management, and maintenance.
To avoid contamination, we limit the \p1\ sample to two categories of expense that refer to the cost of internet plans: \enquote{Data} or \enquote{Leased/Tariffed Facilities or Services}.

Since the inception of the program in 1997, the \ac{fcc} has reclassified rurality boundaries several times. 
Each reclassification causes two events: 
some formerly urban \acp{hcp} would become rural and eligible for subsidies, whereas some formerly rural \acp{hcp} would become urban.
\acp{hcp} in the latter group may be grandfathered to remain eligible. 
If the latter group chooses to remain in \p1, they are no longer obligated to report a comparison of rural and urban prices because they are now located in an urban area. 
Given that in \p1, the difference between rural and urban prices is key to recovering the subsidy amount, and it is unobservable for grandfathered \acp{hcp}, we discarded grandfathered observations in \p1.

There are multiple values for request status in the data.
Requests are initially ``submitted''.
A request may be a ``duplicate'' of another request.
\acp{hcp} may choose to ``withdraw'' a request.
Otherwise, a request undergoes processing to be eventually ``approved'', ``partially approved'', or ``denied''.
A ``committed'' status means that the request has been approved and the fund disbursement is done or underway.
That said, we limited the sample to ``committed'' requests.

Next, we limit the sample to observations for which we were able to recover both subsidy and price. 
The recovery process is different for \p1\ vs \p2/\p2c. 
Under \p1, a subsidy request must mention the rural and urban price. 
We set the price equal to the rural price, and we set the subsidy equal to the difference between the rural and urban prices. 
Under \p2/\p2c, the request specifies the subsidy amount. 
In this case, we take the subsidy amount as given, and we set the price equal to the subsidy divided by 0.65. 

Next, we drop \acp{hcp} that are located in Alaska. 
Alaskan observations include many outliers featuring extremely high prices and very low speeds. 
This is an artifact of Alaska's unique demography and geography. 
Alaska has the 4th lowest population among \ac{us} states, but it is the largest state by area (2.47 times the size of the second largest state, Texas).%
\footnote{\href{https://en.wikipedia.org/wiki/List_of_U.S._states_and_territories_by_area}{List of U.S. states and territories by area}, \href{https://en.wikipedia.org/wiki/List_of_U.S._states_and_territories_by_population}{List of U.S. states and territories by population}} 
While the average population density in the \ac{us} is 37 people per square kilometer, the population density in Alaska is 0.50.%
\footnote{\href{https://en.wikipedia.org/wiki/List_of_U.S._states_and_territories_by_population_density}{List of U.S. states and territories by population density}}
In per-capita terms, it is prohibitively costly to provide landline internet connection to such a sparsely populated state across vast geographies and rough terrain. 
So, expensive technologies such as satellite and \ac{mpls}, that make up only 5\% of users in other states, make up 81\% of the users in Alaska. 
As a result, Alaskan internet is 20 times costlier than the national average and it ranks last in terms of speed. 
That said, following \citep{rabbani_bb1, rabbani_bb2}, we exclude Alaska from our baseline analysis to ensure that the results are not influenced by the anomalies of Alaskan internet.
Nevertheless, the exclusion of Alaska is inconsequential (Table \ref{tab:robustness_combined}, column~r5).

Throughout this paper, we take download speed as the measure of internet speed, i.e., the measure of the quantity of consumption of internet. 
This assumption was made to overcome a data limitation. 
The data reports internet speed differently under each program. 
Under \p2/\p2c, both download speed and upload speed are reported. 
Under \p1, the data reports bandwidth as a single measure of speed. 
We believe this corresponds to download speed because it is usually the case that, for a given internet plan, download speed is greater than or equal to the upload speed.%
\footnote{Based on the \href{https://itif.org/publications/2021/05/12/broadband-myth-series-do-we-need-symmetrical-upload-and-download-speeds/}{Information Technology and Innovation Foundation}}
We verified that this condition holds in 98.66\% of the observable sample.
That said, if all programs report download speed, and some programs (\p1) do not report upload speed, then the only available measure of the quantity of consumption would be download speed.

However, recovering download speed was a separate challenge. 
The data reports bandwidth and download speed as a free-form user input that combines speed and speed unit. 
This enables non-standard inputs (reporting in gigabits per second, megabits per second, etc.) and erroneous data entry such as invalid speed units. 
To address this, we limited the sample to observations for which we could identify one of the following speed units: 
``GB'', ``Gbps'', ``KB'', ``Kbps'', ``MB'', ``Mbps'', or ``Gigabyte per second''. 
Then we converted all speed units to the \ac{mbps} equivalent. 
We discarded observations for which speed was missing or the speed unit was not recoverable.
To ensure that this data exclusion step is not distorting the data, Table \ref{tab:robustness_combined}, column~(r2) limits the sample to observations that verbatim specified speed in ``\ac{mbps}''. The results confirm the baseline findings.

Since the program still subsidizes legacy technologies, very low speeds, such as dial-up connections, are commonly found in the data.
On the other hand, some \acp{hcp} have established data centers with extremely high speeds.
We limit the sample to speeds that are in 1--100\ac{mbps}.
The robustness check in Table \ref{tab:robustness_combined}, column~(r4) shows that the results hold if all speed levels are included.

Next, we drop observations for which the subsidy amount is zero, as these do not represent subsidized internet access.
The next step is the process of aggregating request-level data to \ac{hcp}-level. 
A given \ac{hcp} is allowed to request as many lines of internet subsidy as they need. Many \acp{hcp} have tens or hundreds of concurrent lines of subsidy. 
The data reports year and \ac{hcp} ID. 
Using these two measures, we aggregate all requests of the same \ac{hcp} in the same year into one observation. 
That is, the \ac{hcp}-level internet speed is the sum of internet speeds that share the same \ac{hcp} ID and year. 
A similar approach defines \ac{hcp}-level measures of price and subsidy. 

After aggregating to the \ac{hcp}-year level, the 2013--2014 sample contains 4,231 observations.
But not all \acp{hcp} that appear in 2013 appear in 2014, and vice versa.
To create a balanced panel, we limit the sample to \acp{hcp} that appear in both years, yielding a net sample of 1,940 observations.

\clearpage 
\section{Robustness checks}
\label{sec:robust}

This appendix discusses the robustness of the baseline findings to alterations to the sample and specification.
Table~\ref{tab:robustness_combined} reports the results for nine variants, each modifying one aspect of the baseline.
The columns are discussed in order below.

\begin{table}[htbp]
\centering
\caption{Robustness checks: continuous treatment, 2014.}
\label{tab:robustness_combined}
\small
\adjustbox{max width=\linewidth}{%
\begin{tabular}{lccccccccc}
\Xhline{2pt}
 & \shortstack[c]{\textbf{Excluding}\\\textbf{medical}\\\textbf{schools}\\{\scriptsize (r1)}} & \shortstack[c]{\textbf{Mbps}\\\textbf{speeds}\\\textbf{only}\\{\scriptsize (r2)}} & \shortstack[c]{\textbf{All}\\\textbf{contract}\\\textbf{durations}\\{\scriptsize (r3)}} & \shortstack[c]{\textbf{All}\\\textbf{speed}\\\textbf{ranges}\\{\scriptsize (r4)}} & \shortstack[c]{\textbf{Including}\\\textbf{Alaska}\\{\scriptsize (r5)}} & \shortstack[c]{\textbf{Rural}\\\textbf{health}\\\textbf{clinics}\\{\scriptsize (r6.1)}} & \shortstack[c]{\textbf{Non-}\\\textbf{profit}\\\textbf{hospitals}\\{\scriptsize (r6.2)}} & \shortstack[c]{\textbf{Community}\\\textbf{health}\\\textbf{centers}\\{\scriptsize (r6.3)}} & \shortstack[c]{\textbf{Ethernet}\\\textbf{only}\\{\scriptsize (r7)}} \\
\hline
\multicolumn{10}{c}{\textbf{Panel~A: \boldmath$\ln(\text{price})$}} \\
\hline
$\tau_{01}$ & -1.252*** & -1.821*** & -0.878*** & -1.205*** & -1.274*** & -0.692*** & -0.982*** & -4.693*** & -1.895*** \\
 & (0.086) & (0.066) & (0.062) & (0.071) & (0.078) & (0.134) & (0.100) & (0.439) & (0.368) \\
 & [0.000] & [0.000] & [0.000] & [0.000] & [0.000] & [0.000] & [0.000] & [0.000] & [0.000] \\
$\tau_{02}$ & 0.555*** & 0.446*** & -0.223** & 0.497*** & 0.548*** & 0.518*** & 0.735*** & -0.908 &  \\
 & (0.136) & (0.071) & (0.091) & (0.142) & (0.128) & (0.108) & (0.208) & (0.743) &  \\
 & [0.000] & [0.000] & [0.014] & [0.000] & [0.000] & [0.000] & [0.000] & [0.225] &  \\
$\tau_{02} - \tau_{01}$ & 1.806*** & 2.267*** & 0.655*** & 1.701*** & 1.822*** & 1.210*** & 1.717*** & 3.785*** &  \\
 & (0.147) & (0.092) & (0.098) & (0.149) & (0.139) & (0.153) & (0.208) & (0.923) &  \\
 & [0.000] & [0.000] & [0.000] & [0.000] & [0.000] & [0.000] & [0.000] & [0.000] &  \\
$R^2$ & 0.385 & 0.686 & 0.467 & 0.495 & 0.388 & 0.423 & 0.526 & 0.848 & 0.460 \\
\hline
\multicolumn{10}{c}{\textbf{Panel~B: \boldmath$\ln(\text{subsidy})$}} \\
\hline
$\tau_{01}$ & -1.259*** & -1.778*** & -0.355*** & -1.279*** & -1.277*** & -0.800*** & -0.883*** & -5.183*** & -2.093*** \\
 & (0.103) & (0.083) & (0.041) & (0.084) & (0.094) & (0.164) & (0.129) & (0.477) & (0.416) \\
 & [0.000] & [0.000] & [0.000] & [0.000] & [0.000] & [0.000] & [0.000] & [0.000] & [0.000] \\
$\tau_{02}$ & 0.283* & 0.167* & 0.146** & 0.252 & 0.287* & 0.192 & 0.558** & -1.237 &  \\
 & (0.164) & (0.090) & (0.060) & (0.169) & (0.154) & (0.132) & (0.268) & (0.807) &  \\
 & [0.085] & [0.063] & [0.015] & [0.135] & [0.063] & [0.146] & [0.038] & [0.129] &  \\
$\tau_{02} - \tau_{01}$ & 1.542*** & 1.945*** & 0.501*** & 1.531*** & 1.564*** & 0.993*** & 1.441*** & 3.946*** &  \\
 & (0.178) & (0.116) & (0.065) & (0.176) & (0.167) & (0.187) & (0.269) & (1.002) &  \\
 & [0.000] & [0.000] & [0.000] & [0.000] & [0.000] & [0.000] & [0.000] & [0.000] &  \\
$R^2$ & 0.310 & 0.564 & 0.457 & 0.464 & 0.313 & 0.273 & 0.435 & 0.850 & 0.420 \\
\hline
\multicolumn{10}{c}{\textbf{Panel~C: \boldmath$\ln(\text{HCP net cost})$}} \\
\hline
$\tau_{01}$ & -0.720*** & -1.541*** & -1.019*** & -0.615*** & -0.752*** & -0.293 & -0.777*** & -2.263*** & -1.136*** \\
 & (0.075) & (0.092) & (0.104) & (0.064) & (0.069) & (0.203) & (0.102) & (0.343) & (0.390) \\
 & [0.000] & [0.000] & [0.000] & [0.000] & [0.000] & [0.149] & [0.000] & [0.000] & [0.004] \\
$\tau_{02}$ & 1.739*** & 1.674*** & 0.078 & 1.631*** & 1.699*** & 1.868*** & 1.634*** & 0.586 &  \\
 & (0.119) & (0.100) & (0.151) & (0.127) & (0.113) & (0.163) & (0.213) & (0.580) &  \\
 & [0.000] & [0.000] & [0.604] & [0.000] & [0.000] & [0.000] & [0.000] & [0.315] &  \\
$\tau_{02} - \tau_{01}$ & 2.459*** & 3.214*** & 1.098*** & 2.247*** & 2.452*** & 2.161*** & 2.411*** & 2.850*** &  \\
 & (0.128) & (0.129) & (0.163) & (0.133) & (0.123) & (0.231) & (0.213) & (0.721) &  \\
 & [0.000] & [0.000] & [0.000] & [0.000] & [0.000] & [0.000] & [0.000] & [0.000] &  \\
$R^2$ & 0.433 & 0.589 & 0.406 & 0.506 & 0.424 & 0.496 & 0.494 & 0.710 & 0.384 \\
\hline
$N$ & 1,934 & 1,432 & 4,882 & 2,506 & 2,238 & 530 & 804 & 192 & 218 \\
\Xhline{2pt}
\end{tabular}}
\vspace{2pt}
\parbox{\linewidth}{\scriptsize\justifying\textit{Notes:} Each column reports TWFE continuous treatment estimates for 2014 under a different sample restriction. (r1)~excludes medical schools; (r2)~restricts to Mbps-unit speeds; (r3)~includes all contract durations; (r4)~removes speed range restrictions; (r5)~includes Alaska; (r6.1)--(r6.3)~restrict to a single HCP entity type (rural health clinics, non-profit hospitals, community health centers); (r7)~restricts to Ethernet service type. $\tau_{01}$ is the treatment effect of switching from \p1{} to \p2{}, and $\tau_{02}$ is the effect of switching from \p1{} to \p2c{}. $\tau_{02} - \tau_{01}$ is the estimated difference. The number of observations is identical across all three panels. Standard errors in parentheses; $p$-values in brackets. Significance: $^{***}\, p < 0.01$, $^{**}\, p < 0.05$, $^{*}\, p < 0.1$.}
\end{table}

\textit{Excluding medical schools (r1).}
The baseline sample includes a small number of post-secondary educational institutions (medical schools).
Because these entities differ from typical \acp{hcp} in size and internet needs, column~(r1) excludes them.
The results are virtually identical to the baseline.

\textit{Mbps speeds only (r2).}
As discussed in Section~\ref{sec:apx_data_cleanup}, converting free-form speed entries to a common unit required assumptions about non-standard speed units (e.g., GB, MB, KB).
If certain errors are clustered in one program, they could introduce bias.
Column~(r2) restricts the sample to observations that verbatim reported speed in \ac{mbps}, eliminating any unit-conversion assumptions.
The results support the baseline findings.

\textit{All contract durations (r3).}
The baseline excludes non-annual subsidy contracts.
For example, if an \ac{hcp} began receiving subsidies mid-year, the baseline discards the partially funded year.
Column~(r3) retains all contract durations.
The sample size substantially rises, and the results remain qualitatively unchanged.

\textit{All speed ranges (r4).}
The baseline limits speeds to 1--100\ac{mbps} to mitigate the effect of outliers at both ends: legacy technologies such as dial-up at the low end, and data-center-grade connections at the high end.
Column~(r4) removes the speed restriction entirely.
If anything, the results become larger and more significant, strongly supporting the baseline findings.

\textit{Including Alaska (r5).}
We excluded Alaska from the baseline for the reasons explained in Section~\ref{sec:apx_data_cleanup}.
Column~(r5) retains Alaskan observations.
The results remain qualitatively intact.

\textit{Single \ac{hcp} types (r6.1--r6.3).}
Different \ac{hcp} types may have different internet needs, and a concentration of specific types in one program could bias the estimates.
To investigate, we estimate the effects within three \ac{hcp} types that have sufficient sample sizes across programs: rural health clinics (r6.1), non-profit hospitals (r6.2), and community health centers (r6.3).
The results corroborate the baseline findings.

\textit{Ethernet only (r7).}
The baseline keeps all internet service types, listed in Table~\ref{tab:bb_summary_pooledOLS}.
Some technologies may be inherently more expensive for reasons such as reliability or responsiveness, and clustering of expensive technologies in one program could induce bias.
Column~(r7) limits the data to Ethernet, the most common service type found across all three programs.
The much smaller sample reduces statistical power, and several coefficients cannot be estimated.
Yet, to the extent that estimates could be produced, the results confirm the baseline findings.

\clearpage
\section{Supporting tables and figures}
\label{sec:tab_fig}

\begin{table}[htbp]
\centering
\caption{TWFE continuous regression results by year pair.}
\label{tab:bj_twfe_allyears}
\small
\adjustbox{max width=\linewidth}{%
\begin{tabular}{lcccccccc}
\Xhline{2pt}
 & 2014 & 2015 & 2016 & 2017 & 2018 & 2019 & 2020 & 2021 \\ \hline
\multicolumn{9}{c}{\textbf{Panel A: $\ln(\text{price})$}} \\
\hline
$\tau_{01}$ & -1.249*** & -1.275*** & -1.044*** & -0.705*** & -0.499*** & -0.869*** & -0.949*** & -0.595*** \\
 & (0.085) & (0.086) & (0.101) & (0.083) & (0.062) & (0.133) & (0.179) & (0.170) \\
 & [0.000] & [0.000] & [0.000] & [0.000] & [0.000] & [0.000] & [0.000] & [0.000] \\
$\tau_{02}$ & 0.556*** & 0.008 & -0.896*** & 0.514*** & -0.328*** & -0.344 & -0.873*** & -0.594** \\
 & (0.136) & (0.143) & (0.295) & (0.162) & (0.073) & (0.311) & (0.214) & (0.260) \\
 & [0.000] & [0.953] & [0.002] & [0.002] & [0.000] & [0.269] & [0.000] & [0.022] \\
$\tau_{02} - \tau_{01}$ & 1.804*** & 1.284*** & 0.149 & 1.219*** & 0.171*** & 0.525* & 0.076 & 0.001 \\
 & (0.147) & (0.162) & (0.305) & (0.169) & (0.069) & (0.331) & (0.267) & (0.292) \\
 & [0.000] & [0.000] & [0.313] & [0.000] & [0.006] & [0.056] & [0.389] & [0.498] \\
$N$ & 1,940 & 3,658 & 3,950 & 4,388 & 5,832 & 7,004 & 7,290 & 7,812 \\
$R^2$ & 0.387 & 0.410 & 0.409 & 0.366 & 0.362 & 0.271 & 0.270 & 0.278 \\
\hline
\multicolumn{9}{c}{\textbf{Panel B: $\ln(\text{subsidy})$}} \\
\hline
$\tau_{01}$ & -1.262*** & -1.398*** & -1.118*** & -0.734*** & -0.563*** & -0.951*** & -0.735*** & -0.757*** \\
 & (0.102) & (0.093) & (0.105) & (0.086) & (0.063) & (0.134) & (0.181) & (0.170) \\
 & [0.000] & [0.000] & [0.000] & [0.000] & [0.000] & [0.000] & [0.000] & [0.000] \\
$\tau_{02}$ & 0.284* & 0.201 & -1.090*** & 0.474*** & -0.507*** & -0.623** & -0.850*** & -0.693*** \\
 & (0.164) & (0.153) & (0.306) & (0.168) & (0.074) & (0.314) & (0.217) & (0.261) \\
 & [0.084] & [0.191] & [0.000] & [0.005] & [0.000] & [0.047] & [0.000] & [0.008] \\
$\tau_{02} - \tau_{01}$ & 1.546*** & 1.599*** & 0.029 & 1.208*** & 0.056 & 0.328 & -0.115 & 0.065 \\
 & (0.177) & (0.174) & (0.317) & (0.175) & (0.070) & (0.334) & (0.271) & (0.293) \\
 & [0.000] & [0.000] & [0.464] & [0.000] & [0.210] & [0.163] & [0.665] & [0.413] \\
$N$ & 1,940 & 3,658 & 3,950 & 4,388 & 5,832 & 7,004 & 7,290 & 7,812 \\
$R^2$ & 0.312 & 0.402 & 0.403 & 0.356 & 0.387 & 0.269 & 0.266 & 0.279 \\
\hline
\multicolumn{9}{c}{\textbf{Panel C: $\ln(\text{HCP net cost})$}} \\
\hline
$\tau_{01}$ & -0.704*** & -0.390*** & -0.578*** & -0.285*** & -0.084 & -0.524*** & -0.424** & 0.294* \\
 & (0.074) & (0.088) & (0.098) & (0.091) & (0.071) & (0.138) & (0.180) & (0.170) \\
 & [0.000] & [0.000] & [0.000] & [0.002] & [0.233] & [0.000] & [0.019] & [0.084] \\
$\tau_{02}$ & 1.740*** & 0.100 & 0.263 & 0.848*** & 0.247*** & 0.958*** & -0.370* & 0.053 \\
 & (0.119) & (0.145) & (0.288) & (0.177) & (0.082) & (0.324) & (0.216) & (0.261) \\
 & [0.000] & [0.489] & [0.363] & [0.000] & [0.003] & [0.003] & [0.086] & [0.839] \\
$\tau_{02} - \tau_{01}$ & 2.445*** & 0.490*** & 0.841*** & 1.133*** & 0.331*** & 1.482*** & 0.054 & -0.241 \\
 & (0.128) & (0.165) & (0.298) & (0.185) & (0.078) & (0.345) & (0.270) & (0.292) \\
 & [0.000] & [0.001] & [0.002] & [0.000] & [0.000] & [0.000] & [0.420] & [0.796] \\
$N$ & 1,940 & 3,658 & 3,950 & 4,388 & 5,832 & 7,004 & 7,290 & 7,812 \\
$R^2$ & 0.431 & 0.321 & 0.375 & 0.313 & 0.305 & 0.262 & 0.255 & 0.273 \\
\Xhline{2pt}
\end{tabular}}
\vspace{2pt}
\parbox{\linewidth}{\scriptsize\justifying\textit{Notes:} Each column reports TWFE estimates with continuous treatment shares for the indicated year pair (base year $t-1$, treatment year $t$). $\tau_{01}$ is the estimated effect of switching from \p1{} to \p2{}, and $\tau_{02}$ is the effect of switching from \p1{} to \p2c{}. The row $\tau_{02} - \tau_{01}$ reports the estimated difference. Standard errors in parentheses; $p$-values in brackets. Significance: $^{***}\, p < 0.01$, $^{**}\, p < 0.05$, $^{*}\, p < 0.1$.}
\end{table}

\begin{table}[htbp]
\centering
\caption{Goodness of fit: price--speed functional forms, 2014.}
\label{tab:bk_model_comparison}
\begin{minipage}{\columnwidth}
\centering
\fontsize{10}{12}\selectfont
\resizebox{\ifdim\width>\linewidth\linewidth\else\width\fi}{!}{%
\begin{tabular}{rlccccccc}
\Xhline{2pt}
 & & & \multicolumn{2}{c}{\p1} & \multicolumn{2}{c}{\p2} & \multicolumn{2}{c}{\p2c} \\
\cline{4-5}\cline{6-7}\cline{8-9}
Rank & Model & $k$ & Adj.~$R^2$ & RMSE & Adj.~$R^2$ & RMSE & Adj.~$R^2$ & RMSE \\ \hline
\multicolumn{9}{c}{\textbf{Panel A: HCP level} ($N$: \p1=643, \p2=419, \p2c=43)} \\ \hline
1 & $P = a + bS + cS^2$ & 3 & 0.273 & \$22,930 & 0.009 & \$5,499 & 0.424 & \$21,209 \\
2 & $P = a + b \cdot \ln(S)$ & 2 & 0.242 & \$23,434 & 0.008 & \$5,508 & 0.470 & \$20,592 \\
3 & $P = a + b \cdot \sqrt{S}$ & 2 & 0.201 & \$24,058 & 0.007 & \$5,510 & 0.466 & \$20,669 \\
4 & $\ln(P) = a + b\ln(S) + c[\ln(S)]^2$ & 3 & 0.168 & \$24,538 & -0.121 & \$5,848 & 0.391 & \$21,814 \\
5 & $\ln(P) = a + b \cdot \ln(S)$ & 2 & 0.160 & \$24,680 & -0.121 & \$5,856 & 0.424 & \$21,472 \\
6 & $P = F + c \cdot S$ & 2 & 0.138 & \$24,998 & 0.005 & \$5,516 & 0.430 & \$21,355 \\
7 & $\ln(P) = a + b \cdot \sqrt{S}$ & 2 & 0.098 & \$25,571 & -0.120 & \$5,852 & 0.362 & \$22,607 \\
8 & $\ln(P) = a + b \cdot S$ & 2 & 0.033 & \$26,470 & -0.123 & \$5,860 & 0.318 & \$23,368 \\
\hline
\multicolumn{9}{c}{\textbf{Panel B: Request level} ($N$: \p1=1,305, \p2=973, \p2c=56)} \\ \hline
1 & $P = a + bS + cS^2$ & 3 & 0.318 & \$19,516 & 0.019 & \$6,457 & 0.558 & \$20,644 \\
2 & $P = a + b \cdot \ln(S)$ & 2 & 0.271 & \$20,180 & -0.001 & \$6,524 & 0.576 & \$20,419 \\
3 & $P = a + b \cdot \sqrt{S}$ & 2 & 0.218 & \$20,903 & -0.000 & \$6,522 & 0.544 & \$21,170 \\
4 & $\ln(P) = a + b\ln(S) + c[\ln(S)]^2$ & 3 & 0.210 & \$20,999 & -0.135 & \$6,943 & 0.521 & \$21,496 \\
5 & $\ln(P) = a + b \cdot \ln(S)$ & 2 & 0.178 & \$21,425 & -0.147 & \$6,985 & 0.495 & \$22,290 \\
6 & $P = F + c \cdot S$ & 2 & 0.150 & \$21,784 & -0.001 & \$6,525 & 0.488 & \$22,426 \\
7 & $\ln(P) = a + b \cdot \sqrt{S}$ & 2 & 0.105 & \$22,362 & -0.143 & \$6,973 & 0.418 & \$23,912 \\
8 & $\ln(P) = a + b \cdot S$ & 2 & 0.043 & \$23,124 & -0.145 & \$6,978 & 0.389 & \$24,508 \\
\Xhline{2pt}
\end{tabular}}
\vspace{2pt}
\parbox{\linewidth}{\scriptsize\justifying\textit{Notes:} The table compares eight parametric specifications for the price--speed relationship, fitted separately for each program using 2014 data for the 970 baseline HCPs. Panel~A aggregates to one observation per HCP--program (mean price and mean speed across requests). Panel~B uses request-level observations. $P$ denotes the annual price (\$) and $S$ the download speed (Mbps). $k$ is the number of estimated parameters. Adjusted~$R^2$ and RMSE are computed on the original price scale for all models (log-space predictions are back-transformed via exponentiation). Because the back-transformed predictions from log-space models are not the OLS solution on the price scale, their price-scale $R^2$ can be negative; a negative value indicates that the back-transformed model predicts prices less accurately than the sample mean. Models are ranked by adjusted~$R^2$ for \p1{} in Panel~A.}
\end{minipage}
\end{table}

\begin{table}[h!]
\centering
\caption{Box-Cox test for the price--speed functional form, 2014.}
\label{tab:bk_boxcox}
\begin{minipage}{\columnwidth}
\centering
\small
\begin{tabular*}{\linewidth}{@{\extracolsep{\fill}} lccccccc}
\Xhline{2pt}
 & & & \multicolumn{2}{c}{$H_0\colon \lambda = 0$} & \multicolumn{2}{c}{$H_0\colon \lambda = 1$} \\
\cmidrule(lr){4-5} \cmidrule(lr){6-7}
Program & $N$ & $\hat{\lambda}$ & LR stat. & $p$-value & LR stat. & $p$-value \\ \hline
\multicolumn{7}{c}{\textbf{Panel A: HCP level}} \\ \hline
\P1 & 643 & -0.170 & 22.60 & $<$0.001 & 1217.22 & $<$0.001 \\
\P2 & 419 & 0.200 & 21.32 & $<$0.001 & 321.35 & $<$0.001 \\
\P2c & 43 & 0.190 & 1.59 & 0.208 & 29.08 & $<$0.001 \\
\hline
\multicolumn{7}{c}{\textbf{Panel B: Request level}} \\ \hline
\P1 & 1,305 & 0.070 & 12.43 & $<$0.001 & 2323.49 & $<$0.001 \\
\P2 & 973 & 0.140 & 28.02 & $<$0.001 & 1054.35 & $<$0.001 \\
\P2c & 56 & 0.240 & 4.22 & 0.040 & 39.52 & $<$0.001 \\
\Xhline{2pt}
\end{tabular*}
\vspace{2pt}
\parbox{\linewidth}{\scriptsize\justifying\textit{Notes:} This table reports the Box-Cox test for the transformation of the dependent variable in the price--speed hedonic regression $P^{(\lambda)} = a + b\,\ln(S) + \varepsilon$, where $P^{(\lambda)} = (P^{\lambda} - 1)/\lambda$ \citep{boxcox1964}. The transformation nests the log-log model ($\lambda = 0$, implying $\ln P = a + b\,\ln S$) and the lin-log model ($\lambda = 1$, implying $P = a + b\,\ln S$) as special cases. $\hat{\lambda}$ is the maximum likelihood estimate. The LR statistic is $2(\ell(\hat{\lambda}) - \ell(\lambda_0))$, distributed $\chi^2(1)$ under the null. Rejection of $\lambda = 0$ favors the lin-log over the log-log; failure to reject $\lambda = 1$ is consistent with the lin-log. Data are 2014 observations for the 970 baseline HCPs, matching Table~\ref{tab:bk_model_comparison}. Panel~A aggregates to one observation per HCP--program; Panel~B uses request-level observations.}
\end{minipage}
\end{table}

\begin{table}[htbp]
\centering
\caption{Combined regression results (levels), 2014.}
\label{tab:bj_main_reg_levels}
\fontsize{10}{12}\selectfont
\begin{tabular*}{\linewidth}{@{\extracolsep{\fill}} ll ccc}
\Xhline{2pt}
 & & \textbf{Price} & \textbf{Subsidy} & \textbf{HCP net cost} \\ \hline
$\tau_{01}$ & Coeff. & -28.855*** & -22.581*** & -6.274*** \\
 & SE & (3.795) & (3.433) & (0.829) \\
 & $p$-value & 0.000 & 0.000 & 0.000 \\
 & Elasticity & -0.933 & -0.956 & -0.861 \\
$\tau_{02}$ & Coeff. & 20.171*** & 7.741 & 12.431*** \\
 & SE & (6.088) & (5.508) & (1.329) \\
 & $p$-value & 0.001 & 0.160 & 0.000 \\
 & Elasticity & 0.652 & 0.328 & 1.706 \\
$\tau_{02} - \tau_{01}$ & Coeff. & 49.027*** & 30.322*** & 18.705*** \\
 & SE & (6.572) & (5.945) & (1.435) \\
 & $p$-value & 0.000 & 0.000 & 0.000 \\
 & Elasticity & 1.586 & 1.283 & 2.568 \\
$N$ & & 1,940 & 1,940 & 1,940 \\
$R^2$ & & 0.192 & 0.135 & 0.290 \\
\Xhline{2pt}
\end{tabular*}
\vspace{2pt}
\parbox{\linewidth}{\scriptsize\justifying\textit{Notes:} All columns report two-way fixed effects estimates with continuous treatment shares. The dependent variables are price, subsidy, and \ac{hcp} net cost. Coefficients and standard errors are reported in thousands of dollars. Elasticity is the implied semi-elasticity, computed as the coefficient divided by the sample mean of the dependent variable ($\hat{\beta}/\bar{y}$). $\tau_{01}$ is the estimated treatment effect of switching from \p1{} to \p2{}, and $\tau_{02}$ is the effect of switching from \p1{} to \p2c{}. The row $\tau_{02} - \tau_{01}$ reports the estimated difference. A log-linear specification of this table (with $\ln(\text{price})$, $\ln(\text{subsidy})$, and $\ln(\text{HCP net cost})$ as dependent variables) is reported in Table~\ref{tab:bj_main_reg_combined} in the main text. Significance: $^{***}\, p < 0.01$, $^{**}\, p < 0.05$, $^{*}\, p < 0.1$.}
\end{table}

\begin{table}[h!]
\centering
\caption{Semi-elasticity comparison: log-linear vs.\ linear models (Panel~A, 2014).}
\label{tab:bj_elasticity_comparison}
\small
\begin{tabular*}{\linewidth}{@{\extracolsep{\fill}} l cc cc cc}
\Xhline{2pt}
 & \multicolumn{2}{c}{\textbf{Price}} & \multicolumn{2}{c}{\textbf{Subsidy}} & \multicolumn{2}{c}{\textbf{\ac{hcp} net cost}} \\
\cmidrule(lr){2-3} \cmidrule(lr){4-5} \cmidrule(lr){6-7}
 & Log & Level$/\bar{y}$ & Log & Level$/\bar{y}$ & Log & Level$/\bar{y}$ \\ \hline
$\tau_{01}$ & -1.249 & -0.933 & -1.262 & -0.956 & -0.704 & -0.861 \\
$\tau_{02}$ & 0.556 & 0.652 & 0.284 & 0.328 & 1.740 & 1.706 \\
$\tau_{02} - \tau_{01}$ & 1.804 & 1.586 & 1.546 & 1.283 & 2.445 & 2.568 \\
\addlinespace[4pt]
$\bar{y}$ &  & 30,915 &  & 23,630 &  & 7,285 \\
\Xhline{2pt}
\end{tabular*}
\vspace{2pt}
\parbox{\linewidth}{\scriptsize\justifying\textit{Notes:} This table compares semi-elasticities from two functional forms, both estimated with TWFE and continuous treatment shares (Panel~A). The ``Log'' column reports the coefficient from the log-linear model (Table~\ref{tab:bj_main_reg_combined} in the appendix), which equals the semi-elasticity $\partial \ln y / \partial x$ directly. The ``Level$/\bar{y}$'' column divides the coefficient from the linear model (Table~\ref{tab:bj_main_reg_levels}) by the sample mean~$\bar{y}$, giving the implied semi-elasticity $\hat{\beta}/\bar{y}$ evaluated at the mean. $\bar{y}$ is the arithmetic mean of the dependent variable across both pre- and post-treatment periods for 2014.}
\end{table}

\clearpage
\section{Influence diagnostics}
\label{sec:apx_influence}

This appendix applies Cook's Distance to the baseline 2013--2014 \ac{twfe} specification to assess whether the estimates are driven by a small number of influential observations.
For a fixed-effects panel model, Cook's Distance is computed on the within-transformed (\ac{hcp}-demeaned) data.
Let $D_i$ denote the Cook's Distance for observation $i$.
Following the standard threshold, observations with $D_i > 4/N$ are flagged as influential.
Because influence depends on the outcome variable, we compute Cook's Distance separately for each outcome (ln price, ln subsidy, ln HCP net cost) and take the union of flagged observations across all three equations.
To maintain the balanced panel structure, if either panel row (pre- or post-treatment) of an \ac{hcp} is flagged, we drop the entire \ac{hcp}.

Table~\ref{tab:ap_influence} reports the baseline and trimmed estimates.
Of the 970 baseline HCPs, 143 (14.7\%) have at least one flagged observation.
The flagging rate varies sharply by switching behavior: only 21 of 511 stayers (4.1\%) are flagged, compared to 95 of 416 $\mathcal{P}_1 \to \mathcal{P}_2$ switchers (22.8\%), 26 of 40 $\mathcal{P}_1 \to \mathcal{P}_2^c$ switchers (65.0\%), and 1 of 3 mixed switchers (33.3\%).
This pattern is expected: switchers experience the largest within-\ac{hcp} changes in treatment status and thus in prices, making them high-leverage observations by construction.

\begin{table}[h!]
\centering
\caption{Influence diagnostics: baseline vs.\ trimmed estimates.}
\label{tab:ap_influence}
\begin{minipage}{\columnwidth}
\centering
\small
\begin{tabular*}{\linewidth}{@{\extracolsep{\fill}} lcccccc}
\Xhline{2pt}
 & \multicolumn{2}{c}{\textbf{ln(price)}} & \multicolumn{2}{c}{\textbf{ln(subsidy)}} & \multicolumn{2}{c}{\textbf{ln(HCP net cost)}} \\
\cmidrule(lr){2-3} \cmidrule(lr){4-5} \cmidrule(lr){6-7}
 & Baseline & Trimmed & Baseline & Trimmed & Baseline & Trimmed \\ \hline
$\tau_{01}$ & -1.249*** & -0.830*** & -1.262*** & -0.857*** & -0.704*** & -0.454*** \\
 & (0.085) & (0.050) & (0.102) & (0.069) & (0.074) & (0.067) \\
$\Delta\%$ &  & +33.5\% &  & +32.1\% &  & +35.5\% \\[4pt]
$\tau_{02}$ & 0.556*** & 0.503*** & 0.284* & 0.176 & 1.740*** & 1.723*** \\
 & (0.136) & (0.103) & (0.164) & (0.142) & (0.119) & (0.137) \\
$\Delta\%$ &  & -9.6\% &  & -38.1\% &  & -1.0\% \\
\hline
$R^2$ & 0.387 & 0.597 & 0.312 & 0.474 & 0.431 & 0.481 \\
$N$ & 1,940 & 1,654 & 1,940 & 1,654 & 1,940 & 1,654 \\
\Xhline{2pt}
\end{tabular*}
\vspace{2pt}
\parbox{\linewidth}{\scriptsize\justifying\textit{Notes:} This table reports influence diagnostics for the baseline 2013--2014 TWFE specification. Cook's Distance is computed on the within-transformed (HCP-demeaned) model for each outcome variable. Observations with Cook's $D > 4/N$ are flagged as influential. The ``Trimmed'' columns re-estimate the model after dropping all HCPs with at least one flagged panel-row in any outcome equation. $\Delta\%$ reports the percentage change in the trimmed coefficient relative to baseline. Standard errors in parentheses. *** $p<0.01$, ** $p<0.05$, * $p<0.1$.}
\end{minipage}
\end{table}

After dropping the 143 flagged HCPs, $\tau_{01}$ decreases in magnitude by roughly one-third across all outcomes, but remains significant.
The trimmed estimate of $-0.830$ for ln(price) implies that switching to $\mathcal{P}_2$ reduces prices by approximately 56\% (versus 71\% at baseline).
The $\tau_{02}$ coefficient is more stable.
For ln(\ac{hcp} net cost), the trimmed estimate (1.723) is within 1\% of the baseline (1.740).
This is striking given that 65\% of $\mathcal{P}_2^c$ HCPs were flagged.
The within $R^2$ rises after trimming, confirming that the flagged observations were outliers in the residual space rather than in the treatment-effect space.
In sum, the direction, significance, and economic interpretation of all treatment effects survive the removal of influential observations.

\clearpage
\section{Common-support restriction on speed}
\label{sec:apx_common_support}

The GAM price--speed curves and kernel density plots (Figure~\ref{fig:bd_density_4x4}) show that $\mathcal{P}_2^c$ participants operate in a narrower speed range than those in $\mathcal{P}_1$ or $\mathcal{P}_2$.
If the baseline treatment effects are driven by price variation at extreme speeds where $\mathcal{P}_2^c$ has no representation, the comparison across programs may not be on common support.
This appendix addresses this concern by restricting all programs to the speed domain of $\mathcal{P}_2^c$.
We identify the range of pre-treatment (2013) download speeds among HCPs that participate in $\mathcal{P}_2^c$: [1.5, 139.6] Mbps.
We then drop all HCPs whose 2013 speed falls outside this range, regardless of their program assignment.
This removes 21 of 970 baseline HCPs (2.2\%), all from $\mathcal{P}_1$ and $\mathcal{P}_2$, leaving a sample of 949 HCPs observed on common support. 
Table~\ref{tab:aq_common_support} compares the baseline and restricted estimates.
The coefficients are virtually unchanged and retain their statistical significance.

\begin{table}[h!]
\centering
\caption{Common-support restriction.}
\label{tab:aq_common_support}
\begin{minipage}{\columnwidth}
\centering
\small
\begin{tabular*}{\linewidth}{@{\extracolsep{\fill}} lcccccc}
\Xhline{2pt}
 & \multicolumn{2}{c}{\textbf{ln(price)}} & \multicolumn{2}{c}{\textbf{ln(subsidy)}} & \multicolumn{2}{c}{\textbf{ln(HCP net cost)}} \\
\cmidrule(lr){2-3} \cmidrule(lr){4-5} \cmidrule(lr){6-7}
 & Baseline & Restricted & Baseline & Restricted & Baseline & Restricted \\ \hline
$\tau_{01}$ & -1.249*** & -1.224*** & -1.262*** & -1.250*** & -0.704*** & -0.673*** \\
 & (0.085) & (0.089) & (0.102) & (0.107) & (0.074) & (0.076) \\
$\Delta\%$ &  & +2.0\% &  & +0.9\% &  & +4.5\% \\[4pt]
$\tau_{02}$ & 0.556*** & 0.556*** & 0.284* & 0.288* & 1.740*** & 1.727*** \\
 & (0.136) & (0.137) & (0.164) & (0.166) & (0.119) & (0.117) \\
$\Delta\%$ &  & +0.1\% &  & +1.6\% &  & -0.8\% \\
\hline
$R^2$ & 0.387 & 0.366 & 0.312 & 0.298 & 0.431 & 0.415 \\
$N$ & 1,940 & 1,898 & 1,940 & 1,898 & 1,940 & 1,898 \\
\Xhline{2pt}
\end{tabular*}
\vspace{2pt}
\parbox{\linewidth}{\scriptsize\justifying\textit{Notes:} The ``Restricted'' columns limit the sample to HCPs whose pre-treatment (2013) speed falls within the $\mathcal{P}_2^c$ range: [1.5, 139.6] Mbps. $\Delta\%$ reports the percentage change relative to baseline. Standard errors in parentheses. *** $p<0.01$, ** $p<0.05$, * $p<0.1$.}
\end{minipage}
\end{table}

\clearpage
\section{Coefficient stability: Oster bounds}
\label{sec:apx_oster}

This appendix applies the coefficient stability framework of \citet{oster2019unobservable} to assess whether unobserved time-varying confounders could explain away the estimated treatment effects.
Consider two versions of the baseline \ac{twfe} model.
The \textit{short} model includes only the treatment structure and HCP fixed effects but excludes time-varying controls.
The \textit{long} model adds log speed and log number of requests.
Let $\tilde{\beta}$ and $\hat{\beta}$ denote the treatment coefficient from the short and long models, respectively, and let $\tilde{R}^2$ and $R^2$ denote the corresponding within $R^2$ values.
Under proportional selection, the degree to which selection on unobservables would need to exceed selection on observables to drive the treatment effect to zero is:
\[
\delta = \frac{\hat{\beta} \cdot (R^2 - \tilde{R}^2)}{(\tilde{\beta} - \hat{\beta}) \cdot (R^2_{\max} - R^2)}
\]
where $R^2_{\max} = \min(1.3 \times R^2,\, 1)$ following \citet{oster2019unobservable}. The bias-adjusted coefficient assuming equal proportional selection ($\delta = 1$) is:
\[
\beta^*(\delta = 1) = \hat{\beta} - (\tilde{\beta} - \hat{\beta}) \cdot \frac{R^2_{\max} - R^2}{R^2 - \tilde{R}^2}
\]

Table~\ref{tab:ao_oster_bounds} reports the results ($\tau_{12}$ is omitted because $\mathcal{P}_2$ was introduced in 2014, so no \ac{hcp} could have been on $\mathcal{P}_2$ in 2013).
$\delta$ is negative for every coefficient and outcome. For $\tau_{01}$, the uncontrolled $\tilde{\beta}$ is near zero or positive, but the controlled $\hat{\beta}$ is large and negative, indicating that adding controls \textit{reveals} the treatment effect rather than inflating it.
For unobservables to drive $\hat{\beta}$ to zero, they would have to work in the opposite direction of the observables, which is implausible under standard assumptions \citep{altonji2005selection}.
$\tau_{02}$ is highly stable: $\delta$ ranges from $-54$ to $-435$ across outcomes, and $\beta^*(\delta = 1)$ barely differs from $\hat{\beta}$.
These results complement the logit analysis in Table~\ref{tab:zz4_logit_switching}, which shows that observables contribute virtually no explanatory power to the switching decision beyond the mechanical cost ratio (pseudo $R^2$: 0.031 to 0.037).

\begin{table}[h!]
\centering
\caption{Coefficient stability: Oster (2019) bounds.}
\label{tab:ao_oster_bounds}
\small
\begin{tabular*}{\linewidth}{@{\extracolsep{\fill}} lcccccc}
\Xhline{2pt}
 & \multicolumn{2}{c}{\textbf{Panel A: ln(price)}} & \multicolumn{2}{c}{\textbf{Panel B: ln(subsidy)}} & \multicolumn{2}{c}{\textbf{Panel C: ln(HCP net cost)}} \\
\cmidrule(lr){2-3} \cmidrule(lr){4-5} \cmidrule(lr){6-7}
 & $\tau_{01}$ & $\tau_{02}$ & $\tau_{01}$ & $\tau_{02}$ & $\tau_{01}$ & $\tau_{02}$ \\ \hline
$\tilde{\beta}$ (uncontrolled) & -0.261 & 0.543 & -0.226 & 0.267 & 0.132 & 1.731 \\
$\hat{\beta}$ (controlled) & -1.249*** & 0.556*** & -1.262*** & 0.284* & -0.704*** & 1.740*** \\
 & (0.085) & (0.136) & (0.102) & (0.164) & (0.074) & (0.119) \\
$\tilde{R}^2$ & 0.043 & 0.043 & 0.015 & 0.015 & 0.135 & 0.135 \\
$R^2$ & 0.387 & 0.387 & 0.312 & 0.312 & 0.431 & 0.431 \\
$R^2_{\max}$ & 0.502 & 0.502 & 0.405 & 0.405 & 0.561 & 0.561 \\
$\delta$ & -3.75 & -126.81 & -3.86 & -53.67 & -1.93 & -434.58 \\
$\beta^*(\delta\!=\!1)$ & -1.582 & 0.560 & -1.589 & 0.289 & -1.070 & 1.744 \\
$N$ & 1,940 & 1,940 & 1,940 & 1,940 & 1,940 & 1,940 \\
\Xhline{2pt}
\end{tabular*}
\vspace{2pt}
\parbox{\linewidth}{\scriptsize\justifying\textit{Notes:} Baseline 2013--2014 TWFE specification. $\tilde{\beta}$: coefficient from the short model (no time-varying controls); $\hat{\beta}$: coefficient from the long model (adds log speed and log requests). $R^2_{\max} = \min(1.3 \times R^2, 1)$. $\delta$: proportional selection ratio needed to drive the effect to zero. $\beta^*(\delta\!=\!1)$: bias-adjusted coefficient assuming equal selection. $\tau_{12}$ is omitted because $\mathcal{P}_2$ did not exist in 2013. Standard errors of $\hat{\beta}$ in parentheses. *** $p<0.01$, ** $p<0.05$, * $p<0.1$.}
\end{table}

\end{document}